\pgfplotsset{compat=newest}
\newcommand{\beq}{\begin{equation}}
\newcommand{\eeq}{\end{equation}}
\newcommand{\bqa}{\begin{eqnarray}}
\newcommand{\eqa}{\end{eqnarray}}
\definecolor{maroon}{rgb}{0.7,0,0}
\definecolor{ngreen}{rgb}{0.3,0.7,0.3}
\definecolor{golden}{rgb}{0.8,0.6,0.1}
\newtheorem{theorem}{\indent Theorem}
\newtheorem{lemma}{\indent Lemma}
\newtheorem{definition}{\indent Definition}
\newtheorem{assumption}{\indent Assumption}
\newtheorem{myremark}{\indent Remark}
\newenvironment{remark}{\begin{myremark}\normalfont}
{\end{myremark}}
\begin{document}
\title{Ensuring Truthfulness in Distributed Aggregative Optimization}

\author{Ziqin Chen, Magnus Egerstedt, \textit{Fellow, IEEE}, and Yongqiang Wang, \textit{Senior Member, IEEE}
\thanks{This work was supported by the National Science Foundation under Grant ECCS-1912702, Grant CCF-2106293, Grant CCF-2215088, Grant CNS-2219487, Grant CCF-2334449, and Grant CNS-2422312. (Corresponding author: Yongqiang Wang, email:yongqiw@clemson.edu).}
\thanks{Ziqin Chen and Yongqiang Wang are with the Department of Electrical and Computer Engineering, Clemson University, Clemson, SC 29634 USA and Magnus Egerstedt is with the Department of Electrical Engineering and Computer Science, University of California, Irvine, Irvine, CA 92697 USA.}}

\maketitle
\begin{abstract}
Distributed aggregative optimization methods are gaining increased traction due to their ability to address cooperative control and optimization problems, where the objective function of each agent depends not only on its own decision variable but also on the aggregation of other agents' decision variables. Nevertheless, existing distributed aggregative optimization methods implicitly assume all agents to be truthful in information sharing, which can be unrealistic in real-world scenarios, where agents may act selfishly or strategically. In fact, an opportunistic agent may deceptively share false information in its own favor to minimize its own loss, which, however, will compromise the network-level global performance. To solve this issue, we propose a new distributed aggregative optimization algorithm that can ensure truthfulness of agents and convergence performance. To the best of our knowledge, this is the first algorithm that ensures truthfulness in a fully distributed setting, where no ``centralized" aggregator exists to collect private information/decision variables from participating agents. We systematically characterize the convergence rate of our algorithm under nonconvex/convex/strongly convex objective functions, which generalizes existing distributed aggregative optimization results that only focus on convex objective functions. We also rigorously quantify the tradeoff between convergence performance and the level of enabled truthfulness under different convexity conditions. Numerical simulations using distributed charging of electric vehicles confirm the efficacy of our algorithm.
\end{abstract}
\begin{IEEEkeywords}
Distributed aggregative optimization, joint differential privacy, truthfulness.
\end{IEEEkeywords}

\IEEEpeerreviewmaketitle
\section{Introduction}
Recently, there has been a surge of interest in distributed optimization which underpins numerous applications in cooperative control~\cite{cooperativecontrol,cooperativecontrol1}, signal processing~\cite{signalprocess}, and machine learning~\cite{machinelearning}. In distributed optimization, a group of agents cooperatively learns a common decision variable that minimizes a global objective function that is the sum of individual agents' objective functions. Usually, the local objective function of one agent is assumed to be only dependent on its own local decision variable $x^{i}$, i.e., it has the form of $f_{i}(x^{i})$, and hence, the global objective function has the form of $\sum_{i=1}^{m}f_{i}(x^{i})$. However, in many emerging applications such as placement optimization of warehouses~\cite{lixiuxian2021}, multi-vehicle charging~\cite{charging2011}, and cooperative robot surveillance~\cite{carnevale2022online}, the local objective function of each agent depends not only on its own decision variable but also on the aggregation of all agents' decisions. These types of problems can be mathematically formulated as the following aggregative optimization problem:
\begin{equation}
\vspace{-0.3em}
\text{min}_{x\in\mathcal{X}}F(x)\!=\!\sum_{i=1}^{m}f_{i}(x^{i},\phi(x)),~~\phi(x)\!=\!\frac{1}{m}\sum_{i=1}^{m}g_{i}(x^{i}),~\label{primal}
\vspace{-0.2em}
\end{equation}
where $x=\text{col}(x^{1},\cdots,x^{m})$, $f_{i}(x^{i},\phi(x))$ is the local objective function of agent $i$, which is determined by both its own local decision variable $x^{i}$ and an aggregative term $\phi(x)$. Here, $\mathcal{X}_{i}\subseteq\mathbb{R}^{n_{i}}$ represents agent $i$'s local constraint set and $\mathcal{X}=\textstyle \prod_{i=1}^{m}\mathcal{X}_{i}\subseteq \mathbb{R}^{n}$ with $n=\sum_{i=1}^{m}n_{i}$. We consider the case where $f_{i}$, $g_{i}$, and $\mathcal{X}_{i}$ are private to agent $i$, and hence, $\phi(x)$ is not directly accessible to any individual agents.

To solve problem~\eqref{primal}, several gradient-tracking-based algorithms have been proposed for strongly convex objective functions~\cite{lixiuxian2021,charging2011,carnevale2022online,aggregative2,aggregative3,aggregative4,aggregative5} and convex objective functions~\cite{aggregative6,lixiuxianonline,aggregative7,aggregative8}. Recently, some results have also been reported for nonconvex objective functions~\cite{aggregative9,carnevale2024nonconvex}. However, the approach in~\cite{aggregative9} is only applicable to the special case where $f_{i}$ is solely dependent on the aggregative term. In addition, the approach in~\cite{carnevale2024nonconvex} considers the case of continuous-time optimization, and its Lasalle-invariance-principle-based derivation can only prove asymptotic convergence which can be arbitrarily slow. To the best of our knowledge, no convergence results with explicit convergence rates have been reported for distributed nonconvex aggregative optimization.

Another potential limitation of existing distributed aggregative optimization algorithms in~\cite{lixiuxian2021,charging2011,carnevale2022online,aggregative2,aggregative3,aggregative4,aggregative5,aggregative6,lixiuxianonline,aggregative7,aggregative8,aggregative9,carnevale2024nonconvex} is that they implicitly assume all agents to be truthful in information sharing. However, opportunistic agents may intentionally share deceptive information with others for their own benefits. It is worth
noting that we consider untruthful behaviors of agents (which
aim to gain benefits) to be different from malicious behaviors
(which aim to destroy the system): agents are opportunistic,
and hence, they may share untruthful/deceitful information
with others to skew the final computation result, but they
are not malicious, i.e., they will not send information that
prevents the agents from running distributed algorithms. This
is a practical assumption. For example, in crowd-sensing-based traffic navigation, e.g., Google Maps and Waze~\cite{Waze}, users may provide deceitful traffic data to influence the app's routing decisions. Their intent is to divert vehicles to or away from certain areas rather than to disrupt the app's overall functionality. Similarly, in electric-vehicle (EV) charging, untruthful EV users may report false charging specifications to change the charging schedules of other EVs so as to gain personal benefits such as preferred charging time or reduced cost~\cite{trEVcharging,truthfulEV} (see Section~\ref{SectionIIB} and~\cite{truthfulEV} for details). But they will not prevent other EVs from participating in charging scheduling.

In the special case where monetary payments are allowed to modify agents' local objective functions, Vickrey-Clarke-Groves (VCG) mechanisms can be used to guarantee truthfulness~\cite{VCG95,VCG96,VCG97,researchagenda}. However, existing VCG-based results, e.g., \cite{VCG1,VCG2,VCG3,VCG4,VCG5,VCG6}, do not address the scenario where the objective functions of different agents are coupled, e.g., through an aggregative term considered in problem~\eqref{primal}, which is the case in practice when the price of electricity is affected by demand. Moreover, VCG-based approaches are often not budget-balanced and involve surplus payments (i.e., the sum of all agents' monetary payments does not equal zero), which inevitably imposes additional economic overhead on individual agents~\cite{VCG4}. Furthermore, implementing such monetary payments requires a ``centralized" aggregator to collect truthful gradient/function information from all agents~\cite{VCG1,VCG2,VCG3,VCG4,VCG5,VCG6}, which may not be acceptable or available in distributed multi-agent networks. 

Another approach to achieving truthfulness is joint differential privacy (JDP). The basic idea is to make the untruthful information shared by an agent have a negligible influence on the decisions of other agents, thereby limiting the benefit that an untruthful agent
can gain~\cite{DPtruthful2,DPtruthful3,DPtruthful1}. JDP has been used to promote truthfulness of agents in equilibrium computation in games~\cite{DPtruthful2,DPtruthful3,DPtruthful1,JDPgame1,JDPgame2,JDPgame3,JDPgame4,JDPgame5}, distributed EV charging with the assistance of a server~\cite{truthfulEV}, distributed cloud computation/optimization~\cite{JDPoptimization1}, and linearly separable convex programming~\cite{JDPoptimization2}. However, these results all rely on a ``centralized" aggregator to collect iteration/decision variables from all agents. To our knowledge, no results have been reported which can use JDP to enhance truthfulness in a fully distributed setting. Moreover, the results in~\cite{truthfulEV,JDPoptimization1,JDPoptimization2} can only ensure truthfulness for a finite number of computation iterations (i.e., the level of ensured truthfulness will diminish to zero as the number of iterations tends to infinity). 

In this paper, we aim to achieve JDP-based truthfulness in a fully distributed setting without requiring the assistance of any aggregators. To this end, we first enhance an existing distributed aggregative optimization algorithm in~\cite{lixiuxian2021} to ensure accurate convergence despite the presence of Laplace noises (Theorem~\ref{T1}). Then, by co-designing the stepsize and Laplace-noise injection mechanism, we achieve rigorous truthfulness guarantee in the sense that the maximal loss gain that an agent can obtain from untruthful behaviors is bounded by a constant $\eta$ (called $\eta$-truthfulness) (Theorem~\ref{T2}). Furthermore, we rigorously quantify the price and tradeoff in convergence performance for achieving $\eta$-truthfulness (Theorem~\ref{T3}). Note that our truthfulness approach can address two types of untruthful agent behaviors: 1) providing falsified input information at the beginning of the algorithm, and 2) sharing deceitful intermediate updates during algorithm execution. Nevertheless, here we mainly focus on input manipulation, as it is a more direct and practically effective way for an untruthful agent to increase its personal gains. By comparison, manipulating intermediate updates does not provide a clear strategic advantage, since only the final computational outcome (not the intermediate update values) is executed in final implementation and affects agents' personal gains. Moreover, the effect of changing intermediate updates on the final computational outcome is often hard to evaluate in run time, which makes it hard for agents to exploit such manipulation effectively to make personal gains.
The main contributions are summarized as follows:
\begin{itemize}
\item We provide rigorous convergence rate analysis for distributed aggregative optimization under nonconvex/convex/strongly convex objective functions. This is different from existing results in~\cite{lixiuxian2021,charging2011,carnevale2022online,aggregative2,aggregative3,aggregative4,aggregative5,aggregative6,lixiuxianonline,aggregative7,aggregative8}, which only consider convex objective functions. This is also different from the result in~\cite{carnevale2024nonconvex}, which 
only proves asymptotic convergence without giving explicit convergence rates (in addition, this result considers continuous-time optimization, which is simpler in convergence analysis). To the best of our knowledge, we are the first to obtain convergence with explicit convergence rates in distributed nonconvex aggregative optimization even in the presence of Laplace noises.

\item We prove that our algorithm can ensure $\eta$-truthfulness. Compared with existing truthfulness solutions for games and cooperative optimization~\cite{VCG1,VCG2,VCG3,VCG4,VCG5,VCG6,truthfulEV,JDPoptimization1,JDPoptimization2}, all of which rely on a ``centralized" aggregator, our approach is the first to ensure truthfulness in a fully distributed setting without the assistance of any ``centralized" aggregators.

\item Our guaranteed truthfulness level, i.e., $\eta$, is ensured to be finite even when the number of iterations tends to infinity. This is different from existing results in~\cite{truthfulEV,JDPoptimization1,JDPoptimization2}, whose truthfulness level (or strength) diminishes to zero as the number of iterations tends to infinity. One side contribution is achieving JDP in a fully distributed setting. It is worth noting that the conventional differential privacy cannot be used to ensure truthfulness because it makes all inputs indistinguishable, and hence, eliminates the incentive for agents to behave truthfully.

\item In addition to ensuring $\eta$-truthfulness in distributed aggregative optimization, we rigorously quantify the tradeoff between convergence performance and the level of enabled truthfulness under nonconvex/convex/strongly convex objective functions.

\item Using a distributed EV charging problem, we test our algorithm in a real-world application scenario with actual EV-charging specifications and real load data. The results confirm the effectiveness of our algorithm in practical applications.
\end{itemize}

The organization of the paper is as follows. Sec. \ref{SectionII} introduces the problem formulation and
the definitions of JDP and truthfulness. Sec. \ref{SectionIII}
presents our truthfully distributed aggregative optimization algorithm. Sec.~\ref{SectionIV} analyzes the optimization accuracy and convergence rate. Sec.~\ref{SectionV} establishes the truthfulness guarantee. Sec. \ref{SectionVI} provides
numerical simulation results. Sec. \ref{SectionVII} concludes the paper.

\textit{Notations:} We use $\mathbb{R}^{n}$ ($\mathbb{R}_{+}^{n}$) to denote the $n$-dimensional real (non-negative) Euclidean space and $\mathbb{N}$ ($\mathbb{N}^{+}$) to denote the set of non-negative (positive) integers. We
write $\boldsymbol{1}_{n}$ and $I_{n}$ for the $n$-dimensional column vector of all ones and the identity matrix, respectively. We use $\langle\cdot,\cdot\rangle$ to denote the inner product of
two vectors and $\|\cdot\|$ to denote the Euclidean norm of a vector.
We add an overbar to a letter to denote the average of $m$ agents, e.g., $\bar{x}\!=\!\frac{1}{m}\sum_{i=1}^{m}x^{i}$.
We use a letter without a superscript to represent the stacked vector of $m$ agents. We write $\mathbb{P}[\mathcal{A}]$ for the probability of an event $\mathcal{A}$ and $\mathbb{E}[x]$ for the expected value of a random variable $x$. We use $\mathcal{B}_{r}$ to denote a closed ball centered at the origin with diameter $r>0$. $\boldsymbol{P}_{\mathcal{X}}(x)$ denotes the Euclidean projection of a vector $x$ on the set $\mathcal{X}$, i.e., $\boldsymbol{P}_{\mathcal{X}}(x)=\text{argmin}_{y\in\mathcal{X}}\|y-x\|^2$. For a differentiable function $f(x,\psi)$, we let 
$\nabla_{1}f(x,\psi)$ and $\nabla_{2}f(x,\psi)$ denote the partial derivatives $\nabla_{x}f(x,\psi)$ and $\nabla_{\psi}f(x,\psi)$, respectively.
We use $\text{Lap}(\nu)$ to denote the Laplace distribution with
a parameter $\nu\!>\!0$, featuring a probability density function $\frac{1}{2\nu}e^{-\frac{|x|}{\nu}}$. $\text{Lap}(\nu)$ has a mean of zero and a variance of $2\nu^2$.

\section{Preliminaries and Problem Statement}\label{SectionII}
\subsection{Distributed aggregative optimization}\label{SectionIIA}
We consider $m$ agents that cooperatively learn a common optimal decision $x^*$ to the distributed aggregative optimization problem in~\eqref{primal}. The objective function $f_{i}(x^{i},\phi(x)):\mathcal{X}\mapsto\mathbb{R}$ and the function $g_{i}(x^{i}):\mathcal{X}_{i}\mapsto\mathbb{R}^{d}$ satisfy the following standard assumptions:
\begin{assumption}\label{A1}
The constraint set $\mathcal{X}_{i}$ is nonempty, compact, and convex. In addition, 

(i) $f_{i}(x,\psi)$ is $L_{f,1}$- and $L_{f,2}$-Lipschitz continuous with respect to $x$ and $\psi$, respectively. $\nabla f_{i}(x,\psi)$ is $\bar{L}_{f,1}$- and $\bar{L}_{f,2}$-Lipschitz continuous with respect to $x$ and $\psi$, respectively; 

(ii) $g_{i}(x)$ is $L_{g}$-Lipschitz continuous. $\nabla g_{i}(x)$ is $\bar{L}_{g}$-Lipschitz continuous. 
\end{assumption}
\begin{remark}
\vspace{-0.2em}
Assumption~\ref{A1} is commonly used in distributed aggregative optimization~\cite{aggregative6,lixiuxianonline,aggregative9} and nonconvex optimization with compositional structures~\cite{bilevel1,bilevel2,bilevel3}.
For example, it can be verified to hold in the problem of optimal placement of warehouses~\cite{lixiuxian2021} and the problem of coordinated EV charging considered in~\cite{truthfulEV} (see Section~\ref{SectionIIB} for details). It is worth noting that we allow both $f_{i}$ and $g_{i}$ to be nonconvex, which is more general than the strongly convex or convex condition in existing distributed aggregative optimization results~\cite{lixiuxian2021,charging2011,carnevale2022online,aggregative2,aggregative3,aggregative4,aggregative5,aggregative6,lixiuxianonline,aggregative7,aggregative8}.
\vspace{-0.2em}
\end{remark}

We describe the communication pattern among agents using an $m\times m$ matrix $W$. If agents $i$ and $j$ can communicate with each other, then $w_{ij}$ is positive, and $w_{ij}=0$ otherwise. The set of agents that can directly interact with agent $i$ is called the neighboring set of agent $i$ and is represented as $\mathcal{N}_{i}$. We let $w_{ii}=-\sum_{j\in{\mathcal{N}_{i}}}w_{ij}.$ The matrix $W$ satisfies the following assumption:
\begin{assumption}\label{A2}
\vspace{-0.2em}
The communication among all agents is modeled by an undirected and connected graph. The matrix $W$\footnote{Our matrix $I+\epsilon W$ corresponds to the Perron matrix $P_{\epsilon}=I-\epsilon L$ used in~\cite{consensusL}, where $L$ is the Laplacian matrix.} satisfies $\boldsymbol{1}_{m}^{\top}W=\boldsymbol{0}_{m}^{\top}$ and $W\boldsymbol{1}_{m}=\boldsymbol{0}_{m}$ and its eigenvalues satisfy (after arranged in an increasing order) $-1<\delta_{m}\leq\cdots\leq\delta_{2}< \delta_{1}=0$.
\vspace{-0.2em}
\end{assumption}

In problem~\eqref{algorithm1}, we assume that opportunistic agents may share deceitful messages strategically so as to reduce their own losses. This, however, will increase the network-level global loss.
\begin{figure}
\centering
\includegraphics[width=0.45\textwidth]{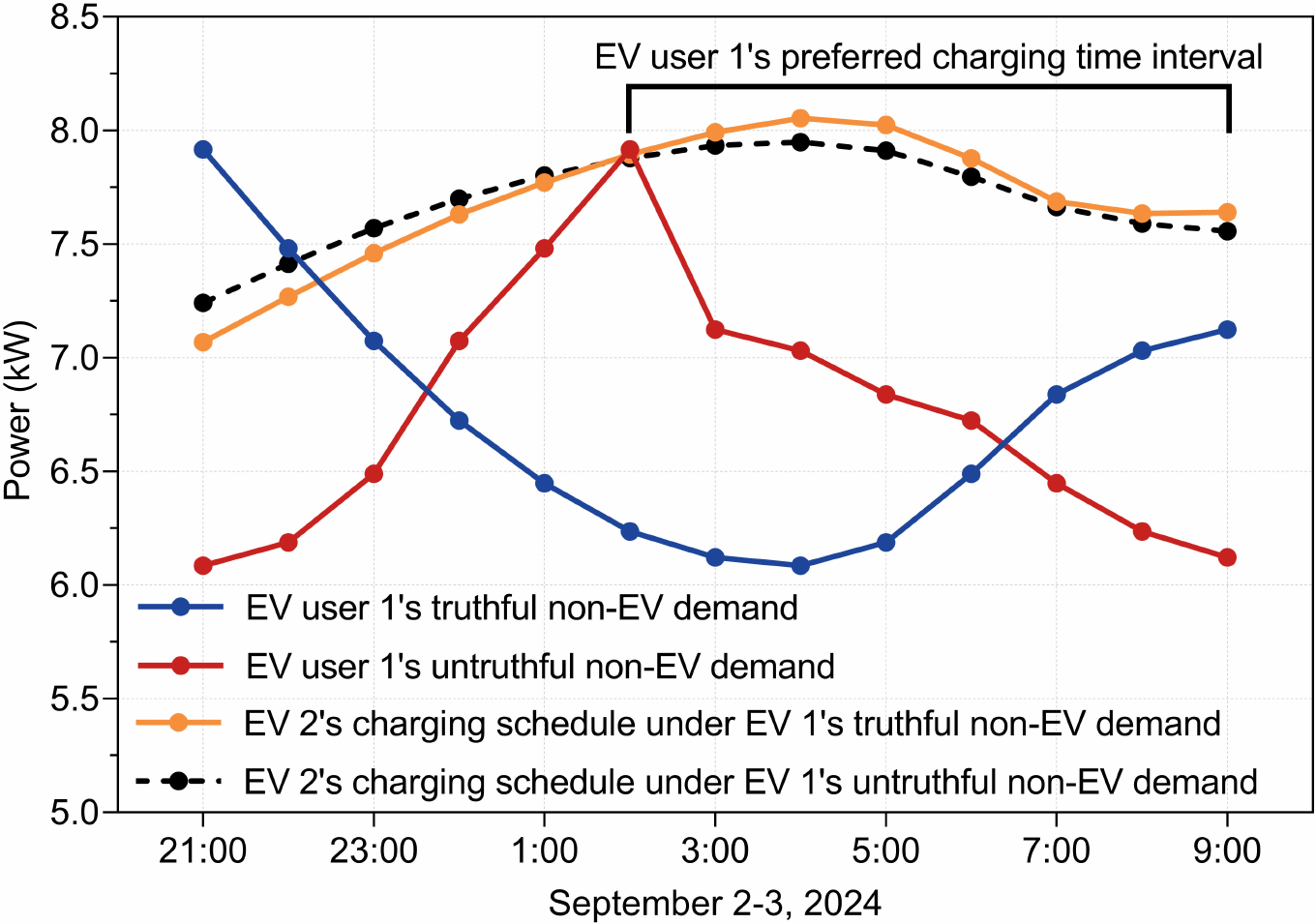}
\caption{Charging schedule curves for EV 2 under EV user 1's truthful and untruthful non-EV demand profiles, respectively. Each EV user ran the conventional noise-free distributed aggregative optimization algorithm in~\cite{lixiuxian2021} over $1,000$ iterations. The solid orange curve was computed under EV parameters with $x_{\max}^{1}=x_{\max}^{2}$, $E^{1}=E^{2}$, and $d^{1}=d^{2}$, while the dashed black curve was computed under EV user $1$'s untruthful parameter ${d'}^{1}\neq d^{1}$. The price function is given by $p(r)=\text{col}(0.15[r]_{1}^{1.5},\cdots,0.15[r]_{13}^{1.5})$, where $[r]_{k}$ denotes the $k$th element of $r$ with $k=1,\cdots,13$. It is clear that the algorithm schedules more charging power for EV 2 before midnight and less charging power after midnight when EV user 1 provides an untruthful non-EV demand profile ${d'}^{1}$.} 
\label{challenge} 
\vspace{-1.2em}
\end{figure}
\subsection{Truthfulness issue in distributed EV charging applications}\label{SectionIIB}
In this subsection, we exemplify the truthfulness issue in distributed aggregative optimization using a distributed EV charging problem~\cite{charging2011,truthfulEV} involving $m$ EVs.

We denote $x^{i}\in\mathbb{R}_{+}^{K}$ as EV $i$'s charging profile over a time interval $\mathcal{K}=\{1,\cdots,K\}$. Hence, $x\triangleq\text{col}(x^{1},\cdots,x^{m})$ denotes a charging schedule for all EVs. By the end of the time interval, each EV $i$ is required to charge a total amount of energy $E^{i}\in\mathbb{R}_{+}$. Each EV $i$ can specify its maximal charging rate as a vector $x_{\max}^{i}\in\mathbb{R}_{+}^{K}$ so that each element of $x^{i}$ does not exceed the corresponding element of $x_{\max}^{i}$. Both $E^{i}$ and $x_{\max}^{i}$ constitute the charging specification of EV $i$, which can be expressed as the following set constraint: 
\begin{equation}
\mathcal{X}_{i}=\{x^{i}\in\mathbb{R}_{+}^{K}~|~ \boldsymbol{0}_{K}\leq x^{i}\leq x_{\max}^{i}~\text{and}~\boldsymbol{1}_{K}^{\top}x^{i}= E^{i}\}.\nonumber
\end{equation}

Following~\cite{charging2011}, the global cost function is given by
\begin{equation}
\textstyle F(x)=p\Big(\frac{1}{C_{\text{tot}}}\left(\sum_{i=1 }^{m}x^{i}+d\right)\Big)^{\top}\sum_{i=1}^{m}({x}^{i}+d^{i}),\label{Evcost}
\end{equation}
where $p(r):\mathbb{R}_{+}^{K}\mapsto\mathbb{R}_{+}^{K}$ is an electricity price function over $\mathcal{K}$, with each element $[p(r)]_{k}$ being an increasing function of $r$ for all $k=1,\cdots,K$. Here, $d=\sum_{i=1}^{m}d^{i}\in\mathbb{R}_{+}^{K}$ denotes the non-EV demand of all EV users, where $d^{i}$ represents EV user $i$'s basic electricity demand for its daily non-charging purposes. $C_{\text{tot}}\in\mathbb{R}_{+}$ represents the total generation capacity. 

From~\eqref{Evcost}, we have that minimizing the global cost can be formulated as an aggregative optimization problem:
\begin{equation}
\begin{aligned}
\text{min}_{x\in\mathcal{X}}~F(x)&\textstyle=\sum_{i=1}^{m}p(\phi(x))^{\top}(x^{i}+d^{i}),\\
\phi(x)&\textstyle=\frac{1}{m}\sum_{i=1}^{m}\frac{m}{C_{\text{tot}}}\left(x^{i}+d^{i}\right).~\label{primalEV}
\end{aligned}
\end{equation}

To solve problem~\eqref{primalEV}, all EV users typically execute a distributed aggregative optimization algorithm to collaboratively determine an optimal charging schedule $x^*$ that can minimize the global cost $F(x)$. However, in this process, untruthful EV users may provide untruthful information to the algorithm for personal benefits such as preferred charging time or reduced cost~\cite{trEVcharging,truthfulEV}, causing the computed charging schedule to deviate from the true $x^*$, and hence, increase the global cost. For the sake of simplicity, we exemplify the idea using a simple network composed of one untruthful EV 1 and one truthful EV 2.
As shown in Fig.~\ref{challenge}, we assume that EV user 1 prefers to charge after midnight. To reduce its own costs, EV user 1 may provide a falsified non-EV demand (red curve in Fig.~\ref{challenge}), claiming a lower than actual non-EV demand before midnight and a higher than actual non-EV demand after midnight. This untruthful information manipulates the algorithm into scheduling more charging power for EV $2$ before midnight and less charging power after midnight (dashed black curve in Fig.~\ref{challenge}). Because EV user 2 follows the charging schedule suggested by the algorithm, the total electricity consumption after midnight decreases, resulting in lower electricity prices at the time when EV user 1 really wants to charge. As a result, EV user 1 reduces its own costs at the expense of an increased global cost. In addition to distributed EV charging applications, similar truthfulness issues also emerge in traffic navigation~\cite{Waze}, vehicle platooning~\cite{VCG6}, and many other scenarios.

Motivated by these observations, we aim to restrict the loss reduction that an agent can gain through untruthful behaviors in distributed aggregative optimization.
\vspace{-0.5em}

\subsection{Truthfulness and joint differential privacy}\label{SectionIIC}
In this subsection, we introduce the concept of truthfulness in a fully distributed scenario, where no centralized aggregator exists to aggregate private information/decision variables and execute a truthfulness mechanism. It is in distinct contrast to existing JDP-based truthfulness framework in~\cite{truthfulEV,JDPoptimization1,JDPoptimization2} and VCG-based truthfulness framework in~\cite{VCG1,VCG2,VCG3,VCG4,VCG5,VCG6}, all of which rely on a ``centralized" aggregator to collect private information/decision variables from participating agents and then execute truthfulness mechanisms. In fact, these approaches require that the aggregator has the ability to verify information reported by participating agents, which may not be practical in distributed multi-agent networks. In fact, in distributed aggregative optimization, each agent $i$ possesses $\mathcal{P}_{i}\triangleq \text{col}(f_{i},g_{i},\mathcal{X}_{i})$ and shares iteration variables with its neighbors to collaboratively minimize the global objective function. This information exchange, combined with the fact that $\mathcal{P}_{i}$ is only known to agent $i$, provides opportunities for every agent to deceptively share false information in its own favor to minimize its own loss. We consider the worst scenario where every agent may act untruthfully for its personal benefit.

To facilitate truthfulness analysis, we denote the distributed aggregative optimization problem in~\eqref{primal} by a tuple $(\mathcal{X},\mathcal{J},\mathcal{G})$, with $\mathcal{X}\!=\!\{\mathcal{X}_{i}\}_{i=1}^{m}$, $\mathcal{J}\!=\!\{f_{i}\}_{i=1}^{m}$, and $\mathcal{G}\!=\!\{g_{i}\}_{i=1}^{m}$. Then, we define adjacency between distributed aggregative optimization problems as follows: 

\begin{definition}[Adjacency~\cite{truthfulEV}]\label{Adjacency}
Two aggregative optimization problems $\mathcal{P}=(\mathcal{J},\mathcal{G},\mathcal{X})$ and $\mathcal{P}'=(\mathcal{J}',\mathcal{G}',\mathcal{X}')$ are adjacent if there exists an $i\in[m]$ such that $\text{col}(f_{i},g_{i},\mathcal{X}_{i})\neq \text{col}({f}'_{i},{g}'_{i},\mathcal{X}_{i}')$, but $\text{col}(f_{j},g_{j},\mathcal{X}_{j})=\text{col}({f}'_{j},{g}'_{j},\mathcal{X}_{j}')$ for all $j\in[m]$ and $j\neq i$.
\end{definition}

Definition~\ref{Adjacency} implies that two distributed aggregative optimization problems $\mathcal{P}$ and $\mathcal{P}'$ are adjacent if and only if $\mathcal{P}$ and $\mathcal{P}'$ differ in a single entry while all other entries are the same. We denote the adjacent relationship between $\mathcal{P}$ and $\mathcal{P}'$ as $\text{Adj}(\mathcal{P},\mathcal{P}')$.
\begin{remark}
\vspace{-0.2em}
It is important to note that the adjacency relationship can also be defined in terms of shared intermediate updates, which aids in the truthful analysis of the manipulation of these updates. Specifically, we denote the shared intermediate update variable as $\theta=\text{col}(\theta^{1},\cdots,\theta^{m})$ with $\theta^{i}=\{\theta_{t}^{i}\}_{t=0}^{T}$ for any $i\in[m]$. Then, two shared intermediate update variables $\theta$ and $\theta^{\prime}$ are adjacent if there exists an $i\in[m]$ such that $\theta^{i}\neq \theta^{\prime i}$, but $\theta^{j}=\theta^{\prime j}$ for all $j\in[m]$ and $j\neq i$. In fact, employing this alternative definition does not change our truthfulness analysis. This is because, according to the update rules of our algorithm, any change in an agent's input naturally leads to changes of its intermediate updates. Hence, our truthfulness analysis technique is also applicable to the case of manipulating shared intermediate update variables. However, as mentioned earlier, such manipulation is difficult for agents to exploit effectively to achieve personal gains, since only the final computational outcome---and not the intermediate update values---is executed in the final implementation and impacts agents' personal gains. Therefore, we only focus on the manipulation of the algorithm's input.
\vspace{-0.2em}
\end{remark}

For a distributed aggregative optimization problem $\mathcal{P}$ and an initial state $\boldsymbol{\vartheta}_{0}$ (note that $\boldsymbol{\vartheta}$ can be an augmentation of multiple iteration variables), we denote a randomized execution as $\mathcal{A}_{\boldsymbol{\vartheta}_{0}}(\mathcal{P})$. Using this notation, we introduce the definition of $\eta$-truthfulness as follows:

\begin{definition}[$\eta$-truthfulness~\cite{DPtruthful3}]\label{truthfulness}
For a given $\eta>0$ and any $T\in\mathbb{N}^{+}$ (which includes the case of $T\!=\!\infty$), a distributed algorithm for problem~\eqref{primal} is $\eta$-truthful if for any agent $i$, any two adjacent problems $\mathcal{P}$ and $\mathcal{P'}$, and any outputs $x_{T}$ and $x'_{T}$ obtained by the algorithm's executions $\mathcal{A}_{\boldsymbol{\vartheta}_{0}}(\mathcal{P})$ and $\mathcal{A}_{\boldsymbol{\vartheta}_{0}}(\mathcal{P'})$, respectively, we always have 
\begin{equation}
\begin{aligned}
&\textstyle\mathbb{E}_{x_{T}\sim\mathcal{A}_{\boldsymbol{\vartheta}_{0}}(\mathcal{P})}[f_{i}(x_{T}^{i},\phi(x_{T}))]\\
&\textstyle\leq \mathbb{E}_{x'_{T}\sim\mathcal{A}_{\boldsymbol{\vartheta}_{0}}(\mathcal{P'})}[f_{i}({x'}_{\hspace{-0.1cm}T}^{i},\phi(x'_{T}))]+\eta,\label{truthful}
\end{aligned}
\end{equation}
where $x_{T}$ and $x'_{T}$ are the stacked column vectors of decision variables $x_{T}^{i}$ and ${x'}_{\hspace{-0.1cm}T}^{i}$ for all $i\in[m]$, respectively.
\end{definition}

In Definition~\ref{truthfulness}, if \eqref{truthful} holds for each agent $i\in[m]$, then the distributed algorithm is $\eta$-truthful (note that in the literature~\cite{truthfulEV,JDPoptimization1,JDPoptimization2}, Definition~\ref{truthfulness} is sometimes called $\eta$-approximate truthfulness because it cannot ensure that an agent will get absolute zero benefit from untruthful behaviors). The parameter $\eta$ quantifies the extent of loss reduction that an agent $i\in[m]$ can gain by using untruthful $\mathcal{P}'_{i}$ in distributed computation. It can be seen that a smaller $\eta$ implies a higher level of truthfulness.

We now define JDP, which is our main tool to ensure $\eta$-truthfulness:

\begin{definition}[Joint differential privacy~\cite{DPtruthful2}]\label{DefinitionJDP}
For a given $\epsilon>0$, a randomized distributed algorithm for problem~\eqref{primal} is $\epsilon$-jointly differentially private if for any $i\in[m]$, any two adjacent $\mathcal{P}$ and $\mathcal{P}'$, any set of outputs with the $i$th one removed $\mathcal{O}^{-i}\!\subseteq\! \mathbb{O}^{-i}$ (where $\mathbb{O}^{-i}$ denotes the set of all possible outputs except the $i$th one), and any initial state $\boldsymbol{\vartheta}_{0}$, we always have
\begin{equation}
\mathbb{P}[\mathcal{A}_{\boldsymbol{\vartheta}_{0}}(\mathcal{P})^{-i}\in\mathcal{O}^{-i}]\leq e^{\epsilon}\mathbb{P}[\mathcal{A}_{\boldsymbol{\vartheta}_{0}}(\mathcal{P}')^{-i}\in\mathcal{O}^{-i}],\nonumber
\end{equation}
where $\mathcal{A}_{\boldsymbol{\vartheta}_{0}}(\mathcal{P})^{-i}$ denotes the outputs of $\mathcal{A}_{\boldsymbol{\vartheta}_{0}}(\mathcal{P})$ for every agent except agent $i$.
\end{definition}

The parameter $\epsilon$ measures the indistinguishability under adjacent problems, and hence, the strength of JDP. A smaller value of $\epsilon$ indicates a stronger JDP. $\epsilon$-JDP is fundamentally different from the conventional $\epsilon$-differential privacy definition, which requires the indistinguishability of the outputs of all agents in distribution under two adjacent $\mathcal{P}$ and $\mathcal{P}'$ (including the output of agent $i$):
\begin{definition}[Differential privacy~\cite{Dwork2014}]\label{DP}
For a given $\epsilon>0$, a randomized distributed
algorithm for problem~\eqref{primal} is $\epsilon$-differentially private if for any two adjacent $\mathcal{P}$ and $\mathcal{P}'$, any set of outputs $\mathcal{O}\subset \mathbb{O}$ (with $\mathbb{O}$ denoting the set of all possible outputs), and any initial state $\boldsymbol{\vartheta_{0}}$, we always have
\begin{equation}
\mathbb{P}[\mathcal{A}_{\boldsymbol{\vartheta}_{0}}(\mathcal{P})\in\mathcal{O}]\leq e^{\epsilon}\mathbb{P}[\mathcal{A}_{\boldsymbol{\vartheta}_{0}}(\mathcal{P}')\in\mathcal{O}].\nonumber
\end{equation}
\end{definition}

As indicated in~\cite{DPtruthful3,DPtruthful2}, $\epsilon$-differential privacy is too stringent for ensuring truthfulness. Specifically, $\epsilon$-differential privacy requires that the input information of any single agent has a negligible influence on the outputs of all agents, which
makes the computed results almost unaffected by agents' information, thereby eliminating their incentive to truthfully report information.
\begin{remark}
\vspace{-0.2em}
The reason for the $\epsilon$-JDP mechanism to achieve $\eta$-truthfulness is that $\epsilon$-JDP can ensure that any single agent's input only has a small effect on the joint distribution of the outputs of other agents. This property
limits the benefit that an agent can gain from sharing untruthful/deceitful information to others, and hence, incentivizes truthful behaviors. Intuitively, a smaller $\epsilon$ enforces stronger JDP, implying that an untruthful agent's input has a smaller influence on the decision variables of other agents, thereby leading to a smaller $\eta$ and a stronger truthfulness guarantee. In fact, this intuition has been formalized in previous studies on server-assisted distributed optimization~\cite{truthfulEV} and proxy-based Bayesian games~\cite{DPtruthful2} (see Theorem 10 in~\cite{truthfulEV} and Theorem 6 in~\cite{DPtruthful2} for details). In our paper, we provide the quantitative relationship between $\eta$ and $\epsilon$ in Theorem~\ref{T2}.
\vspace{-0.8em}
\end{remark}

\section{Truthful distributed aggregative optimization algorithm design}\label{SectionIII}
The proposed truthful distributed algorithm to solve problem~\eqref{primal} is summarized in Algorithm~\ref{algorithm1}, in which Laplace noises $\zeta_{t}^{i}\in\mathbb{R}^{d}$ and $\xi_{t}^{i}\in\mathbb{R}^{d}$ with the following properties are employed to enable truthfulness:

\textbf{Noise parameter setting:}
For every $i\in[m]$ and any time $t\in\mathbb{N}$, the Laplace noises $\zeta_{t}^{i}$ and $\xi_{t}^{i}$ are zero-mean and independent over iterations. The noise variance $\mathbb{E}[\|\zeta_{t}^{i}\|^2]=(\sigma_{t,\zeta}^{i})^2$ is set to decrease with time and has a form of $\sigma_{t,\zeta}^{i}=\sigma_{\zeta}(t+1)^{-\varsigma_{\zeta}^{i}}$ with $\sigma_{\zeta}>0$ and $0<\varsigma_{\zeta}^{i}<1$. The noise variance $\mathbb{E}[\|\xi_{t}^{i}\|^2]=(\sigma_{t,\xi}^{i})^2$ is also set to decrease with time following $\sigma_{t,\xi}^{i}=\sigma_{\xi}(t+1)^{-\varsigma_{\xi}^{i}}$ with $\sigma_{\xi}>0$ and $0\!<\varsigma_{\xi}^{i}\!<1$. For example, we can set Laplace noise variances as $\sigma_{t,\zeta}^{i} = (t+1)^{-0.57}$ and $\sigma_{t,\xi}^{i} = (t+1)^{-0.79}$.

We use Laplace noises with decreasing variances to achieve truthfulness, which, as shown in Theorem~\ref{T1}, is key to maintaining accurate convergence. 
\begin{algorithm}[H]
\caption{Truthful distributed aggregative optimization algorithm for agent $i\in[m]$}
\label{algorithm1} 
\begin{algorithmic}[1]
\State {\bfseries Input:} Initialization $x_{0}^{i}\in \mathcal{X}_{i}$, $\psi_{0}^{i}=g_{i}(x_{0}^{i})$, and $y_{0}^{i}\!=\!\nabla_{2}f_{i}(x_{0}^{i},\psi_{0}^{i})$; 
stepsize $\lambda_{t}\!=\!\frac{\lambda_{0}}{(t+1)^{u}}$ with $\lambda_{0}\!>\!0$ and $u\!>\!0$;
decaying sequences $\alpha_{t}\!=\!\frac{\alpha_{0}}{(t+1)^{v}}$, $\gamma_{t,1}\!=\!\frac{\gamma_{1}}{(t+1)^{w_{1}}}$, and $\gamma_{t,2}\!=\!\frac{\gamma_{2}}{(t+1)^{w_{2}}}$ with $\alpha_{0},\gamma_{1},\gamma_{2}\!>\!0$ and $v,w_{1},w_{2}\!>\!0$; 
set $\Omega_{0}\!=\!\mathcal{B}_{L_{f,2}}$ and $\Omega_{t}\!=\!\mathcal{B}_{(1+\sum_{p=0}^{t-1}\gamma_{p,1})L_{f,2}}$ for any $t
\!\in\!\mathbb{N}^{+}$; Laplace noises $\zeta_{t}^{i}$ and $\xi_{t}^{i}$ following the ``Noise parameter setting".
\For{$t=0,1,\cdots,T-1$}
\State \!Receive neighbors' parameters $\tilde{y}_{t}^{j}=y_{t}^{j}+\zeta_{t}^{j},~j\in\mathcal{N}_{i}$;
\State  \!$y_{t+1}^{i}\!=\!(1+w_{ii})y_{t}^{i}+\sum_{j\in{\mathcal{N}_{i}}}w_{ij}\boldsymbol{P}_{\Omega_{t}}(\tilde{y}_{t}^{j})$;
\Statex \hspace{4em} $+\gamma_{t,1}\nabla_{2}f_{i}(x_{t}^{i},\psi_{t}^{i});$
\State \!$x_{t+1}^{i}\!=\!\boldsymbol{P}_{\mathcal{X}_{i}}\!\!\left(x_{t}^{i}\!-\!\lambda_{t}\left[\nabla_{1}f_{i}(x_{t}^{i},\psi_{t}^{i})\!+\!\frac{\nabla g_{i}(x_{t}^{i})(y_{t+1}^{i}-y_{t}^{i})}{\gamma_{t,1}}\right]\right)$;
\State \!Receive neighbors' parameters $\tilde{\psi}_{t}^{j}=\psi_{t}^{j}+\xi_{t}^{j},~j\in\mathcal{N}_{i}$;
\State  \!$\psi_{t+1}^{i}\!=\!(1-\alpha_{t}+\gamma_{t,2}w_{ii})\psi_{t}^{i}+\gamma_{t,2}\sum_{j\in{\mathcal{N}_{i}}}w_{ij}\tilde{\psi}_{t}^{j}$
\Statex \hspace{4em} \!$+g_{i}(x_{t+1}^{i})-(1-\alpha_{t})g_{i}(x_{t}^{i});$
\State Add Laplace noises $\zeta_{t+1}^{i}$ and $\xi_{t+1}^{i}$ to $y_{t+1}^{i}$ and $\psi_{t+1}^{i}$, respectively, and send the 
obscured parameters $\tilde{y}_{t+1}^{i}=y_{t+1}^{i}+\zeta_{t+1}^{i}$ and $\tilde{\psi}_{t+1}^{i}=\psi_{t+1}^{i}+\xi_{t+1}^{i}$ to its neighbors.
\EndFor
\end{algorithmic}
\end{algorithm}

In Algorithm~\ref{algorithm1}, Line 4 enables agent $i$ to track the weighted global gradient $\gamma_{t,1}\frac{1}{m}\sum_{i=1}^{m}\nabla_{2}f_{i}(x_{t}^{i},\phi(x_{t}))$, which is not directly accessible to individual agents. Similarly, Line 7 allows agent $i$ to track the global average $\phi(x_{t}) = \frac{1}{m}\sum_{i=1}^{m}g_{i}(x_{t}^{i})$, which is also unavailable to individual agents. Line 5 implements a projected local gradient descent step for agent $i$ to update its decision variable $x_{t}^{i}$. Furthermore, to ensure strong truthfulness, persistent Laplace noise is injected into all shared messages (see Line 8 for details).

As proven later in Theorem~\ref{T1} in Sec.~\ref{SectionIV}, our algorithm can ensure accurate convergence despite the presence of Laplace noises. This is in distinct difference from existing approaches for distributed aggregative optimization that employ the conventional gradient-tracking technique (see, e.g.,~\cite{lixiuxian2021,aggregative2,aggregative3,aggregative4,aggregative6,lixiuxianonline,aggregative7,aggregative8,aggregative9,carnevale2024nonconvex}), which are not robust to noises due to the accumulation of noise variances in
local agents' estimate of the global gradient $\frac{1}{m}\sum_{i=1}^{m}\nabla_{2}f_{i}(x_{t}^{i},\phi(x_{t}))$\footnote{As indicated in~\cite{GT1,GT2,wangGT,zijiGT}, the injected Laplace noises in conventional gradient-tracking-based algorithms will accumulate over time in the estimate of global gradients. Hence, if we employ existing gradient-tracking-based aggregative optimization algorithms, in e.g.,~\cite{lixiuxian2021,aggregative2,aggregative3,aggregative4,aggregative6,lixiuxianonline,aggregative7,aggregative8,aggregative9,carnevale2024nonconvex}, where the gradient-tracking variable $y_{t}^{i}$ is directly fed into the decision variable update, then optimization accuracy will be compromised. A detailed discussion on this issue is available in Section III of~\cite{wangGT} and Section III-A of~\cite{zijiGT}.}. To the best of our knowledge, our approach is the first to successfully eliminate the effect of unbounded Laplace noises on convergence accuracy in distributed aggregative optimization.

One key reason for our algorithm to eliminate the effect of Laplace noises on convergence accuracy is the use of a robust gradient-tracking method in Line 4. Specifically, defining $\boldsymbol{y}_{t}\!=\!\text{col}((y_{t}^{1})^{\top},\cdots,(y_{t}^{m})^{\top})\!\in\!\mathbb{R}^{m\times d}$, $\zeta_{t,w}^{i}\!=\!\sum_{j\in\mathcal{N}_{i}}w_{ij}\boldsymbol{P}_{\Omega_{t}}(\zeta_{t}^{j})\!\in\!\mathbb{R}^{m\times d}$, $\boldsymbol{\zeta}_{t,w}\!=\!\text{col}((\zeta_{t,w}^{1})^{\top},\cdots,(\zeta_{t,w}^{m})^{\top})\!\in\!\mathbb{R}^{m\times d}$,  and $\nabla_{2}\boldsymbol{f}(x_{t},\psi_{t})\!=\!\text{col}(\nabla_{2}f_{1}(x_{t}^{1},\psi_{t}^{1})^{\top},\cdots,\nabla_{2}f_{m}(x_{t}^{m},\psi_{t}^{m})^{\top})\!\in\!\mathbb{R}^{m\times d}$, Line 4 in Algorithm~\ref{algorithm1} implies
\begin{equation}
\boldsymbol{1}_{m}^{\top}(\boldsymbol{y}_{t+1}-\boldsymbol{y}_{t})=\boldsymbol{1}_{m}^{\top}(\boldsymbol{\zeta_{t,w}}+\gamma_{t,1}\nabla_{2}\boldsymbol{f}(x_{t},\psi_{t})),\nonumber
\end{equation}
where we have used the property $\boldsymbol{1}_{m}^{\top}W=\boldsymbol{0}_{m}^{\top}$. It is clear that the robust gradient tracking method avoids Laplace noises from accumulating on the gradient estimate (which is a problem faced by the conventional gradient tracking approach), and thus ensures robustness against noises. The robust gradient tracking method is inspired by our prior work~\cite{wangGT,zijiGT} as well as others~\cite{GT1,GT2} on distributed optimization. Unlike~\cite{GT1,GT2,wangGT,zijiGT} where each $f_{i}$ is only dependent on the local decision variable $x^{i}$, here we consider aggregative optimization problems where each $f_{i}$ depends not only on the local decision $x^{i}$ but also on the aggregative term $\phi(x)$. This compositional structure complicates both convergence and truthfulness analysis. Moreover, different from~\cite{GT1,GT2,wangGT,zijiGT} where the tracking variable $y_{t}^{i}$ can be
any point in $\mathbb{R}^{d}$, our algorithm constraints $y_{t}^{i}$ in a time-varying set $\Omega_{t}$ (see Lemma~\ref{ybound} for details), which makes convergence analysis much more challenging because the nonlinearity induced by projection
(necessary to address set constraints) poses challenges to both optimality analysis and consensus characterization. In addition, it is worth noting that compared with
the results in~\cite{GT1,GT2} where the objective function is strongly convex, the objective function considered here can be nonconvex/convex/strongly convex.     

Another key reason for our algorithm to be robust to Laplace noises is the use of decaying sequences $\alpha_{t}$ and $\gamma_{t,2}$ in Line 7, which can effectively suppress the influence of Laplace noises on the estimation variable $\psi_{t}^{i}$. Specifically, defining $\xi_{t,w}^{i}\!=\!\sum_{j\in\mathcal{N}_{i}}w_{ij}\xi_{t}^{j}$, $\boldsymbol{\xi}_{t,w}\!=\!\text{col}((\xi_{t,w}^{1})^{\top},\cdots,(\xi_{t,w}^{m})^{\top})\!\in\!\mathbb{R}^{m\times d}$, $\boldsymbol{\psi}_{t}\!=\!\text{col}((\psi_{t}^{1})^{\top},\cdots,(\psi_{t}^{m})^{\top})\!\in\!\mathbb{R}^{m\times d}$, and $\boldsymbol{g}(x_{t})\!=\!\text{col}(g_{1}(x_{t}^{1})^{\top},\cdots,g_{m}(x_{t}^{m})^{\top})\!\in\!\mathbb{R}^{m\times d}$, Line 7 in Algorithm~\ref{algorithm1} implies
\begin{equation}
\begin{aligned}
\boldsymbol{1}_{m}^{\top}\boldsymbol{\psi}_{t+1}&=\boldsymbol{1}_{m}^{\top}\big((1-\alpha_{t})\boldsymbol{\psi}_{t}+\gamma_{t,2}\boldsymbol{\xi}_{t,w}\\
&\quad+\boldsymbol{g}(x_{t+1})-(1-\alpha_{t})\boldsymbol{g}(x_{t})\big),\nonumber
\end{aligned}
\vspace{-0.2em}
\end{equation}
where we have used $\boldsymbol{1}_{m}^{\top}W=\boldsymbol{0}_{m}^{\top}$ and $w_{ii}=-\sum_{j\in\mathcal{N}_{i}}w_{ij}$.

Using the initial condition $\boldsymbol{\psi}_{0}=\boldsymbol{g}(x_{0})$, we can establish the following equality through induction:
\begin{equation}
\begin{aligned}
&\boldsymbol{1}_{m}^{\top}\boldsymbol{\psi}_{t+1}=\boldsymbol{1}_{m}^{\top}\big(\boldsymbol{g}(x_{t+1})\\
&\quad+{\textstyle\sum_{l=0}^{t-1}\prod_{p=l}^{t-1}}(1-\alpha_{p+1})\gamma_{l,2}\boldsymbol{\xi}_{l,w}+\gamma_{t,2}\boldsymbol{\xi}_{t,w}\big).\label{noseacc2}
\end{aligned}
\end{equation}
By judiciously designing the decaying rates of the sequences $\alpha_{t}$ and $\gamma_{t,2}$, we can gradually attenuate the influence of Laplace noises on the estimation variable $\boldsymbol{\psi}_{t}$, and hence, on the final optimization result.

Moreover, the co-design of stepsizes $\lambda_{t}$, decaying sequences $\alpha_{t}$, $\gamma_{t,1}$, and $\gamma_{t,2}$, as well as the Laplace-noise mechanism is crucial for our algorithm to ensure $\eta$-truthfulness. By judiciously coordinating the rates of stepsizes, decaying sequences,  and Laplace-noise variances, we can ensure truthfulness even in the infinite time horizon. This is different from existing JDP-based truthfulness results on cooperative optimization (e.g.,~\cite{truthfulEV,JDPoptimization1,JDPoptimization2}), whose truthfulness guarantee diminishes as iterations tend to infinity.

In Algorithm~\ref{algorithm1}, each agent $i$ projects the received parameters $y_{t}^{j}+\zeta_{t}^{j}$ from its neighbors $j\in\mathcal{N}_{i}$ onto an expanding closed ball $\Omega_{t}$. This strategy ensures the following result:
\begin{lemma}\label{ybound}
Under Assumption~\ref{A1}-(i), for any $t
\in\mathbb{N}$, the gradient-tracking variable $y_{t}^{i}$ of Algorithm~\ref{algorithm1} is constrained in the closed ball $\Omega_{t}=\mathcal{B}_{(1+\sum_{p=0}^{t-1}\gamma_{p,1})L_{f,2}}$.
\end{lemma}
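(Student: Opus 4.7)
The plan is to prove the bound by induction on $t$, with the essential mechanism being that the projection onto $\Omega_t$ keeps the incoming messages bounded while the $(1+w_{ii}), w_{ij}$ coefficients form a (possibly nonconvex but) affine combination summing to one, so each step only adds the gradient increment of size at most $\gamma_{t,1} L_{f,2}$.

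First I would record the preliminary fact that Assumption~\ref{A1}-(i), namely that $f_i(x,\psi)$ is $L_{f,2}$-Lipschitz in $\psi$, implies $\|\nabla_2 f_i(x,\psi)\| \le L_{f,2}$ for every admissible $x$ and $\psi$. Combined with the initialization $y_0^i = \nabla_2 f_i(x_0^i,\psi_0^i)$ this gives $\|y_0^i\| \le L_{f,2}$, establishing the base case $y_0^i \in \Omega_0 = \mathcal{B}_{L_{f,2}}$.

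For the inductive step I assume that $\|y_t^j\| \le (1+\sum_{p=0}^{t-1}\gamma_{p,1})L_{f,2}$ for every $j\in[m]$ and bound
\[
\|y_{t+1}^i\| \le |1+w_{ii}|\,\|y_t^i\| + \sum_{j\in\mathcal{N}_i} |w_{ij}|\,\bigl\|\boldsymbol{P}_{\Omega_t}(y_t^j+\zeta_t^j)\bigr\| + \gamma_{t,1}\,\bigl\|\nabla_2 f_i(x_t^i,\psi_t^i)\bigr\|.
\]
The projection in Line 4 forces $\|\boldsymbol{P}_{\Omega_t}(y_t^j+\zeta_t^j)\|\le (1+\sum_{p=0}^{t-1}\gamma_{p,1})L_{f,2}$ regardless of the (unbounded) Laplace noise $\zeta_t^j$, and the gradient term is at most $\gamma_{t,1}L_{f,2}$. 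Using Assumption~\ref{A2} together with the standard convention $w_{ij}\ge 0$ for $j\neq i$ and $w_{ii} = -\sum_{j\in\mathcal{N}_i} w_{ij}\in(-1,0]$, the weights $1+w_{ii}$ and $\{w_{ij}\}_{j\in\mathcal{N}_i}$ are nonnegative and sum to $(1+w_{ii}) + \sum_{j\in\mathcal{N}_i} w_{ij} = 1$. Therefore the first two terms combined are upper bounded by $(1+\sum_{p=0}^{t-1}\gamma_{p,1})L_{f,2}$, and adding the gradient increment yields
\[
\|y_{t+1}^i\| \le \Bigl(1+{\textstyle\sum_{p=0}^{t-1}}\gamma_{p,1}\Bigr)L_{f,2} + \gamma_{t,1} L_{f,2} = \Bigl(1+{\textstyle\sum_{p=0}^{t}}\gamma_{p,1}\Bigr)L_{f,2},
\]
which is exactly the radius defining $\Omega_{t+1}$, completing the induction.

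The only delicate point, and what I view as the main obstacle, is justifying that $1+w_{ii}\ge 0$ and that $(1+w_{ii}) + \sum_{j\in\mathcal{N}_i} w_{ij} = 1$. The second identity is immediate from $w_{ii} = -\sum_{j\in\mathcal{N}_i} w_{ij}$, while the first needs $|w_{ii}|\le 1$, which follows from Assumption~\ref{A2} (the eigenvalues of $W$ lying strictly above $-1$ and $W$ having nonpositive diagonal implies $w_{ii}\in(-1,0]$ under the usual weight-graph convention used throughout the paper). Once this is in place, the rest of the argument is a clean two-line induction that also crucially demonstrates why the projection step in Line 4 of Algorithm~\ref{algorithm1} is essential: without it, unbounded Laplace noise would prevent any such uniform bound on $y_t^i$.
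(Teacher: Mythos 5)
Your proposal is correct and follows essentially the same route as the paper's proof: induction on $t$, using $\|\nabla_{2}f_{i}\|\leq L_{f,2}$ for the base case and the increment, the projection onto $\Omega_{t}$ to bound the (noisy) neighbor terms, and the fact that $1+w_{ii}\geq 0$ with $(1+w_{ii})+\sum_{j\in\mathcal{N}_{i}}w_{ij}=1$ to absorb the consensus part into the inductive bound. The only difference is cosmetic: you make explicit the justification that $w_{ii}\in(-1,0]$ (via Assumption~\ref{A2}), which the paper's proof uses implicitly when writing the coefficient $1-|w_{ii}|$.
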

\begin{proof}
See Appendix~A.
\end{proof}
According to Lemma~\ref{ybound} and the relationship $\sum_{p=0}^{t-1}\gamma_{p,1}\leq \gamma_{1}\!+\!\int_{1}^{\infty}\frac{\gamma_{1}}{x^{w_{1}}}\!=\!\frac{\gamma_{1}w_{1}}{w_{1}-1}$, we can ensure $\|y_{t}^{i}\|\leq (1+\frac{\gamma_{1}\omega_{1}}{\omega_{1}-1})L_{f,2}$ for any $t\in\mathbb{N}$ and $w_{1}>1$. The boundedness of $\|y_{t}^{i}\|$ is necessary for our $\eta$-truthfulness analysis (see Eqs.~\eqref{Dp3}-\eqref{Dp51} for details).

\begin{remark}
We employ two distinct tracking techniques in Algorithm~\ref{algorithm1} since they serve different purposes in our algorithm design. More specifically, by using the conventional tracking approach to update $\psi_{t}^{i}$, $\psi_{t}^{i}$ represents the tracking target function and we can use $\psi_{t}^{i}$ directly to evaluate the gradients $\nabla_{1}f_{i}(x_{t}^{i},\psi_{t}^{i})$ and $\nabla_{2}f_{i}(x_{t}^{i},\psi_{t}^{i})$ in each iteration. In contrast, if we were to use a robust tracking approach in which $\psi_{t}^{i}$ represents a cumulative value of the target function, the gradient evaluation would have to be performed on the difference of local iterates, i.e., $\nabla_{1}f_{i}(x_{t}^{i}, \frac{\psi_{t+1}^{i}-\psi_{t}^{i}}{\gamma_{t,2}})$ and $\nabla_{2}f_{i}(x_{t}^{i}, \frac{\psi_{t+1}^{i}-\psi_{t}^{i}}{\gamma_{t,2}})$, which would introduce unnecessary complexity in both the algorithm design and convergence analysis. Unlike $\psi_{t}^{i}$, the variable $y_{t}^{i}$ is used to estimate the global gradient $\gamma_{t,1}\frac{1}{m}\sum_{i=1}^{m}\nabla_{2}f_{i}(x_{t}^{i},\psi_{t}^{i})$ that does not participate in the gradient computation. Hence, we employ robust gradient tracking to update $y_{t}^{i}$, which not only mitigates the effect of noises on convergence accuracy but also avoids introducing additional decaying sequences required by conventional tracking approaches to combat noises (e.g., $\alpha_{t}$ and $\gamma_{t,2}$ in Line 7).
\end{remark}
\begin{remark}
Different from existing JDP-based truthfulness approaches for cooperative optimization~\cite{truthfulEV,JDPoptimization1,JDPoptimization2}, which require a ``centralized" aggregator to collect all agents' iteration/decision variables, our algorithm is executed in a fully distributed scenario without the assistance of any aggregators. Moreover, our algorithm avoids sharing truthful gradient/function information to a centralized aggregator, which is necessary in existing VCG-based approaches~\cite{VCG1,VCG2,VCG3,VCG4,VCG5,VCG6}.
\end{remark}
\begin{remark}
Although the approach of JDP-enabled truthfulness approach inherits nice properties of JDP, such as robustness to collusion and amenability to composition (which are advantages over VCG-based truthfulness approaches), it does have one downside compared with VCG-based truthfulness approaches: Due to the absence of monetary incentives, agents may have less incentive to participate in the distributed algorithm and may instead choose to make decisions solely based on their personal preferences.
\end{remark}

\section{Optimization Accuracy and Convergence Rate Analysis}\label{SectionIV}
In this section, we systematically analyze the convergence rate of Algorithm~\ref{algorithm1} under strongly convex, general convex, and nonconvex global objective functions, respectively. To this end, we first give the global gradient $\nabla F(x_{t})$ and its estimate $\nabla \tilde{F}(x_{t})$ based on the definition of $F(x)$ in~\eqref{primal} and Line 5 in  Algorithm~\ref{algorithm1} as follows:
\begin{align}
\vspace{-0.2em}
&\textstyle\nabla F(x_{t})=\sum_{i=1}^{m}\nabla_{1}f_{i}(x_{t}^{i},\phi(x_{t}))\nonumber\\
&\textstyle\quad+\sum_{i=1}^{m}\big(\nabla g_{i}(x_{t}^{i})\times\frac{1}{m}\sum_{i=1}^{m} \nabla_{2}f_{i}(x_{t}^{i},\phi(x_{t}))\big).\label{A2L1}\\
&\textstyle\nabla\tilde{F}(x_{t})=\sum_{i=1}^{m}\big(\nabla_{1}f_{i}(x_{t}^{i},\psi_{t}^{i})+\nabla g_{i}(x_{t}^{i})\frac{y_{t+1}^{i}-y_{t}^{i}}{\gamma_{t,1}}\big).\label{Hg1}
\vspace{-0.2em}
\end{align}

For the sake of notational simplicity, we denote $L_{f}=\max\{L_{f,1},L_{f,2}\}$, $\bar{L}_{f}=\max\{\bar{L}_{f,1},\bar{L}_{f,2}\}$, $\varsigma_{\zeta}=\min_{i\in[m]}\{\varsigma_{\zeta}^{i}\}$, and $\varsigma_{\xi}=\min_{i\in[m]}\{\varsigma_{\xi}^{i}\}$. 

The following lemma establishes the $L_{F}$-Lipschitz continuity of the global gradient $\nabla F(x)$:  
\begin{lemma}\label{FLips}
Under Assumption~\ref{A1}, the gradient $\nabla F(x)$ is $L_{F}$-Lipschitz continuous, i.e., for any $x_{1},x_{2}\in\mathcal{X}$, we have
\begin{equation}
\|\nabla F(x_{1})-\nabla F(x_{2})\|\leq L_{F}\|x_{1}-x_{2}\|,\nonumber
\end{equation}
with $L_{F}=\bar{L}_{f,1}+L_{f}\bar{L}_{g}+L_{g}(\bar{L}_{f,1}+\bar{L}_{f,2}+L_{g}\bar{L}_{f,2}).$ 
\end{lemma}
\begin{proof}
See Appendix~B.
\end{proof}

We also present the following lemma to characterize the estimation error of $\nabla \tilde{F}(x_{t})$.
\begin{lemma}\label{Lemhyperg}
Under Assumptions~\ref{A1} and \ref{A2}, if the rate of stepsize $\lambda_{t}$ satisfies $1>u>w_{2}$, the rate of decaying sequence $\alpha_{t}$ satisfies $1>v>w_{2}$, and the rates of Laplace-noise variances satisfy $\varsigma_{\zeta}>\max\{w_{1},\frac{w_{2}}{2}\}$ and $\varsigma_{\xi}>\frac{v}{2}-w_{2}$, then we have the following inequalities for Algorithm~\ref{algorithm1}:
\begin{align}
&\mathbb{E}[\|\nabla\tilde{F}(x_{t})-\nabla F(x_{t})\|^2]\leq \mathcal{O}((t+1)^{-\beta}) ,\label{hypergresult}\\
&\mathbb{E}[\|\nabla\tilde{F}(x_{t})\|^2]\leq \mathcal{O}(1),\label{tildeFresult}
\end{align}
where the convergence rate $\beta$ is given by $\beta=\min\{2u-2w_{2},2v-2w_{2},2\varsigma_{\zeta}-2w_{1},2\varsigma_{\zeta}-w_{2},2\varsigma_{\xi}+2w_{2}-v\}$.
\end{lemma}
\begin{proof}
See Appendix~D.
\end{proof}
Now, we establish the convergence results of Algorithm~\ref{algorithm1}:
\begin{theorem}\label{T1}
Under Assumptions~\ref{A1} and \ref{A2}, for any $T\in\mathbb{N}^{+}$, the following convergence results hold for Algorithm~\ref{algorithm1}:

(i) If $F(x)$ is strongly convex, the rate of stepsize $\lambda_{t}$ satisfies $1>u>w_{2}$, the rates of decaying sequences $\alpha_{t}$, $\gamma_{t,1}$, and $\gamma_{t,2}$ satisfy $1>v>w_{2}$, $1>w_{1}$, and $1>w_{2}$, respectively, and the rates of Laplace-noise variances satisfy $\varsigma_{\zeta}>\max\{w_{1},\frac{w_{2}}{2}\}$ and $\varsigma_{\xi}>\frac{v}{2}-w_{2}$, then we have
\begin{equation}
\textstyle \mathbb{E}[\|x_{T}^{i}-x^{*}\|^2]\leq \mathcal{O}\left(T^{-\beta}\right),\label{T1result1}
\end{equation}
where the convergence rate $\beta$ is given by $\beta=\min\{2u-2w_{2},2v-2w_{2},2\varsigma_{\zeta}-2w_{1},2\varsigma_{\zeta}-w_{2},2\varsigma_{\xi}+2w_{2}-v\}$.

(ii) If $F(x)$ is general convex, the rate of stepsize $\lambda_{t}$ satisfies $1\!>u\!>\frac{1+w_{2}}{2}$, the rates of decaying sequences $\alpha_{t}$, $\gamma_{t,1}$, and $\gamma_{t,2}$ satisfy $1>v>1-u+w_{2}$, $1>w_{1}$, and $1>w_{2}$, respectively, and the rates of Laplace-noise variances satisfy $\varsigma_{\zeta}>1-u+\max\{w_{1},\frac{w_{2}}{2}\}$ and $\varsigma_{\xi}>1-u+\frac{v}{2}-w_{2}$, then we have
\begin{equation}
\frac{\sum_{t=0}^{T}\lambda_{t}\mathbb{E}[F(x_{t})-F(x^{*})]}{\sum_{t=0}^{T}\lambda_{t}}\leq \mathcal{O}\left(T^{-(1-u)}\right).\label{T1result2}
\end{equation}

(iii) If $F(x)$ is nonconvex, the rate of stepsize $\lambda_{t}$ satisfies $1>u>\max\{\frac{1}{2},\frac{1+2w_{2}}{3}\}$, the rates of decaying sequences $\alpha_{t}$, $\gamma_{t,1}$, and $\gamma_{t,2}$ satisfy $1>v>\frac{1-u}{2}+w_{2}$, $1>w_{1}$, and $1>w_{2}$, respectively, and the rates of Laplace-noise variances satisfy $\varsigma_{\zeta}>\frac{1-u}{2}+\max\{w_{1},\frac{w_{2}}{2}\}$ and $\varsigma_{\xi}>\frac{1-u}{2}+\frac{v}{2}-w_{2}$, then we have
\begin{equation}
\frac{\sum_{t=0}^{T}\lambda_{t}\mathbb{E}[\|\nabla F(x_{t})\|^2]}{\sum_{t=0}^{T}\lambda_{t}}\leq \mathcal{O}\left(T^{-(1-u)}\right).\label{TT1result2N9}
\end{equation}
\end{theorem}
\begin{proof}
See Appendix E.
\end{proof}

Theorem~\ref{T1} proves that Algorithm~\ref{algorithm1} converges to an exact optimal solution at rates $\mathcal{O}(T^{-\beta})$, $\mathcal{O}(T^{-(1-u)})$, and $\mathcal{O}(T^{-(1-u)})$ for strongly convex, convex, and nonconvex $F(x)$, respectively. It is more comprehensive than existing distributed aggregative optimization results in~\cite{lixiuxian2021,charging2011,carnevale2022online,aggregative2,aggregative3,aggregative4,aggregative5,aggregative6,lixiuxianonline,aggregative7,aggregative8}, which do not consider the nonconvex case. Our result is also stronger and more precise than the result on distributed nonconvex aggregative optimization in~\cite{carnevale2024nonconvex}, which proves asymptotic convergence but does not provide explicit convergence rates.
\begin{remark}
In Line 7 of Algorithm~\ref{algorithm1}, we introduce a
decaying sequence $\alpha_{t}$ to attenuate the influence of noises on convergence accuracy (see Eq.~\eqref{noseacc2}).  This sequence does not affect the asymptotic convergence of $\psi_{t}^{i}$ to $\phi(x_{t})$ (as evidenced by Eq.~\eqref{Hg6}), but it does influence the convergence rate of Algorithm~\ref{algorithm1}. Specifically, according to the convergence rate result in Theorem~\ref{T1}, i.e.,  $\beta=\min\{2u-2w_{2},2v-2w_{2},2\varsigma_{\zeta}-2w_{1},2\varsigma_{\zeta}-w_{2},2\varsigma_{\xi}+2w_{2}-v\}$ for strongly convex objective functions and $1-u$ for nonconvex/general convex objective functions, we have that a larger $v$ (which denotes the decaying rate of $\alpha_{t}$) leads to a larger $\beta$ and $1-u$ (here, we have used the relation $v-w_{2}>1-u$ given in the statement of Theorem~\ref{T1}-(ii) and-(iii), which implies that a larger $v$ leads to a larger $1-u$). Therefore, a faster decaying $\alpha_{t}$ leads to faster convergence.
\end{remark}

Next, we provide an example of parameter selection that satisfies the conditions given in Theorem~\ref{T1}:

\noindent\textbf{An example of parameter selection in Theorem~\ref{T1}:}

(i) For a strongly convex $F(x)$ and an arbitrarily small $\varepsilon>0$, if we set $u=v=\varsigma_{\zeta}=\frac{3}{4}$, $\varsigma_{\xi}=\frac{9}{8}$, and $w_{1}=w_{2}=\varepsilon$, then we have $\beta=\frac{3}{2}-2\varepsilon$, which implies that the convergence rate of Algorithm 1 is arbitrarily close to $\mathcal{O}(T^{-1.5})$.

(ii) For a general convex $F(x)$ and an arbitrarily small $\varepsilon>0$, if we set $u=v=\varsigma_{\zeta}=\frac{1}{2}+\varepsilon$, $\varsigma_{\xi}=\frac{3}{4}+\varepsilon$, and $w_{1}=w_{2}=\varepsilon$, then we have $1-u=\frac{1}{2}-\varepsilon$, which implies that the convergence rate of Algorithm 1 is arbitrarily close to $\mathcal{O}(T^{-0.5})$.

(iii) For a nonconvex $F(x)$ and an arbitrarily small $\varepsilon>0$, if we set $u=\frac{1}{2}+\varepsilon$, $v=\varsigma_{\zeta}=\frac{1}{4}+\varepsilon$, $\varsigma_{\xi}=\frac{3}{8}$, and $w_{1}=w_{2}=\varepsilon$, then we have $1-u=\frac{1}{2}-\varepsilon$, which implies that the convergence rate of Algorithm 1 is arbitrarily close to $\mathcal{O}(T^{-0.5})$.
\begin{remark}
The above results imply that our algorithm achieves a convergence rate arbitrarily close to $\mathcal{O}(T^{-0.5})$ for nonconvex/general convex objective functions even in the presence of Laplace noises. This closely matches the convergence rate of $\mathcal{O}(T^{-0.5})$ achieved by existing distributed aggregative optimization approaches for convex objective functions in the absence of noises (in, e.g.,~\cite{lixiuxianonline,aggregative6,aggregative7}). Moreover, unlike the result for nonconvex functions in~\cite{aggregative9}, which achieves a convergence rate of $\mathcal{O}(T^{-0.5})$ for the special case where every $f_{i}$ is solely dependent on the aggregative term in the absence of noises, our convergence results remain valid in a general scenario where $f_{i}$ depends on both local decision variables and the aggregative term.
\vspace{-1em}
\end{remark}

\section{Truthfulness Analysis}\label{SectionV}
\subsection{Truthfulness guarantee}\label{SectionIV2}
In this subsection, we prove that Algorithm~\ref{algorithm1} can ensure that one agent's untruthful behavior can gain at most $\eta$ reduction in its loss (i.e., $\eta$-truthfulness) in the implementation of Algorithm~\ref{algorithm1}. To this end, we first 
denote $\hat{\varsigma}_{\zeta}=\max_{i\in[m]}\{\varsigma_{\zeta}^{i}\}$ and $\hat{\varsigma}_{\xi}=\max_{i\in[m]}\{\varsigma_{\xi}^{i}\}$, and give a preliminary result:
\begin{lemma}\label{JDP}
Under Assumptions~\ref{A1} and \ref{A2}, if the rate of stepsize $\lambda_{t}$ satisfies $u>w_{1}+w_{2}+\hat{\varsigma}_{\xi}+1$ and the rates of decaying sequences $\alpha_{t}$, $\gamma_{t,1}$, and $\gamma_{t,2}$ satisfy $v>u-w_{1}$, $w_{1}>\hat{\varsigma}_{\zeta}+1$, and $1>w_{2}$,  respectively, then for any $T\!\in\!\mathbb{N}^{+}$, Algorithm~\ref{algorithm1} is $\epsilon$-jointly differentially private with the cumulative privacy budget bounded by
\begin{equation}
\textstyle\epsilon=\sum_{t=1}^{T}\Big(\frac{\sqrt{2}c_{1}\lambda_{0}}{\sigma_{\xi}\gamma_{1}\gamma_{2}(t+1)^{u-w_{1}-w_{2}-\hat{\varsigma}_{\xi}}}+\frac{\sqrt{2}c_{2}\gamma_{1}}{\sigma_{\zeta}(t+1)^{w_{1}-\hat{\varsigma}_{\zeta}}}\Big),\label{epsilon}
\end{equation}
where $c_{1}$ and $c_{2}$ are given by $c_{1}=\frac{\hat{w}\gamma_{2}}{\hat{w}\gamma_{2}-(u-w_{1}-w_{2})}$ and $c_{2}=(\frac{4w_{1}}{e\ln(\frac{2}{2-\hat{w}})})^{w_{1}}\frac{2}{\hat{w}}$, respectively, with $\hat{w}\triangleq\min_{i\in[m]}\{|w_{ii}|\}$.
\end{lemma}
\begin{proof}
See Appendix~F.
\end{proof}
\begin{remark}\label{remark3}
Lemma~\ref{JDP} proves that our algorithm ensures rigorous $\epsilon$-JDP, where the cumulative privacy budget $\epsilon$ is finite even when $T\!\rightarrow\!\infty$ since $u-w_{1}-w_{2}\!-\!\hat{\varsigma}_{\xi}\!>\!1$ and $w_{1}-\hat{\varsigma}_{\zeta}\!>\!1$ always hold. It also implies that for any given $\epsilon>0$, Algorithm~\ref{algorithm1} is $\epsilon$-jointly differentially private when the noise parameters satisfy $\sigma_{\xi}=\sum_{t=1}^{T}\frac{2\sqrt{2}c_{1}\lambda_{0}}{\epsilon\gamma_{1}\gamma_{2}(t+1)^{u-w_{1}-w_{2}-\hat{\varsigma}_{\xi}}}$ and $\sigma_{\zeta}=\sum_{t=1}^{T}\frac{2\sqrt{2}c_{2}\gamma_{1}}{\epsilon(t+1)^{w_{1}-\hat{\varsigma}_{\zeta}}}$ with $c_{1}$ and $c_{2}$ defined in Lemma~\ref{JDP}.
\end{remark}
We now prove $\eta$-truthfulness of Algorithm~\ref{algorithm1}:
\begin{theorem}\label{T2}
Under Assumptions~\ref{A1} and~\ref{A2}, if the decaying rate of stepsize $\lambda_{t}$ satisfies $u\!>\!w_{1}+w_{2}+\hat{\varsigma}_{\xi}+1$ and the rates of decaying sequences $\alpha_{t}$, $\gamma_{t,1}$, and $\gamma_{t,2}$ satisfy $v\!>\!u-w_{1}$, $w_{1}>\hat{\varsigma}_{\zeta}+1$, and $1>w_{2}$,  respectively, then for any $T\!\in\!\mathbb{N}^{+}$  (which includes the case of~$T\!=\!\infty$), Algorithm~\ref{algorithm1} is $\eta$-truthful, i.e., for any $i\in[m]$ and any $T\in\mathbb{N}^{+}$, we have
\begin{equation}
\begin{aligned}
&\mathbb{E}_{x_{T}\sim \mathcal{A}_{\boldsymbol{\vartheta}_{0}}(\mathcal{P})}[f_{i}(x_{T}^{i},\phi(x_{T}))]\\
&\leq \mathbb{E}_{x'_{T}\sim \mathcal{A}_{\boldsymbol{\vartheta}_{0}}(\mathcal{P}')}[f_{i}({x'}_{\hspace{-0.1cm}T}^{i},\phi(x'_{T}))]+\eta,\label{T4result}
\end{aligned}
\end{equation}
where $\eta$ is given by
\begin{equation}
\eta=(L_{f,1}+L_{f,2}L_{g})D_{\mathcal{X}}+2\epsilon D_{f}, \label{eta}
\end{equation}	
with $D_{\mathcal{X}}$ denoting the diameter of the compact set $\mathcal{X}$ and $D_{f}$ representing the value of $\max_{i\in[m]}\{\|f_{i}(x^{i},\phi(x))\|\}$ on set $\mathcal{X}$.
\end{theorem}
\begin{proof}
To prove Theorem~\ref{T2}, we first introduce the following decomposition:
\begin{equation}
\begin{aligned}
&f_{i}(x_{T}^{i},\phi(x_{T}^{i},x_{T}^{-i}))-f_{i}({x'}_{\hspace{-0.1cm}T}^{i},\phi({x'}_{\hspace{-0.1cm}T}^{i},x_{T}^{-i}))\\
&=f_{i}(x_{T}^{i},\phi(x_{T}^{i},x_{T}^{-i}))-f_{i}({x'}_{\hspace{-0.1cm}T}^{i},\phi(x_{T}^{i},x_{T}^{-i}))\\
&\quad+f_{i}({x'}_{\hspace{-0.1cm}T}^{i},\phi(x_{T}^{i},x_{T}^{-i}))-f_{i}({x'}_{\hspace{-0.1cm}T}^{i},\phi({x'}_{\hspace{-0.1cm}T}^{i},x_{T}^{-i})),\label{4t1}
\end{aligned}
\end{equation}
where we used symbol $(x_{T}^{i},x_{T}^{-i})$ to emphasize that $x_{T}$ can be decomposed into $x_{T}^{i}$ and  $x_{T}^{-i}\!=\!\{x_{T}^{1},\!\cdots\!,x_{T}^{i-1}, x_{T}^{i+1},\!\cdots\!,x_{T}^{m}\}$.

Since the decision variable $x_{T}$ is constrained in the compact set $\mathcal{X}$, we have $\mathbb{E}_{x_{T}^{i}\sim \mathcal{A}_{\boldsymbol{\vartheta}_{0}}(\mathcal{P}),{x'}_{\hspace{-0.1cm}T}^{i}\sim \mathcal{A}_{\boldsymbol{\vartheta}_{0}}(\mathcal{P}')}[\|x_{T}^{i}-{x'}_{\hspace{-0.1cm}T}^{i}\|]\leq D_{\mathcal{X}}$ for all $i\!\in\![m]$ and some $D_{\mathcal{X}}\!>\!0$. Hence, \eqref{4t1} satisfies
\begin{equation}
\begin{aligned}
&f_{i}(x_{T}^{i},\phi(x_{T}^{i},x_{T}^{-i}))-f_{i}({x'}_{\hspace{-0.1cm}T}^{i},\phi({x'}_{\hspace{-0.1cm}T}^{i},x_{T}^{-i}))\\
&\leq (L_{f,1}+L_{f,2}L_{g})D_{\mathcal{X}}.\label{4t11}
\end{aligned}
\end{equation}
By using~\eqref{4t11}, we obtain
\begin{equation}
\begin{aligned}
&\mathbb{E}_{x_{T}\sim \mathcal{A}_{\boldsymbol{\vartheta}_{0}}(\mathcal{P})}[f_{i}(x_{T}^{i},\phi(x_{T}^{i},x_{T}^{-i}))]\\
&\leq \mathbb{E}_{{x'}_{\hspace{-0.1cm}T}^{i}\sim \mathcal{A}_{\boldsymbol{\vartheta}_{0}}(\mathcal{P}')^{i},x_{T}^{-i}\sim \mathcal{A}_{\boldsymbol{\vartheta}_{0}}(\mathcal{P})^{-i}}[f_{i}({x'}_{\hspace{-0.1cm}T}^{i},\phi({x'}_{\hspace{-0.1cm}T}^{i},x_{T}^{-i}))]\\
&\quad+(L_{f,1}+L_{f,2}L_{g})D_{\mathcal{X}},\label{4t2}
\end{aligned}
\end{equation}
where $\mathcal{A}_{\boldsymbol{\vartheta}_{0}}(\mathcal{P}')^{i}$ represents the $i$th output of $\mathcal{A}_{\boldsymbol{\vartheta}_{0}}(\mathcal{P}')$ and $\mathcal{A}_{\boldsymbol{\vartheta}_{0}}(\mathcal{P})^{-i}$ denotes the outputs of $\mathcal{A}_{\boldsymbol{\vartheta}_{0}}(\mathcal{P})$ with the $i$th output removed.

According to the definition of $\epsilon$-JDP and Lemma~\ref{JDP}, the first term on the right hand side of~\eqref{4t2} satisfies
\begin{equation}
\begin{aligned}
&\mathbb{E}_{{x'}_{\hspace{-0.1cm}T}^{i}\sim \mathcal{A}_{\boldsymbol{\vartheta}_{0}}(\mathcal{P}')^{i},x_{T}^{-i}\sim \mathcal{A}_{\boldsymbol{\vartheta}_{0}}(\mathcal{P})^{-i}}[f_{i}({x'}_{\hspace{-0.1cm}T}^{i},\phi({x'}_{\hspace{-0.1cm}T}^{i},x_{T}^{-i}))]\\
&\leq {\textstyle\int_{\mathbb{R}^{d}}f_{i}}({x'}_{\hspace{-0.1cm}T}^{i},\tau)\mathbb{P}(\phi({x'}_{\hspace{-0.1cm}T}^{i},x_{T}^{-i})\in d\tau)\\
&\leq e^{\epsilon}\mathbb{E}_{x'_{T}\sim \mathcal{A}_{\boldsymbol{\vartheta}_{0}}(\mathcal{P}')}[f_{i}({x'}_{\hspace{-0.1cm}T}^{i},\phi(x'_{T}))]\\
&\leq \mathbb{E}_{x'_{T}\sim \mathcal{A}_{\boldsymbol{\vartheta}_{0}}(\mathcal{P}')}[f_{i}({x'}_{\hspace{-0.1cm}T}^{i},\phi(x'_{T}))]+2\epsilon D_{f},\label{4t3}
\end{aligned}
\end{equation}
where in the last inequality we have used relations $e^{\epsilon}\leq 1+2\epsilon$ for $\epsilon\in(0,1)$ and $\max_{i\in[m]}\{\|f_{i}({x'}_{\hspace{-0.1cm}T}^{i},\phi(x'_{T}))\|\}\leq D_{f}$.

Substituting~\eqref{4t3} into~\eqref{4t2}, we arrive at~\eqref{T4result}.
\end{proof}

Theorem~\ref{T2} quantifies the maximal loss reduction that agent $i\in[m]$ can gain through its untruthful behaviors, which is bounded by $\eta\!=\!(L_{f,1}\!+\! L_{f,2}L_{g})D_{\mathcal{X}}\!+\!2\epsilon D_{f}$. The first term in $\eta$ is ``intrinsic" as it represents the effect of agent $i$'s changed decision variable on its own objective function $f_{i}$ (see Eq.~\eqref{4t11} for details) while the second term represents the benefit that agent $i$ can gain by sharing untruthful information (see Eq.~\eqref{4t3} for details). 

According to Eq.~\eqref{epsilon}, we have that $\epsilon$ is finite even in the infinite time horizon. This implies that the truthfulness parameter $\eta$ is also finite even in the infinite time horizon, which is significant because an agent can report untruthful data in each iteration. In contrast, existing JDP-based truthfulness results, e.g.,~\cite{truthfulEV,JDPoptimization1,JDPoptimization2}, have a truthfulness level (or strength) decreasing to zero as the number of iterations tends to infinity, implying that the truthfulness guarantee will eventually be lost.

Furthermore, we can reduce $\eta$ by decreasing $\epsilon$ to achieve a higher level of truthfulness. 
However, decreasing $\epsilon$ leads to an increase in noise parameters $\sigma_{\xi}$ and $\sigma_{\zeta}$ (see Remark~\ref{remark3}), which will compromise convergence accuracy. As such, a quantitative analysis of the tradeoff between the level of truthfulness and convergence performance warrants further exploration.
\begin{remark}
The convergence rate of $\eta$ is determined by the convergence rate of $\epsilon$ (see Eq.~\eqref{eta}). According to Eq.~\eqref{epsilon} in Lemma~\ref{JDP}, the convergence rate of $\epsilon$ is determined by decaying rates $u$, $w_{1}$, $w_{2}$, $\hat{\varsigma}_{\zeta}$, and $\hat{\varsigma}_{\xi}$. A faster decaying stepsize $\lambda_{t}$ (i.e., a larger $u$) and slower decaying Laplace-noise variances (i.e., smaller $\hat{\varsigma}_{\zeta}$ and $\hat{\varsigma}_{\xi}$) lead to faster convergence of both $\epsilon$ and $\eta$ as $T$ tends to infinity. In addition, 
although the number of iterations $T$ does not affect the convergence rate of $\eta$, it does affect the strength of truthfulness. A larger $T$ results in a larger $\epsilon$ under given values of $u$, $w_{1}$, $w_{2}$, $\hat{\varsigma}_{\zeta}$, and $\hat{\varsigma}_{\xi}$ (see Eq.~\eqref{epsilon}), which further leads to a larger $\eta$ (see Eq.~\eqref{eta}), implying a weaker truthfulness guarantee. When $T \to \infty$, $\epsilon$ converges to a finite constant, which ensures that $\eta$ also converges to a finite constant. This implies that our algorithm can guarantee truthfulness even in an infinite time horizon.
\end{remark}

\subsection{Tradeoff between the level of enabled truthfulness and convergence performance}\label{SectionIV3}
In this subsection, we quantify the tradeoff between convergence performance and the level of enabled truthfulness.
\begin{figure*}
\centering
\subfigure[The charging schedule computed by Algorithm~\ref{algorithm1}]{\label{Theorem1a}
\includegraphics[width=0.43\linewidth]{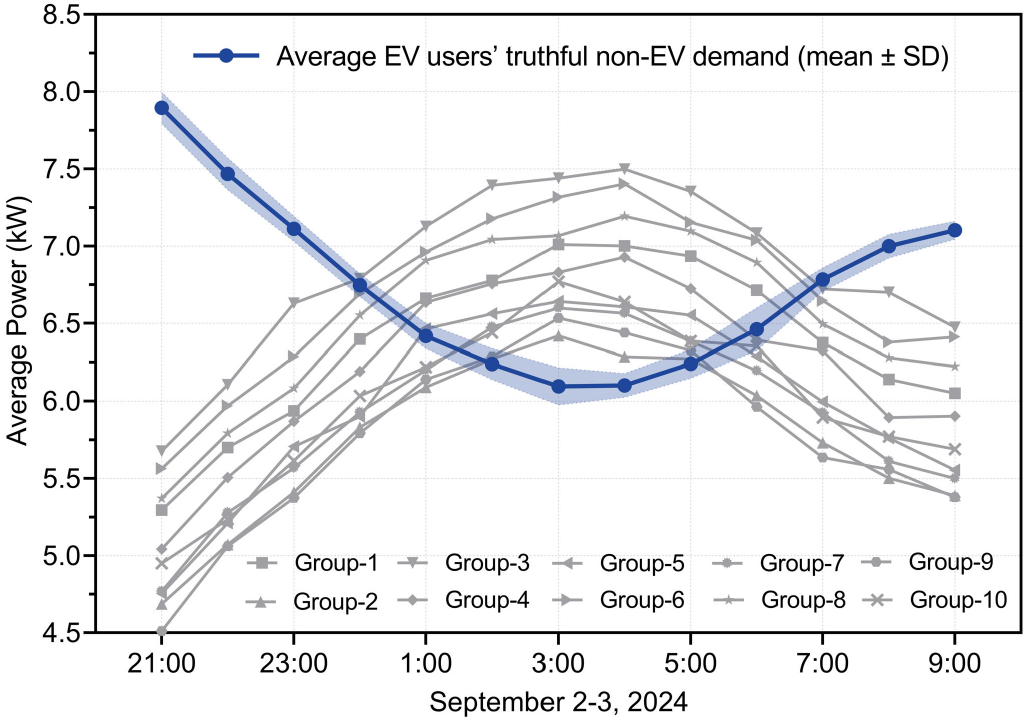}}
\quad
\subfigure[The evolution of global costs]{\label{Theorem1b}
\includegraphics[width=0.46\linewidth]{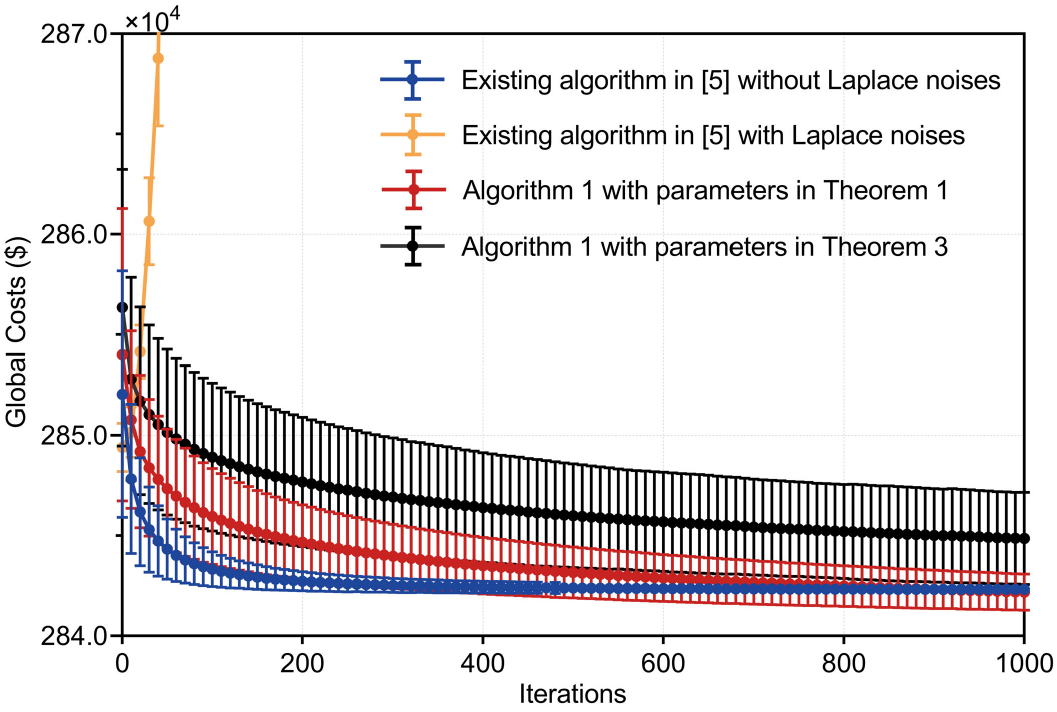}}
\caption{(a) The charging schedule computed by Algorithm~\ref{algorithm1} after $1,000$ iterations. In Fig.~\ref{Theorem1a}, the blue curve represents the average truthful non-EV demand of all EV users in the time interval $\mathcal{K}$ with the shaded area representing the variability (standard deviation-SD) among the ten groups, and the gray curve shows the average charging schedule of EVs in their respective group. (b) Comparison of global costs between the existing distributed aggregative optimization algorithm in~\cite{lixiuxian2021} and Algorithm~\ref{algorithm1} with different parameter settings.}
\label{Theorem1}
\end{figure*}
\begin{theorem}\label{T3}
Under the conditions in Theorem~\ref{T2},  Algorithm~\ref{algorithm1} is $\eta$-truthful. Furthermore, for any $T\in\mathbb{N}^{+}$, the following results always hold:

(i) If $F(x)$ is strongly convex, the rate of decaying sequence $\alpha_{t}$ satisfies $v>1$ and the rates of Laplace-noise variances satisfy $\varsigma_{\zeta}>\max\{0,\frac{1-u}{2}+w_{1}\}$ and $\varsigma_{\xi}>\max\{-\frac{w_{2}}{2},\frac{1}{2}-w_{2}\}$, then we have
\begin{equation}
\mathbb{E}[\|x_{T}-x^*\|^2]\leq \mathcal{O}\left(\frac{L_{g}^{4}L_{f}^2\bar{L}_{f}^2}{\mu|\delta_{2}|}+\frac{1}{\eta^2}\right).\label{3Tresult1}
\end{equation}

(ii) If $F(x)$ is general convex, the rate of decaying sequence $\alpha_{t}$ satisfies $v>1$ and the rates of Laplace-noise variances satisfy $\varsigma_{\zeta}>\max\{0,1-u+w_{1}\}$ and $\varsigma_{\xi}>\max\{-\frac{w_{2}}{2},\frac{1}{2}-w_{2}\}$, then we have
\begin{equation}
\frac{\textstyle\sum_{t=0}^{T}\lambda_{t}\mathbb{E}[F(x_{t})-F(x^{*})]}{\sum_{t=0}^{T}\lambda_{t}}\leq \mathcal{O}\left(\frac{L_{g}^{4}L_{f}^2\bar{L}_{f}^2}{|\delta_{2}|}+\frac{1}{\eta^2}\right).\label{3Tresult2}
\end{equation}

(iii) If $F(x)$ is nonconvex, the rate of decaying sequence $\alpha_{t}$ satisfies $v>\max\{1,u-w_{1}\}$ and the rates of Laplace-noise variances satisfy $\varsigma_{\zeta}>\max\{0,\frac{1-u}{2}+w_{1}\}$ and $\varsigma_{\xi}>\max\{-\frac{w_{2}}{2},\frac{1}{2}-w_{2}\}$, then we have
\begin{equation}
\frac{\sum_{t=0}^{T}\lambda_{t}\mathbb{E}[\|\nabla F(x_{t})\|^2]}{\sum_{t=0}^{T}\lambda_{t}}\leq \mathcal{O}\left(\frac{L_{g}^{4}L_{f}^3\bar{L}_{f}^3}{|\delta_{2}|}+\frac{1}{\eta^2}\right).\label{3Tresult3}
\end{equation}
\end{theorem}
\begin{proof}
See Appendix H.
\end{proof}
Theorem~\ref{T3} proves that our algorithm converges to a neighborhood of a solution to problem~\eqref{primal} to ensure that an agent's untruthful behavior can only gain $\eta$ in loss reduction (i.e., $\eta$-truthfulness). The size of this neighborhood is determined by a constant error $\mathcal{O}(1)$ and an error $\mathcal{O}(\frac{1}{\eta^2})$.  

The optimization error $\mathcal{O}(1)$ is due to the use of summable stepsizes, i.e., the decaying rate of stepsize $\lambda_{t}$ in Algorithm~\ref{algorithm1} satisfies $u>1$, which is key for us to ensure $\eta$-truthfulness in the infinite time horizon.  

The error $\mathcal{O}(\frac{1}{\eta^2})$ characterizes the tradeoff between the level of enabled truthfulness and optimization accuracy, which implies that achieving a higher level of truthfulness comes at the price of decreased convergence accuracy.

\section{Numerical Simulations}\label{SectionVI}
\subsection{Problem settings}
We evaluated the performance of our algorithm using a distributed EV charging problem~\cite{charging2011}. The goal of this problem is to determine a charging schedule for $m=10^{5}$ plug-in EVs that minimizes the global cost defined in~\eqref{primalEV}. We let the charging interval $\mathcal{K}$ cover the $21:00$ on one day to $9:00$ on the next day. The charging interval $\mathcal{K}$ was divided into $13$ time slots, with each of $1$-hour duration. We set the total generation capacity as $C_{\text{tot}}=1.2\times 10^{6}$ kW. We considered $10$ popular EV models, with each model's battery capacity $E^{i}$ and maximal charging rate $[x_{\max}^{i}]_{k}$ for $k=1,\cdots,13$ summarized in Table~\ref{table1}. Following the setup in~\cite{charging2011}, we divided $10^{5}$ EVs into $10$ groups, with each group containing $10\%$ of the EV population and consisting of vehicles from the same EV model. For each EV user in group $i\in[10]$, we configured its basic electricity demand profile in the time interval $\mathcal{K}$ for daily non-charging purposes as $d^{i}$ (referred to as the truthful non-EV demand of each EV user in group $i$), whose elements were independently sampled from a normal distribution with mean equal to the corresponding entry of $\frac{d}{m}$ and variance equal to $0.1$. Here, $d$ represents the load from the Midwest Independent System Operator (MISO) region on a typical summer day in 2024\footnote{Data on global non-EV demand $d$ are from \url{https://www.eia.gov/}.}. The average truthful non-EV demand of all EV users was depicted using the blue curve in Fig.~\ref{Theorem1a}, with the shaded area representing the variability (standard deviation-SD) among the ten groups. We defined the electricity price function as $p(r)\!=\!\text{col}(0.15[r]_{1}^{1.5},\cdots,0.15[r]_{13}^{1.5})$, where $[r]_{k}$ represents the $k$th element of $r$ with $k=1,2,\cdots,13$.  The EVs were connected in a random $4$-regular graph, where each EV user communicates with $4$ randomly assigned neighboring EV users. For the matrix $W$, we set $w_{ij}=0.2$ if EVs $i$
and $j$ are neighbors, and $w_{ij}=0$ otherwise.
\begin{table}[H]
\centering
\caption{Electric Vehicle Specifications\tnote{a}}
\label{table1}
\small
\begin{threeparttable}
\begin{tabular}{|l|c|c|}
\hline
\multirow{2}{*}{\textbf{Model}} & \textbf{Maximal} & \textbf{Battery} \\
& \textbf{charging rate} & \textbf{capacity} \\
\hline
Maserati GranCabrio Folgore & 22 kW & 83 kWh \\
\hline
Audi A6 Avant e-tron  & 11 kW & 75 kWh \\
\hline
Mercedes-Benz EQE 300 & 11 kW & 89 kWh\\
\hline
BMW i5 xDrive40 Sedan & 11 kW & 81 kWh\\
\hline
Kia EV3 Long Range & 11 kW & 78 kWh \\
\hline
Nissan Ariya & 7.4 kW & 87 kWh \\
\hline
Volkswagen ID.4 Pro & 11 kW & 77 kWh\\
\hline
BYD HAN & 11 kW & 85 kWh \\
\hline
Tesla Model Y Performance & 11 kW & 75 kWh \\
\hline
Hongqi E-HS9 84 kWh & 11 kW & 78 kWh \\
\hline
\end{tabular}
\begin{tablenotes}
\footnotesize
\item[a] Data on EV specifications are from~\url{https://ev-database.org/}.
\end{tablenotes}
\end{threeparttable}
\end{table}
\subsection{Evaluation results}
In each iteration, we injected noises $\zeta_{t}^{i}$ and $\xi_{t}^{i}$ into all shared variables $y_{t}^{i}$ and $\psi_{t}^{i}$, with each element of the noise vectors $\zeta_{t}^{i}$ and $\xi_{t}^{i}$ following Laplace distribution. The Laplace-noise variances, stepsizes, and decaying sequences in Algorithm~\ref{algorithm1} were set as $\sigma_{t,\zeta}^{i}\!=\!(t+1)^{-0.57}$, $\sigma_{t,\xi}^{i}\!=\!(t+1)^{-0.79}$, $\lambda_{t}\!=\!(t+1)^{-0.51}$, $\alpha_{t}\!=\!(t+1)^{-0.53}$, $\gamma_{t,1}\!=\!(t+1)^{-0.01}$, and $\gamma_{t,2}\!=\!(t+1)^{-0.01}$, respectively, which satisfy all conditions given in Theorem~\ref{T1}. In the simulation, we ran Algorithm~\ref{algorithm1} for $1,000$ iterations and calculated the average charging schedules of EVs in groups $i=1,\cdots,10$, respectively. The results are summarized in Fig.~\ref{Theorem1a}, which shows that the computed charging schedules for all EVs are valley-filling strategies.
\begin{figure*}
\centering
\subfigure[Truthful and untruthful non-EV demands]{\label{Theorem2a}
\includegraphics[width=0.43\linewidth]{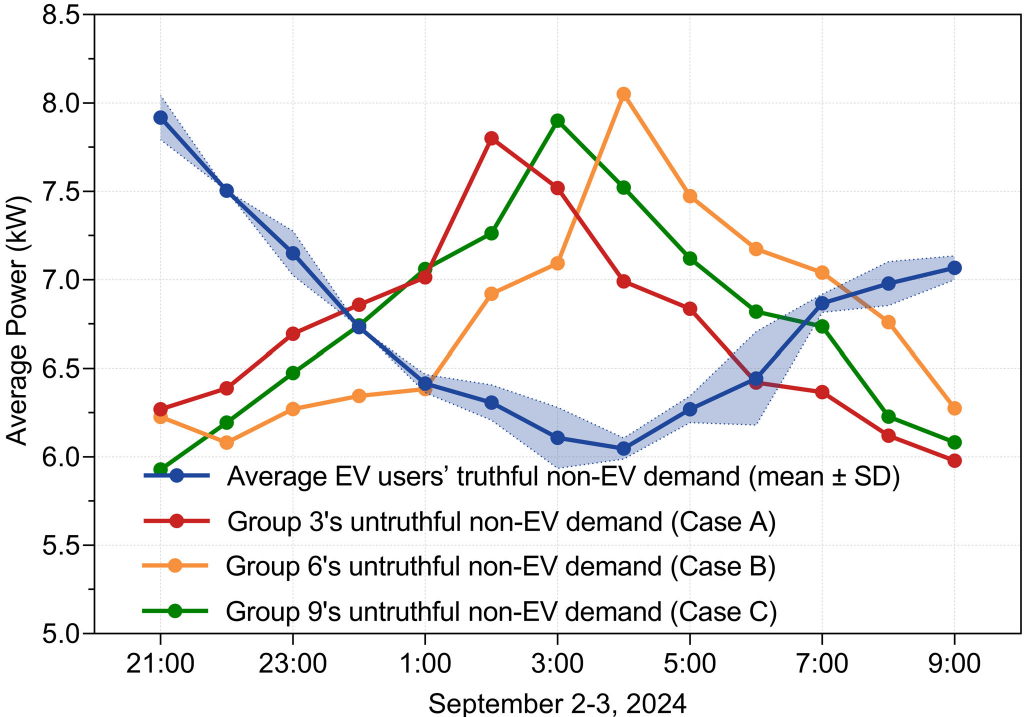}}
\quad
\subfigure[The predicted electricity prices under Cases A , B, and C]{\label{Theorem2b}
\includegraphics[width=0.44\linewidth]{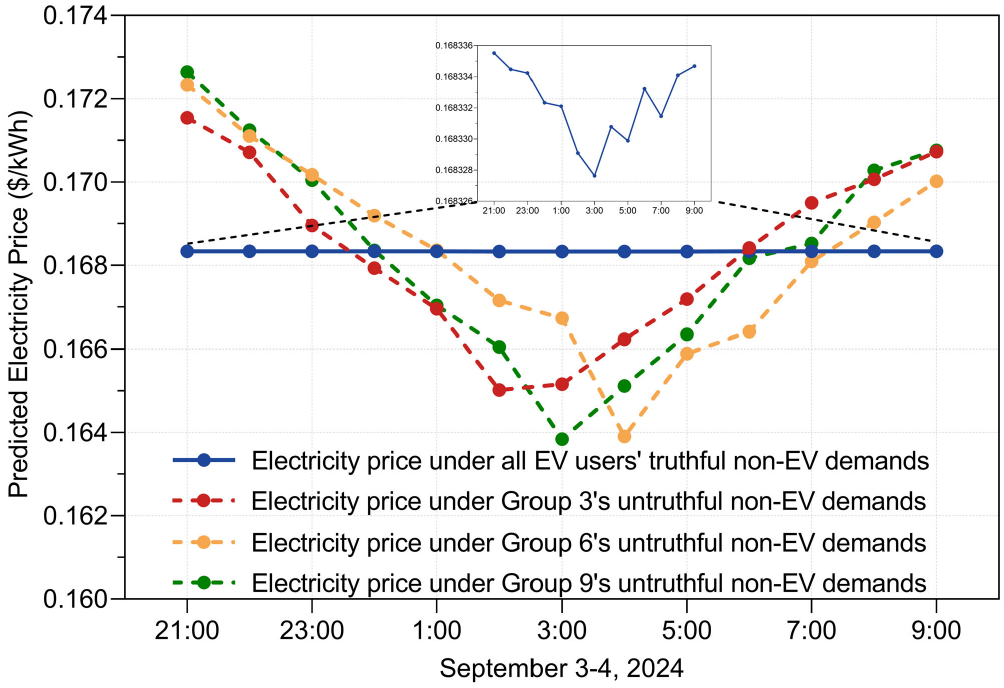}}
\subfigure[The costs of groups $i=3,6,9$ after $4000$ iterations, respectively]{\label{Theorem2c}
\includegraphics[width=0.435\linewidth]{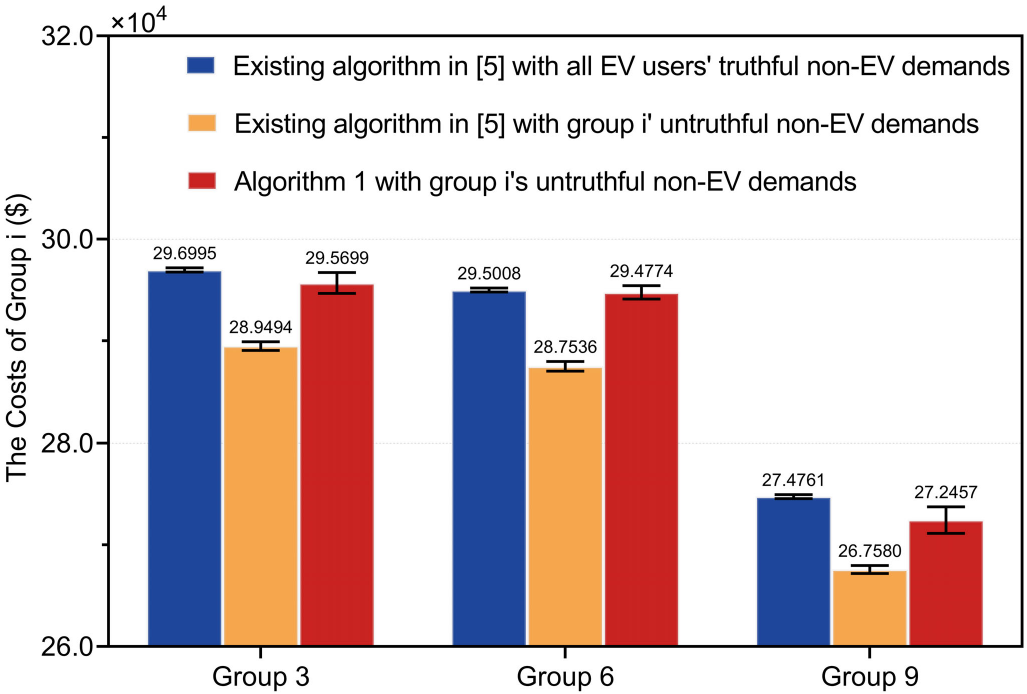}}
\quad
\subfigure[The global costs after $4,000$ iterations]{\label{Theorem2d}
\includegraphics[width=0.445\linewidth]{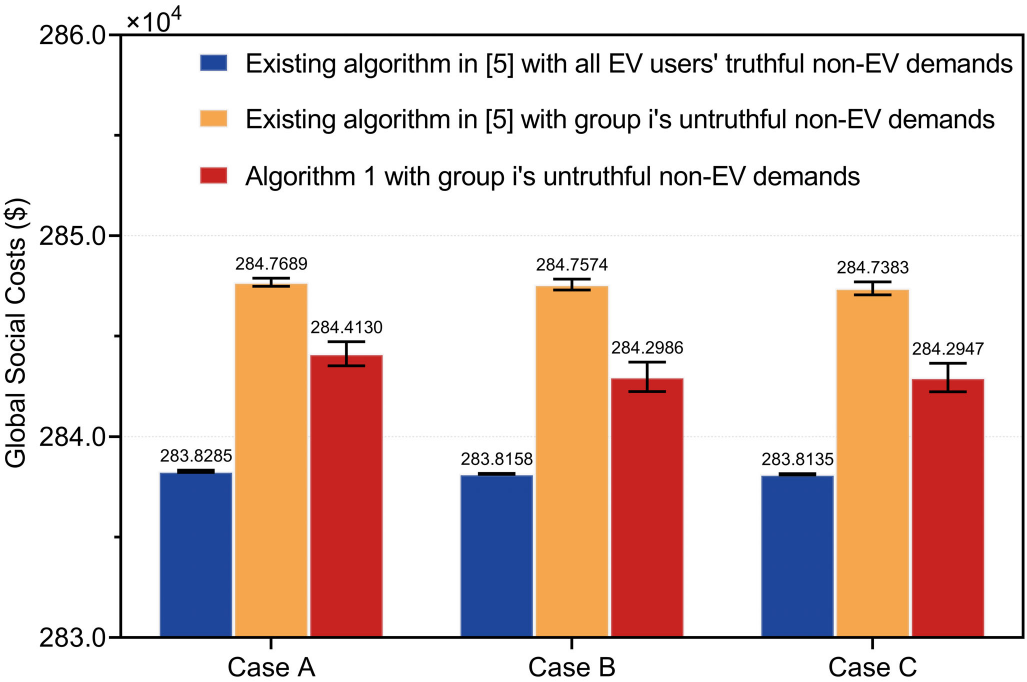}}
\caption{(a) The average truthful non-EV demand of EV users in groups $i=3,6,9$ (i.e., blue curve, with shaded areas representing the variability (standard deviation-SD) among the three groups), and the average untruthful non-EV demand of EV users in group $i$ with $i=3,6,9$, respectively. (b) The predicted electricity price under truthful non-EV demands of all EV users (i.e., blue curve) and the predicted electricity prices under untruthful non-EV demands of EV users in groups $i=3,6,9$, respectively. (c) Comparison of the total costs of EV users in group $i$ under their untruthful non-EV  demands with $i=3,6,9$. (d) Comparison of global costs under untruthful non-EV demands of EV users in groups $i=3,6,9$, respectively.}
\label{Theorem2}
\end{figure*} 

Furthermore, we set the Laplace-noise variances, stepsizes, and decaying sequences as $\sigma_{t,\zeta}^{i}\!=\!(t+1)^{-0.19}$, $\sigma_{t,\xi}^{i}\!=\!(t+1)^{-0.2}$, $\lambda_{t}\!=\!(t+1)^{-3.1}$, $\alpha_{t}\!=\!(t+1)^{-2}$, $\gamma_{t,1}\!=\!(t+1)^{-1.2}$, and $\gamma_{t,2}\!=\!(t+1)^{-0.4}$, respectively, which satisfy all conditions in Theorem~\ref{T3}. For comparison, we used the noise-free distributed aggregative optimization algorithm in~\cite{lixiuxian2021} as a baseline, with its stepsize setting to $\lambda=0.01$, in line with the guideline provided in~\cite{lixiuxian2021}. The comparison of global costs between the baseline and Algorithm~\ref{algorithm1} is summarized in Fig.~\ref{Theorem1b}. The results show that our algorithm with the parameter setting in Theorem~\ref{T1} slightly reduces the convergence rate compared with the existing algorithm in~\cite{lixiuxian2021} despite the presence of Laplace noises. The black curve in Fig.~\ref{Theorem1b} shows that convergence accuracy is compromised when we use summable stepsizes to ensure truthfulness, implying that achieving rigorous $\eta$-truthfulness in an infinite time horizon comes at the cost of reduced convergence accuracy (see Theorem~\ref{T3}). Nevertheless, we also ran the existing algorithm in~\cite{lixiuxian2021} under the same Laplace noises as ours for comparison. The orange curve in Fig.~\ref{Theorem1b} shows that the existing algorithm diverges in the presence of noises, which confirms the robustness of our algorithm to Laplace noises. 

To evaluate the effect of EV users' untruthful behaviors on their individual costs and the global cost, we ran the algorithm in~\cite{lixiuxian2021} and Algorithm~\ref{algorithm1} in three cases (called Case A, Case B, and Case C, respectively), where untruthful non-EV demands were shared by EV users in groups $i=3$, $i=6$, and $i=9$, respectively. The average truthful non-EV demand of EV users in groups $i=3,6,9$ and the average untruthful non-EV demand of EV users in these groups are shown in Fig.~\ref{Theorem2a}. The predicted electricity prices under the three cases after $4,000$ iterations are depicted in Fig.~\ref{Theorem2b}. We let the untruthful EV users in each group $i=3,6,9$ charge during the time interval when the predicted electricity price was lower than the baseline electricity price (blue curve in Fig.~\ref{Theorem2b}). The total costs of untruthful EV users in group $i=3,6,9$ are summarized in Fig.~\ref{Theorem2c}, respectively, which demonstrate that Algorithm~\ref{algorithm1} indeed restricts the cost reduction that untruthful EV users can gain through untruthful behaviors. Finally, the global costs under Cases A, B, and C after $4,000$ iterations are shown in Fig.~\ref{Theorem2d}, respectively, which confirm that our algorithm can effectively mitigate the increase in the global cost caused by untruthful behaviors of EV users.

\section{Conclusions}\label{SectionVII}
In this study, we have introduced a distributed aggregative optimization algorithm that can ensure convergence performance and truthful behaviors of participating agents. To the best of our knowledge, it is the first to successfully ensure truthfulness in a fully distributed setting without requiring the assistance of  any ``centralized" aggregators. This is in sharp contrast to all existing truthfulness approaches for cooperative optimization, which rely on a ``centralized" aggregator to collect private information/decision variables from participating agents. Moreover, we have systematically characterized the convergence rates of our algorithm under nonconvex/convex/strongly convex global objective functions despite the presence of Laplace noises. This extends existing results on distributed aggregative optimization that only consider strongly convex or convex objective functions. In addition, we have quantified the tradeoff between convergence accuracy and the level of enabled truthfulness under different convexity conditions. Simulation results using a coordinated EV charging problem on actual EV charging specifications and real load data confirm the efficiency of the proposed approach.

\section*{Appendix}
For the sake of notational simplicity, we denote
$\hat{y}_{t}^{i}=y_{t}^{i}-\bar{y}_{t}$,
$\hat{\psi}_{t}^{i}=\psi_{t}^{i}-\bar{\psi}_{t}$, $\xi_{t,w}^{i}=\sum_{j\in\mathcal{N}_{i}}w_{ij}\xi_{t}^{j}$, $\hat{\xi}_{t,w}^{i}=\xi_{t,w}^{i}-\bar{\xi}_{t,w}$, $s_{t,w}^{i}=\sum_{j\in\mathcal{N}_{i}}w_{ij}\boldsymbol{P}_{\Omega_{t}}(\zeta_{t}^{i})$,
$\hat{s}_{t,w}^{i}= s_{t,w}^{i}-\bar{s}_{t,w}$, $\sigma_{t,\zeta}=\max_{i\in[m]}\{\sigma_{t,\zeta}^{i}\}$, and $\sigma_{t,\xi}= \max_{i\in[m]}\{\sigma_{t,\xi}^{i}\}$. For any $x\!\in\!\mathbb{R}^{n}$ and $\psi\!\in\!\mathbb{R}^{md}$, we denote $f(x,\psi)\!=\!\sum_{i=1}^{m}f_{i}(x^{i},\psi^{i})\!\in\!\mathbb{R}$,  $\nabla_{1}f(x,\psi)\!=\!\text{col}(\nabla_{1}f_{1}(x^{1},\psi^{1}),\cdots,\nabla_{1}f_{m}(x^{m},\psi^{m}))\!\in\!\mathbb{R}^{n}$, $\nabla_{2}f(x,\psi)=\text{col}(\nabla_{2}f_{1}(x^{1},\psi^{1}),\cdots,\nabla_{2}f_{m}(x^{m},\psi^{m}))\in\mathbb{R}^{md}$, $g(x)=\text{col}(g_{1}(x^{1}),\cdots,g_{m}(x^{m}))\in\mathbb{R}^{md}$, and $\nabla g(x)=\text{diag}(\nabla g_{1}^{\top}(x^{1}),\cdots,\nabla g_{m}^{\top}(x^{m}))\in\mathbb{R}^{n\times md}$.

\subsection{Proof of Lemma~\ref{ybound}}\label{Asection1}
Assumption~\ref{A1}-(i) implies $\|y_{0}^{i}\|\leq L_{f,2}$ at initiation. Assuming $\|y_{t}^{i}\|\leq(1+\textstyle\sum_{p=0}^{t-1}\gamma_{p,1})L_{f,2}$ at the $t$th iteration, we next prove $\|y_{t+1}^{i}\|\leq (1+\sum_{p=0}^{t}\gamma_{p,1})L_{f,2}$ at iteration $t+1$. 

By using $w_{ii}\!=\!-\sum_{j\in{\mathcal{N}_{i}}}w_{ij}$, Line 4 in Algorithm~\ref{algorithm1} implies
\begin{equation}
\begin{aligned}
&\|y_{t+1}^{i}\|\leq\|y_{t}^{i}\!+\!{\textstyle\sum_{j\in{\mathcal{N}_{i}}}}w_{ij}\boldsymbol{P}_{\Omega_{t}}(y_{t}^{j}+\zeta_{t}^{j})\|+\|w_{ii}y_{t}^{i}\|\\
&\quad+\gamma_{t,1}\|\nabla_{2}f_{i}(x_{t}^{i},\psi_{t}^{i})\|\leq (1\!+\!{\textstyle\sum_{p=0}^{t-1}}\gamma_{p,1})L_{f,2}+\gamma_{t,1}L_{f,2}.\nonumber
\end{aligned}
\end{equation}

Using proof by induction, we obtain Lemma~\ref{ybound}.

\subsection{Proof of Lemma~\ref{FLips}}
By using \eqref{A2L1} and Assumption~\ref{A1}, we have
\begin{flalign}
&\|\nabla F(x_{1})-\nabla F(x_{2})\|\leq\bar{L}_{f,1}\|x_{1}-x_{2}\|\nonumber\\
&\quad+\bar{L}_{f,2}L_{g}\|x_{1}-x_{2}\|+\|\nabla g(x_{1})(\boldsymbol{1}_{m}\otimes\nabla_{2}\bar{f}(x_{1},\phi(x_{1})))\nonumber\\
&\quad-\nabla g(x_{2})(\boldsymbol{1}_{m}\otimes\nabla_{2}\bar{f}(x_{2},\phi(x_{2})))\|.\label{A2L2}
\end{flalign}

To characterize the third term on the right hand side of~\eqref{A2L2}, we introduce the following decomposition:
\begin{flalign}
&\|\nabla g(x_{1})(\boldsymbol{1}_{m}\otimes\nabla_{2}\bar{f}(x_{1},\phi(x_{1})))\nonumber\\
&\quad-\nabla g(x_{2})(\boldsymbol{1}_{m}\otimes\nabla_{2}\bar{f}(x_{2},\phi(x_{2})))\|\nonumber\\
&\leq \|\nabla g(x_{1})\left(\boldsymbol{1}_{m}\otimes\left(\nabla_{2}\bar{f}(x_{1},\phi(x_{1}))-\nabla_{2}\bar{f}(x_{2},\phi(x_{2}))\right)\right)\|\nonumber\\
&\quad+\|\left(\nabla g(x_{1})-\nabla g(x_{2})\right)(\boldsymbol{1}_{m}\otimes\nabla_{2}\bar{f}(x_{2},\phi(x_{2})))\|\nonumber\\
&\leq (L_{g}\bar{L}_{f,1}+\bar{L}_{f,2}L_{g}^2+\bar{L}_{g}L_{f})\|x_{1}-x_{2}\|.\label{A2L4}
\end{flalign}
Substituting~\eqref{A2L4} into \eqref{A2L2}, we prove Lemma~\ref{FLips}.

\subsection{Auxiliary lemmas}\label{Asection3}
In this subsection, we introduce some auxiliary lemmas that will be used in our subsequent convergence analysis. 

\begin{lemma}\label{AuxL1}
Denoting $\{\Phi_{t}\}$ as a nonnegative sequence, we have the following results:

(i) if there exist sequences $a_{t}=\frac{a_{0}}{(t+1)^{a}}$ and $b_{t}=\frac{b_{0}}{(t+1)^{b}}$ with some $a_{0}>b-a$, $b_{0}>0$, $1>a>0$, and $b>a$ such that $\Phi_{t+1}\leq (1-a_{t})\Phi_{t}+b_{t}$ holds, then we always have $\Phi_{t}\leq \frac{c_{\Phi}b_{t}}{a_{t}}$ with $c_{\Phi}=\frac{a_{0}}{b_{0}}\max\{\Phi_{0},\frac{b_{0}}{a_{0}-(b-a)}\}.$

(ii) if there exist sequences $a_{t}=\frac{a_{0}}{(t+1)^{a}}$ and $b_{t}=\frac{b_{0}}{(t+1)^{b}}$ with some $a_{0}>0$, $b_{0}>0$, $a>1$, and $b>1$ such that $\Phi_{t+1}\leq (1-a_{t})\Phi_{t}+b_{t}$ holds, then we always have $\Phi_{t}\leq \Phi_{0}e^{-\frac{1-(t+1)^{-(a-1)}}{a-1}}+b_{0}.$
\end{lemma}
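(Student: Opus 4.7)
The plan is to prove both parts by direct manipulation of the recursion $\Phi_{t+1} \leq (1-a_t)\Phi_t + b_t$, but with different strategies dictated by the regimes of $a$.

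For part (i), I would proceed by induction on $t$ to show $\Phi_t \leq c_\Phi b_t/a_t$. The base case is immediate from the definition of $c_\Phi$: since $c_\Phi b_0/a_0 = \max\{\Phi_0, b_0/(a_0-(b-a))\} \geq \Phi_0$. For the inductive step, substituting the hypothesis into the recursion gives $\Phi_{t+1} \leq (1-a_t) c_\Phi (b_t/a_t) + b_t = c_\Phi (b_t/a_t) - (c_\Phi - 1) b_t$, so it suffices to verify $c_\Phi[(b_t/a_t) - (b_{t+1}/a_{t+1})] \leq (c_\Phi - 1) b_t$. Writing $b_t/a_t = (b_0/a_0)(t+1)^{a-b}$ and applying the mean-value inequality $(t+1)^{a-b} - (t+2)^{a-b} \leq (b-a)(t+1)^{a-b-1}$, the desired inequality reduces to $c_\Phi (b-a)(t+1)^{a-1}/a_0 \leq c_\Phi - 1$. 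Because $a<1$ implies $(t+1)^{a-1}\leq 1$, it is enough that $c_\Phi \geq a_0/(a_0-(b-a))$, and this is exactly what the definition of $c_\Phi$ guarantees (both branches of the max satisfy it, since $a_0 > b-a$ makes $a_0/(a_0-(b-a)) \geq 1$).

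For part (ii), I would unroll the recursion directly, yielding $\Phi_t \leq \Phi_0 \prod_{s=0}^{t-1}(1-a_s) + \sum_{s=0}^{t-1} b_s \prod_{p=s+1}^{t-1}(1-a_p)$. Applying the elementary inequality $1-a_s \leq e^{-a_s}$ turns the first product into $\exp(-\sum_{s=0}^{t-1} a_s)$. Using the integral comparison $\sum_{s=0}^{t-1}(s+1)^{-a} \geq \int_1^{t+1} x^{-a}\, dx = (1-(t+1)^{-(a-1)})/(a-1)$ (which is valid because $a>1$), the homogeneous part is bounded by $\Phi_0 \exp(-(1-(t+1)^{-(a-1)})/(a-1))$, matching the claimed first term. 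For the inhomogeneous sum, each product factor is bounded by $1$, leaving $\sum_{s=0}^{t-1} b_s \leq b_0 \sum_{s=0}^{\infty}(s+1)^{-b}$, which converges because $b>1$; this constant is absorbed into the $b_0$ that appears on the right-hand side of the stated bound.

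The routine parts are the algebraic rearrangements and the elementary $1-x \leq e^{-x}$ and integral-comparison bounds. The main obstacle, and the only place where care is really needed, is the induction step in part (i): one must choose the single clean sufficient inequality that collapses after using $a<1$ (i.e., reducing $(t+1)^{a-1}$ to $1$), and then verify that the definition of $c_\Phi$ handles both the initial-condition branch $c_\Phi = (a_0/b_0)\Phi_0$ and the structural branch $c_\Phi = a_0/(a_0 - (b-a))$. Part (ii) carries no real difficulty beyond keeping track of which exponent is being bounded.
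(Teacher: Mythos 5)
Your proposal follows essentially the same route as the paper's proof. For part (i) the paper likewise argues by induction on the equivalent bound $\Phi_t\le C_0(t+1)^{-(b-a)}$ with $C_0=\max\{\Phi_0,\tfrac{b_0}{a_0-(b-a)}\}$, uses the same mean-value estimate of $(t+1)^{a-b}-(t+2)^{a-b}$, and exploits $a<1$ exactly as you do; your sufficient condition $c_\Phi\ge a_0/(a_0-(b-a))$ is the paper's condition $C_0(b-a)\le C_0a_0-b_0$ in disguise, so this part is correct. One wording slip: it is not true that \emph{both} branches of the max satisfy $c_\Phi\ge a_0/(a_0-(b-a))$ (the branch $(a_0/b_0)\Phi_0$ need not); all you actually use is that the max dominates the second branch, which is enough, so the step survives.

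For part (ii) you again mirror the paper (unroll the recursion, apply $1-x\le e^{-x}$, compare sums with integrals), but two constants are handled loosely, and in both cases the paper's own proof shares the defect. First, the homogeneous term is $\Phi_0\exp\big(-a_0\sum_{k=1}^{t}k^{-a}\big)$, so silently dropping the factor $a_0$ from the exponent (as you and the paper both do) is only legitimate when $a_0\ge1$. Second, your final sentence claiming that $\sum_{s\ge0}b_s$ is ``absorbed into the $b_0$'' does not prove the stated constant: that sum equals $b_0\sum_{k\ge1}k^{-b}\le\tfrac{b}{b-1}\,b_0$, which is strictly larger than $b_0$, so what your argument actually establishes is the bound with $\tfrac{b}{b-1}b_0$ in place of $b_0$. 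The paper reaches exactly $b_0$ only via a sign slip in its estimate of $\sum_{p=0}^{t}b_p$ in Appendix~B (the correct computation gives $b_0+\tfrac{b_0}{b-1}\big(1-(t+1)^{1-b}\big)$, not $b_0-\tfrac{b_0}{b-1}\big((t+1)^{1-b}+1\big)$), so the constant genuinely obtainable by this route is $\tfrac{b}{b-1}b_0$. This discrepancy is immaterial to how the lemma is used later (only the order in $t$ matters), but if you want the statement as written you must either tighten the tail estimate or state the constant as $\tfrac{b}{b-1}b_0$ rather than wave it into $b_0$.
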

\begin{proof}
(i) We first prove Lemma~\ref{AuxL1}-(i) using proof by induction. 

We denote $C_{0}=\max\{\Phi_{0},\frac{b_{0}}{a_{0}-(b-a)}\}$, which implies $\Phi_{0}\leq C_{0}$ at initiation. Assuming $\Phi_{t} \leq \frac{C_{0}}{(t+1)^{b-a}}$ at the $t$th iteration, we proceed to prove $\Phi_{t+1} \leq \frac{C_{0}}{(t+2)^{b-a}}$ at the $(t+1)$th iteration. 

By using the relation $\Phi_{t+1}\leq (1-a_{t})\Phi_{t}+b_{t}$, we have
\begin{equation}
\begin{aligned}
&\Phi_{t+1}\leq \textstyle\frac{C_{0}}{(t+1)^{b-a}}-\frac{C_{0}a_{0}}{(t+1)^{b}}+\frac{b_{0}}{(t+1)^{b}}\\
&\leq \textstyle\frac{C_{0}}{(t+2)^{b-a}}+\left(\frac{C_{0}}{(t+1)^{b-a}}-\frac{C_{0}}{(t+2)^{b-a}}-\frac{C_{0}a_{0}-b_{0}}{(t+1)^{b}}\right).\label{AL1}
\end{aligned}
\end{equation}
Using the mean value theorem, we have $\frac{C_{0}}{(t+1)^{b-a}}-\frac{C_{0}}{(t+2)^{b-a}}\leq \frac{C_{0}(b-a)}{(t+1)^{b-a+1}}<\frac{C_{0}(b-a)}{(t+1)^{b}}$. According to the definition of $C_{0}$, we obtain $C_{0}(b-a)\leq C_{0}a_{0}-b_{0}$, and hence, the inequality $\frac{C_{0}(b-a)}{(t+1)^{b}}-\frac{C_{0}a_{0}-b_{0}}{(t+1)^{b}}\leq 0$ holds for all $t\in\mathbb{N}$. Further using~\eqref{AL1}, we arrive at $\Phi_{t+1}\leq \frac{C_{0}}{(t+2)^{b-a}}$, which implies Lemma~\ref{AuxL1}-(i).

(ii) Iterating $\Phi_{t+1}\leq (1-a_{t})\Phi_{t}+b_{t}$ from $0$ to $t$, we have
\begin{flalign}
\Phi_{t+1}&\leq {\textstyle\prod_{p=0}^{t}}(1-a_{p})\Phi_{0}+{\textstyle\sum_{p=1}^{t}\prod_{q=p}^{t}}(1-a_{q})b_{p-1}+b_{t}\nonumber\\
&\leq \Phi_{0}e^{-\sum_{p=0}^{t}a_{p}}+{\textstyle\sum_{p=0}^{t}}b_{p},\label{AL3}
\end{flalign}
where we have used relations $\prod_{p=0}^{t}(1-a_{p})\leq e^{-\sum_{p=0}^{t}a_{p}}$ and $\prod_{q=p}^{t}(1-a_{q})<1$ in the last inequality.

According to the definition of $a_{t}$, we have that  for any $a\!>\!1$, $\sum_{p=0}^{t}\!\frac{a_{0}}{(t+1)^{a}}\geq\frac{1-(t+2)^{1-a}}{a-1}$ holds, which implies
\begin{equation}
\Phi_{0}e^{-\sum_{p=0}^{t}a_{p}}\leq \Phi_{0}e^{-\frac{1-(t+2)^{-(a-1)}}{a-1}}.\label{AL4}
\end{equation}

According to the definition of $b_{t}$, we have that for any $b>1$, the second term on the right hand side of~\eqref{AL3} satisfies 
\begin{equation}
\textstyle\sum_{p=0}^{t}b_{p}\leq b_{0}+\int_{1}^{t+1}\frac{b_{0}}{x^{b}}dx\leq b_{0}-\frac{b_{0}}{b-1}((t+1)^{1-b}+1).\label{AL5}
\end{equation}

Substituting~\eqref{AL4} and~\eqref{AL5} into~\eqref{AL3} and omitting the negative term $-\frac{b_{0}}{b-1}((t\!+\!1)^{1-b}\!+\!1)$, we obtain Lemma~\ref{AuxL1}-(ii).
\end{proof}

\begin{lemma}\label{xdiff}
Under Assumptions~\ref{A1} and \ref{A2}, the following inequality holds for Algorithm~\ref{algorithm1}:
\begin{equation}
\textstyle\mathbb{E}[\|x_{t+1}-x_{t}\|^2]\leq \frac{c_{x1}\lambda_{t}^2}{\gamma_{t,1}^2}\mathbb{E}[\|\hat{y}_{t+1}-\hat{y}_{t}\|^2]+{\textstyle\frac{c_{x2}\lambda_{t}^2\sigma_{t,\zeta}^2}{\gamma_{t,1}^2}}+c_{x3}\lambda_{t}^2,\label{xdiffresult}
\end{equation}
with $c_{x1}=4L_{g}^2$, $c_{x2}=4mL_{g}^2$, and $c_{x3}=2mL_{f}^2(2L_{g}^2+1)$.
\end{lemma}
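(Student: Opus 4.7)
The plan is to bound $\|x_{t+1}^i - x_t^i\|^2$ via the non-expansiveness of $\boldsymbol{P}_{\mathcal{X}_i}$ applied to Line 5 of Algorithm~\ref{algorithm1}, then decompose the gradient-tracking increment $y_{t+1}^i - y_t^i$ into its consensus deviation $\hat{y}_{t+1}^i - \hat{y}_t^i$ and its average part $\bar{y}_{t+1} - \bar{y}_t$, and handle the latter through the structure of the mixing matrix $W$.

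First, since $x_t^i \in \mathcal{X}_i$ implies $\boldsymbol{P}_{\mathcal{X}_i}(x_t^i) = x_t^i$, the $1$-Lipschitzness of the projection gives $\|x_{t+1}^i - x_t^i\|^2 \leq \lambda_t^2 \|\nabla_1 f_i(x_t^i,\psi_t^i) + \nabla g_i(x_t^i)(y_{t+1}^i - y_t^i)/\gamma_{t,1}\|^2$. Applying $(a+b)^2 \leq 2a^2 + 2b^2$ together with the Lipschitz bounds $\|\nabla_1 f_i\| \leq L_f$ and $\|\nabla g_i\| \leq L_g$ from Assumption~\ref{A1}, and summing over $i \in [m]$, yields $\|x_{t+1}-x_t\|^2 \leq 2m\lambda_t^2 L_f^2 + (2L_g^2\lambda_t^2/\gamma_{t,1}^2)\|y_{t+1}-y_t\|^2$. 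Next, I would decompose $y_{t+1}^i - y_t^i = (\hat{y}_{t+1}^i - \hat{y}_t^i) + (\bar{y}_{t+1} - \bar{y}_t)$ and apply Young's inequality once more to obtain $\|y_{t+1}-y_t\|^2 \leq 2\|\hat{y}_{t+1}-\hat{y}_t\|^2 + 2m\|\bar{y}_{t+1}-\bar{y}_t\|^2$. Multiplying the first summand by $2L_g^2\lambda_t^2/\gamma_{t,1}^2$ produces the required coefficient $c_{x1}=4L_g^2$ on $\mathbb{E}[\|\hat{y}_{t+1}-\hat{y}_t\|^2]$.

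For the remaining $m\|\bar{y}_{t+1}-\bar{y}_t\|^2$ term, I would multiply Line 4 of Algorithm~\ref{algorithm1} by $\boldsymbol{1}^\top/m$ and exploit $\boldsymbol{1}^\top W = \boldsymbol{0}^\top$ from Assumption~\ref{A2} to derive the identity $\bar{y}_{t+1} - \bar{y}_t = \bar{s}_{t,w} + \gamma_{t,1}\overline{\nabla_2 f(x_t,\psi_t)}$. The deterministic part is bounded by $\|\overline{\nabla_2 f(x_t,\psi_t)}\|^2\leq L_f^2$ (Assumption~\ref{A1}) and, when combined with the $2m\lambda_t^2 L_f^2$ from the previous paragraph, produces the $c_{x3}=2mL_f^2(2L_g^2+1)$ constant. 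The stochastic part $\bar{s}_{t,w}$ is handled using Lemma~\ref{ybound}: since $y_t^j\in\Omega_t$, the non-expansiveness of $\boldsymbol{P}_{\Omega_t}$ yields $\|s_t^j\|=\|\boldsymbol{P}_{\Omega_t}(y_t^j+\zeta_t^j)-\boldsymbol{P}_{\Omega_t}(y_t^j)\|\leq\|\zeta_t^j\|$, so Assumption~\ref{A3} gives $\mathbb{E}[\|s_t^j\|^2]\leq(\sigma_{t,\zeta}^j)^2\leq\sigma_{t,\zeta}^2$. A careful accounting of the $W$-weighted averaging then converts $\mathbb{E}[\|\bar{s}_{t,w}\|^2]$ into a quantity of order $\sigma_{t,\zeta}^2$, yielding $c_{x2}=4mL_g^2$.

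The main obstacle is the constant book-keeping through the two cascading Young-inequality splittings, and in particular producing a sharp enough bound on $\mathbb{E}[\|\bar{s}_{t,w}\|^2]$ to avoid an extra factor of $m$: once the column-sum-zero identity $\boldsymbol{1}^\top W=\boldsymbol{0}^\top$ is exploited, $\bar{s}_{t,w}$ can be re-expressed purely in terms of the diagonal weights $w_{jj}$ (with $|w_{jj}|<1$ from Assumption~\ref{A2}) acting on mutually independent Laplace increments, so the averaging over $m$ agents collapses a naive $\sum_j\|s_t^j\|^2$ into a single $\sigma_{t,\zeta}^2$-sized term rather than an $m\sigma_{t,\zeta}^2$-sized one. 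The rest of the argument is a direct assembly of the pieces above.
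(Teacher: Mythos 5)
Your route is essentially the paper's: non-expansiveness of $\boldsymbol{P}_{\mathcal{X}}$ on Line 5 plus the Lipschitz bounds give $\mathbb{E}[\|x_{t+1}-x_t\|^2]\le 2m\lambda_t^2L_f^2+\frac{2L_g^2\lambda_t^2}{\gamma_{t,1}^2}\mathbb{E}[\|y_{t+1}-y_t\|^2]$, and the remaining work is to express the average part of $y_{t+1}-y_t$ through $\boldsymbol{1}^{\top}W=\boldsymbol{0}^{\top}$ (the identity $\bar{y}_{t+1}-\bar{y}_t=\bar{s}_{t,w}+\gamma_{t,1}\nabla_2\bar{f}(x_t,\psi_t)$, which is the paper's Eq.~\eqref{X4}) and to bound $\mathbb{E}[\|\bar{s}_{t,w}\|^2]\le\sigma_{t,\zeta}^2$ via Lemma~\ref{ybound} and non-expansiveness of $\boldsymbol{P}_{\Omega_t}$. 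The paper organizes this slightly differently — it centers $y_{t+1}-y_t$ at $\gamma_{t,1}(\boldsymbol{1}_m\otimes\nabla_2\bar f(x_t,\psi_t))$, notes the centered quantity equals $\hat{y}_{t+1}-\hat{y}_t$ plus $\boldsymbol{1}_m\otimes\bar{s}_{t,w}$, and applies Young's inequality once — but the ingredients are the same.

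The one genuine issue is your constant bookkeeping. With your order of operations (Young to split $\hat{y}$-part from $\bar{y}$-part, then Young again on $\bar{y}_{t+1}-\bar{y}_t=\bar{s}_{t,w}+\gamma_{t,1}\overline{\nabla_2 f}$), you get $c_{x2}=8mL_g^2$ and $c_{x3}=2mL_f^2(4L_g^2+1)$, not the stated $4mL_g^2$ and $2mL_f^2(2L_g^2+1)$; recovering the stated values along your path would require dropping the cross term $\mathbb{E}[\langle\bar{s}_{t,w},\gamma_{t,1}\overline{\nabla_2 f}(x_t,\psi_t)\rangle]$, which is not legitimate as stated because the projection makes $s_t^j$ biased — it is \emph{not} conditionally zero-mean, so neither independence of the $\zeta_t^j$ nor measurability of $x_t,\psi_t$ kills it (for the same reason, your "mutually independent Laplace increments'' remark about $\bar{s}_{t,w}$ is unnecessary: Jensen's inequality together with $|w_{jj}|\le 1$ already yields $\mathbb{E}[\|\bar{s}_{t,w}\|^2]\le\sigma_{t,\zeta}^2$, with no independence or zero-mean argument). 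The fix is simple and stays inside your framework: the first split is exactly orthogonal, since $\sum_{i=1}^m(\hat{y}_{t+1}^i-\hat{y}_t^i)=0$ gives $\|y_{t+1}-y_t\|^2=\|\hat{y}_{t+1}-\hat{y}_t\|^2+m\|\bar{y}_{t+1}-\bar{y}_t\|^2$ with no factor of $2$; spending your single Young inequality on $\bar{s}_{t,w}+\gamma_{t,1}\overline{\nabla_2 f}$ then yields $c_{x2}=4mL_g^2$ and $c_{x3}=2mL_f^2(2L_g^2+1)$ exactly (and in fact the sharper $2L_g^2$ in place of $c_{x1}=4L_g^2$).
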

\begin{proof}
Using the projection inequality and Line 5 in Algorithm~\ref{algorithm1}, we have
\begin{equation}
\textstyle\|x_{t+1}-x_{t}\|\leq \lambda_{t}\|\nabla_{1}f(x_{t},\psi_{t})+\nabla g(x_{t})\frac{y_{t+1}-y_{t}}{\gamma_{t,1}}\|.\nonumber
\end{equation}
Assumption~\ref{A1} implies
\begin{equation}
\mathbb{E}[\|x_{t+1}-x_{t}\|^2]\leq 2m\lambda_{t}^2L_{f}^2+{\textstyle\frac{2\lambda_{t}^2}{\gamma_{t,1}^2}}L_{g}^2\mathbb{E}[\|y_{t+1}-y_{t}\|^2].\label{X2}
\end{equation}

We proceed to characterize the last term of~\eqref{X2}. 

According to Line 4 in Algorithm~\ref{algorithm1}, we obtain
\begin{flalign}
&y_{t+1}^{i}=y_{t}^{i}+{\textstyle\sum_{j\in\mathcal{N}_{i}}}w_{ij}(\boldsymbol{P}_{\Omega_{t}}(y_{t}^{j}+\zeta_{t}^{j})-\boldsymbol{P}_{\Omega_{t}}(y_{t}^{i}+\zeta_{t}^{i}))\nonumber\\
&\quad+{\textstyle\sum_{j\in{\mathcal{N}_{i}}}}w_{ij}(\boldsymbol{P}_{\Omega_{t}}(y_{t}^{i}+\zeta_{t}^{i})-y_{t}^{i})+\gamma_{t,1}\nabla_{2}f_{i}(x_{t}^{i},\psi_{t}^{i}).\nonumber
\end{flalign}
By using the relation $\boldsymbol{1}_{m}^{\top}W=\boldsymbol{0}_{m}^{\top}$ and Lemma~\ref{ybound}, we have
\begin{equation}
\gamma_{t,1}\nabla_{2}\bar{f}(x_{t},\psi_{t})=\bar{y}_{t+1}-\bar{y}_{t}-\bar{s}_{t,w}.\label{X4}
\end{equation}

By using~\eqref{X4} and the definition $\hat{y}_{t}^{i}=y_{t}^{i}-\bar{y}_{t}$, we obtain
\begin{equation}
\begin{aligned}
&\mathbb{E}[\|y_{t+1}^{i}-y_{t}^{i}-\gamma_{t,1}\nabla_{2}\bar{f}(x_{t},\psi_{t})\|^2]\\
&=\mathbb{E}[\|\hat{y}_{t+1}^{i}-\hat{y}_{t}^{i}+\bar{s}_{t,w}\|^2]\leq\mathbb{E}[\|\hat{y}_{t+1}^{i}-\hat{y}_{t}^{i}\|^2]+\sigma_{t,\zeta}^2,\label{X5}
\end{aligned}
\end{equation}
where we have used the relation $\mathbb{E}[\|\bar{s}_{t,w}\|^2]\leq \sigma_{t,\zeta}^{2}$ in the last inequality based on Assumptions~\ref{A2} and the relation $\mathbb{E}[\|\zeta_{t}^{i}\|^2]=(\sigma_{t,\zeta}^{i})^2$ based on the ``Noise parameter setting".

According to~\eqref{X5} and Assumption~\ref{A1}-(i), we have
\begin{equation}
\begin{aligned}
&\mathbb{E}[\|y_{t+1}-y_{t}\|^2]\\
&\leq 2\mathbb{E}[\|y_{t+1}-y_{t}-\gamma_{t,1}(\boldsymbol{1}_{m}\otimes\nabla_{2}\bar{f}(x_{t},\psi_{t}))\|^2]\\
&\quad+2\mathbb{E}[\|\gamma_{t,1}(\boldsymbol{1}_{m}\otimes\nabla_{2}\bar{f}(x_{t},\psi_{t}))\|^2]\\
&\leq 2\mathbb{E}[\|\hat{y}_{t+1}-\hat{y}_{t}\|^2]+2m\sigma_{t,\zeta}^2+2mL_{f}^2\gamma_{t,1}^2.\label{X6}
\end{aligned}
\end{equation}

Substituting~\eqref{X6} into~\eqref{X2}, we arrive at~\eqref{xdiffresult}.
\end{proof}
\begin{lemma}\label{psig}
Under Assumptions~\ref{A1} and \ref{A2}, if $1>\alpha_{0}>2\varsigma_{\xi}+2w_{2}-v$, $1>v$, and $\varsigma_{\xi}>\frac{v}{2}-w_{2}$ hold, then we have
\begin{equation}
\mathbb{E}[\|\psi_{t}-\boldsymbol{1}_{m}\otimes\phi(x_{t})\|^2]\leq 2\mathbb{E}[\|\hat{\psi}_{t}\|^2]+{\textstyle\frac{c_{\bar{\psi}1}\gamma_{t,2}^2\sigma_{t,\xi}^{2}}{\alpha_{t}}},\label{psigresult}
\end{equation}
with $c_{\bar{\psi}1}=\frac{2m\alpha_{0}}{\alpha_{0}-(2w_{2}+2\varsigma_{\xi}-v)}$.
\end{lemma}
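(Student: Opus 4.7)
\vspace{0.5em}
\noindent\textbf{Proof proposal for Lemma~\ref{psig}.}

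The plan is to split the quantity $\|\psi_{t}-\boldsymbol{1}_{m}\otimes\phi(x_{t})\|^2$ into a consensus-error part (captured by $\|\hat{\psi}_{t}\|^2$) and an average-tracking-error part, and then derive a scalar recursion for the latter that fits the form of Lemma~\ref{AuxL1}-(i). Concretely, I would first add and subtract $\boldsymbol{1}_{m}\otimes\bar{\psi}_{t}$ to write $\psi_{t}^{i}-\phi(x_{t})=\hat{\psi}_{t}^{i}+(\bar{\psi}_{t}-\bar{g}(x_{t}))$ (using $\phi(x_{t})=\bar{g}(x_{t})$), apply the Young-type inequality $\|a+b\|^2\leq 2\|a\|^2+2\|b\|^2$, and sum over $i\in[m]$ to obtain
\begin{equation}
\|\psi_{t}-\boldsymbol{1}_{m}\otimes\phi(x_{t})\|^2\leq 2\|\hat{\psi}_{t}\|^2+2m\|\bar{\psi}_{t}-\bar{g}(x_{t})\|^2.\nonumber
\end{equation}
This reduces the problem to bounding $\mathbb{E}[\|\bar{\psi}_{t}-\bar{g}(x_{t})\|^2]$.

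Next, averaging Line 7 of Algorithm~\ref{algorithm1} over $i$ and invoking $\boldsymbol{1}^{\top}W=\boldsymbol{0}^{\top}$ (so that the consensus terms cancel), I would obtain the one-step recursion
\begin{equation}
\bar{\psi}_{t+1}-\bar{g}(x_{t+1})=(1-\alpha_{t})\bigl(\bar{\psi}_{t}-\bar{g}(x_{t})\bigr)+\gamma_{t,2}\bar{\xi}_{t,w}.\nonumber
\end{equation}
Taking expected squared norms, using the fact that $\bar{\xi}_{t,w}$ is zero-mean and independent of past iterates under Assumption~\ref{A3} (so the cross term vanishes), bounding $\mathbb{E}[\|\bar{\xi}_{t,w}\|^2]\leq \sigma_{t,\xi}^{2}$ via Assumptions~\ref{A2}--\ref{A3}, and using $(1-\alpha_{t})^{2}\leq 1-\alpha_{t}$ (valid since $\alpha_{t}\in(0,1)$ for all $t$ under $1>v$ and $\alpha_{0}<1$), I get
\begin{equation}
\mathbb{E}[\|\bar{\psi}_{t+1}-\bar{g}(x_{t+1})\|^2]\leq (1-\alpha_{t})\,\mathbb{E}[\|\bar{\psi}_{t}-\bar{g}(x_{t})\|^2]+\gamma_{t,2}^{2}\sigma_{t,\xi}^{2}.\nonumber
\end{equation}

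Finally, I would apply Lemma~\ref{AuxL1}-(i) with $a_{t}=\alpha_{t}=\alpha_{0}/(t+1)^{v}$ and $b_{t}=\gamma_{2}^{2}\sigma_{\xi}^{2}/(t+1)^{2w_{2}+2\varsigma_{\xi}}$. The three hypotheses of Lemma~\ref{AuxL1}-(i) translate exactly into the three assumed inequalities: $1>v>0$ gives $1>a>0$; $2\varsigma_{\xi}+2w_{2}>v$ gives $b>a$; and $\alpha_{0}>2\varsigma_{\xi}+2w_{2}-v$ gives $a_{0}>b-a$. Lemma~\ref{AuxL1}-(i) then yields $\mathbb{E}[\|\bar{\psi}_{t}-\bar{g}(x_{t})\|^2]\leq c_{\Phi}\,b_{t}/a_{t}$ with $c_{\Phi}=\tfrac{\alpha_{0}}{\gamma_{2}^{2}\sigma_{\xi}^{2}}\max\{\mathbb{E}[\|\bar{\psi}_{0}-\bar{g}(x_{0})\|^2],\,\tfrac{\gamma_{2}^{2}\sigma_{\xi}^{2}}{\alpha_{0}-(2w_{2}+2\varsigma_{\xi}-v)}\}$, and multiplying through by $2m$ and combining with the decomposition gives precisely the advertised constant $c_{\bar{\psi}1}$ in~\eqref{psigresult}.

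The derivation is largely routine; the only place where care is needed is the cancellation that produces the recursion for $\bar{\psi}_{t}-\bar{g}(x_{t})$. The consensus terms in Line 7 vanish only because of $\boldsymbol{1}^{\top}W=\boldsymbol{0}^{\top}$, and the $-(1-\alpha_{t})g_{i}(x_{t})+g_{i}(x_{t+1})$ pair must be grouped with the $(1-\alpha_{t})\psi_{t}^{i}$ term to produce the clean factor $(1-\alpha_{t})(\bar{\psi}_{t}-\bar{g}(x_{t}))$ rather than two separate terms — this is the one step where a bookkeeping mistake would destroy the contraction. The rest is a direct application of Lemma~\ref{AuxL1}-(i), and I do not anticipate any serious obstacle.
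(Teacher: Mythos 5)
Your proposal is correct and follows essentially the same route as the paper: the same decomposition into $2\|\hat{\psi}_{t}\|^2+2m\|\bar{\psi}_{t}-\bar{g}(x_{t})\|^2$, the same averaged recursion $\mathbb{E}[\|\bar{\psi}_{t+1}-\bar{g}(x_{t+1})\|^2]\leq(1-\alpha_{t})\mathbb{E}[\|\bar{\psi}_{t}-\bar{g}(x_{t})\|^2]+\gamma_{t,2}^{2}\sigma_{t,\xi}^{2}$ obtained from Line 7 via $\boldsymbol{1}^{\top}W=\boldsymbol{0}^{\top}$, and the same application of Lemma~\ref{AuxL1}-(i) yielding the constant $c_{\bar{\psi}1}$. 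Your write-up just spells out the averaging and cross-term steps that the paper leaves implicit.
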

\begin{proof}
According to the definition of $\phi(x)$ in problem~\eqref{primal}, we have $\phi(x_{t})=\bar{g}(x_{t})$, which implies
\begin{equation}
\|\psi_{t}-\boldsymbol{1}_{m}\otimes\phi(x_{t})\|^2\leq 2\|\hat{\psi}_{t}\|^2+2m\|\bar{\psi}_{t}-\bar{g}(x_{t})\|^2.\label{Pg1}
\end{equation}

Line 7 in Algorithm~\ref{algorithm1} implies that the second term on the right hand side of~\eqref{Pg1} satisfies
\begin{equation}
\begin{aligned}
&\mathbb{E}[\|\bar{\psi}_{t+1}-\bar{g}(x_{t+1})\|^2]\\
&\leq(1-\alpha_{t})\mathbb{E}[\|\bar{\psi}_{t}-\bar{g}(x_{t})\|^2]+\gamma_{t,2}^{2}\sigma_{t,\xi}^{2},\label{Pg3}
\end{aligned}
\end{equation}

Combining Lemma~\ref{AuxL1}-(i) with~\eqref{Pg3}, we obtain
\begin{equation}
2m\mathbb{E}[\|\bar{\psi}_{t}-\bar{g}(x_{t})\|^2]
\leq {\textstyle\frac{c_{\bar{\psi}1}\gamma_{t,2}^2\sigma_{t,\xi}^{2}}{\alpha_{t}}},\label{Pg4}
\end{equation}
with $c_{\bar{\psi}1}=\frac{2m\alpha_{0}}{\alpha_{0}-(2w_{2}+2\varsigma_{\xi}-v)}$.

Substituting~\eqref{Pg4} into~\eqref{Pg1}, we arrive at~\eqref{psigresult}.
\end{proof}
\begin{lemma}\label{hatydiff}
Under Assumptions~\ref{A1} and~\ref{A2}, if $1>\alpha_{0}>2\varsigma_{\xi}+2w_{2}-v$, $1>v$, and $\varsigma_{\xi}>\frac{v}{2}-w_{2}$ hold, then the following inequality holds for Algorithm~\ref{algorithm1}:
\begin{flalign}
&\mathbb{E}[\|\hat{y}_{t+1}-\hat{y}_{t}\|^2]\leq (1-|\delta_{2}|+c_{y1}\lambda_{t-1}^2)\mathbb{E}[\|\hat{y}_{t}-\hat{y}_{t-1}\|^2]\nonumber\\
&\quad+(c_{y2}\gamma_{t-1,2}^2+c_{y3}\alpha_{t-1}^2)\gamma_{t-1,1}^2\mathbb{E}[\|\hat{\psi}_{t-1}\|^2]\!+\!\Psi_{t,y},\label{hatydiffresult}
\end{flalign}
where the constants $c_{y1}$ to $c_{y3}$ are given by $c_{y1}=24(1+\frac{1}{|\delta_{2}|})\bar{L}_{f}^2(3L_{g}^4+L_{g}^2)$, $c_{y2}=18|\delta_{2}|(|\delta_{2}|+1)\bar{L}_{f}^2$, and $c_{y3}=72(1+\frac{1}{|\delta_{2}|})\bar{L}_{f}^2$, respectively, and the term $\Psi_{t,y}$ is given by
\begin{equation}
\begin{aligned}
&\Psi_{t,y}=c_{y4}\sigma_{t,\zeta}^2+c_{y5}\sigma_{t-1,\zeta}^2+c_{y6}\|\gamma_{t,1}-\gamma_{t-1,1}\|^2\\
&\quad+c_{y7}\gamma_{t-1,1}^2\lambda_{t-1}^2\!+\!c_{y8}\gamma_{t-1,1}^2\gamma_{t-1,2}^2\sigma_{t-1,\xi}^2\!+\!c_{y9}\gamma_{t-1,1}^2\alpha_{t-1}^2,\nonumber
\end{aligned}
\end{equation}
with $c_{y4}=4m$, $c_{y5}=4m+24m\lambda_{0}^2(1+\frac{1}{|\delta_{2}|})\bar{L}_{f}^2(3L_{g}^4+L_{g}^2)$, $c_{y6}=12m(1+\frac{1}{|\delta_{2}|})L_{f}^2$, $c_{y7}=12m(1+\frac{1}{|\delta_{2}|})\bar{L}_{f}^2L_{f}^2(6L_{g}^4+5L_{g}^2+1)$, $c_{y8}=6(1+\frac{12m\alpha_{0}^2}{\alpha_{0}-(2w_{2}+2\varsigma_{\xi}-v)})(1+\frac{1}{|\delta_{2}|})\bar{L}_{f}^2$, and $c_{y9}=144mD_{g}^2(1+\frac{1}{|\delta_{2}|})\bar{L}_{f}^2$.
\end{lemma}
\begin{proof}
Using Line 4 in Algorithm~\ref{algorithm1}, we have
\begin{equation}
\begin{aligned}
\hat{y}_{t+1}^{i}&=\hat{y}_{t}^{i}+{\textstyle\sum_{j\in\mathcal{N}_{i}}}w_{ij}(\hat{y}_{t}^{j}-\hat{y}_{t}^{i})-\bar{s}_{t,w}\\
&\quad+{\textstyle\sum_{j\in{\mathcal{N}_{i}}}}w_{ij}(\boldsymbol{P}_{\Omega_{t}}(y_{t}^{j}+\zeta_{t}^{j})-y_{t}^{j})\\
&\quad+\gamma_{t,1}\left(\nabla_{2}f_{i}(x_{t}^{i},\psi_{t}^{i})-\nabla_{2}\bar{f}(x_{t},\psi_{t})\right).\label{Hy1}
\end{aligned}
\end{equation}

By using~\eqref{Hy1}, we can obtain the following relation:
\begin{equation}
\begin{aligned}
&\hat{y}_{t+1}^{i}-\hat{y}_{t}^{i}= \hat{y}_{t}^{i}-\hat{y}_{t-1}^{i}+{\textstyle\sum_{j\in\mathcal{N}_{i}}}w_{ij}\big((\hat{y}_{t}^{j}-\hat{y}_{t-1}^{j})\\
&\quad-(\hat{y}_{t}^{i}-\hat{y}_{t-1}^{i})\big)-(\bar{s}_{t,w}-\bar{s}_{t-1,w})+\Xi_{t}^{i}\\
&\quad+(\gamma_{t,1}-\gamma_{t-1,1})(\nabla_{2}f_{i}(x_{t}^{i},\psi_{t}^{i})-\nabla_{2}\bar{f}(x_{t},\psi_{t}))\\
&\quad+\gamma_{t-1,1}(\nabla_{2}f_{i}(x_{t}^{i},\psi_{t}^{i})-\nabla_{2}f_{i}(x_{t-1}^{i},\psi_{t-1}^{i}))\\
&\quad-\gamma_{t-1,1}(\nabla_{2}\bar{f}(x_{t},\psi_{t})-\nabla_{2}\bar{f}(x_{t-1},\psi_{t-1})),\label{Hy2}
\end{aligned}
\end{equation}
where the term $\Xi_{t}^{i}$ is given by $\Xi_{t}^{i}={\textstyle\sum_{j\in{\mathcal{N}_{i}}}}w_{ij}(\boldsymbol{P}_{\Omega_{t}}(y_{t}^{j}+\zeta_{t}^{j})-y_{t}^{j})-{\textstyle\sum_{j\in{\mathcal{N}_{i}}}}w_{ij}(\boldsymbol{P}_{\Omega_{t-1}}(y_{t-1}^{j}+\zeta_{t-1}^{j})-y_{t-1}^{j})$.

We write~\eqref{Hy2} in a compact form:
\begin{equation}
\begin{aligned}
&\hat{y}_{t+1}-\hat{y}_{t}\\
&=(I_{md}+(W\otimes I_{d}))(\hat{y}_{t}-\hat{y}_{t-1})\\
&\quad-\boldsymbol{1}_{m}\otimes(\bar{s}_{t,w}-\bar{s}_{t-1,w})+\Xi_{t}\\
&\quad+(\gamma_{t,1}-\gamma_{t-1,1})(\nabla_{2}f(x_{t},\psi_{t})-\boldsymbol{1}_{m}\otimes\nabla_{2}\bar{f}(x_{t},\psi_{t}))\\
&\quad+\gamma_{t-1,1}(\nabla_{2}f(x_{t},\psi_{t})-\nabla_{2}f(x_{t-1},\psi_{t-1}))\\
&\quad-\gamma_{t-1,1}(\boldsymbol{1}_{m}\otimes\nabla_{2}\bar{f}(x_{t},\psi_{t})-\boldsymbol{1}_{m}\otimes\nabla_{2}\bar{f}(x_{t-1},\psi_{t-1})).\nonumber
\end{aligned}
\end{equation}
Using the Young's inequality and Assumption~\ref{A2}, we have
\begin{flalign}
&\mathbb{E}[\|\hat{y}_{t+1}-\hat{y}_{t}\|^2]\nonumber\\
&\leq 2m\mathbb{E}[\|\bar{s}_{t,w}-\bar{s}_{t-1,w}\|^2]+2\mathbb{E}[\|\Xi_{t}\|^2]\nonumber\\
&\quad+(1+|\delta_{2}|)(1-|\delta_{2}|)^2\mathbb{E}[\|\hat{y}_{t}-\hat{y}_{t-1}\|^2]\nonumber\\
&\quad+{\textstyle\left(1+\frac{1}{|\delta_{2}|}\right)}\left(12mL_{f}^2\|\gamma_{t,1}-\gamma_{t-1,1}\|^2\right.\nonumber\\
&\left.\quad+6\gamma_{t-1,1}^2\bar{L}_{f}^2(\mathbb{E}[\|x_{t}\!-\!x_{t-1}\|^2]+\mathbb{E}[\|\psi_{t}\!-\!\psi_{t-1}\|^2])\right).\label{Hy4}
\end{flalign}

The noise variance $\mathbb{E}[\|\zeta_{t}^{i}\|^2]=(\sigma_{t,\zeta}^{i})^2$
implies that the first and second terms on the right hand side of~\eqref{Hy4} satisfy
\begin{equation}
m\mathbb{E}[\|\bar{s}_{t,w}-\bar{s}_{t-1,w}\|^2]+\mathbb{E}[\|\Xi_{t}\|^2]\leq 2m(\sigma_{t,\zeta}^2+\sigma_{t-1,\zeta}^2).\label{Hy5}
\end{equation}

According to Line 7 in Algorithm~\ref{algorithm1}, the last term on the right hand side of~\eqref{Hy4} satisfies
\begin{flalign}
&\mathbb{E}[\|\psi_{t}-\psi_{t-1}\|^2]\leq \gamma_{t-1,2}^2\sigma_{t-1,\xi}^2\nonumber\\
&\quad+3\gamma_{t-1,2}^2\mathbb{E}[\|(W\otimes I_{d})\hat{\psi}_{t-1}\|^2]+3L_{g}^2\mathbb{E}[\|x_{t}-x_{t-1}\|^2]\nonumber\\
&\quad +6\alpha_{t-1}^2\mathbb{E}[\|\boldsymbol{1}_{m}\otimes \bar{g}(x_{t-1})\!-\!\psi_{t-1}\|^2]\!+\!24mD_{g}^2\alpha_{t-1}^2,\label{Hy7}
\end{flalign}
where  in the derivation we have used $\mathbb{E}[\|g(x_{t-1})-\psi_{t-1}\|^2]\leq 2\mathbb{E}[\|\boldsymbol{1}_{m}\otimes \bar{g}(x_{t-1})-\psi_{t-1}\|^2]+2\mathbb{E}[\|\boldsymbol{1}_{m}\otimes \bar{g}(x_{t-1})-g(x_{t-1})\|^2]$. Moreover, since $x_{t}^{i}$ is always constrained in a compact set $\mathcal{X}_{i}$, we have $\|g_{i}(x_{t}^{i})\|\leq D_{g}$, which implies
$2\mathbb{E}[\|\boldsymbol{1}_{m}\otimes \bar{g}(x_{t-1})-g(x_{t-1})\|^2]\leq 8mD_{g}^2$.

Substituting~\eqref{xdiffresult} in Lemma~\ref{xdiff},~\eqref{psigresult} in Lemma~\ref{psig}, inequality~\eqref{Hy5}, and inequality~\eqref{Hy7} into~\eqref{Hy4} leads to~\eqref{hatydiffresult}.
\end{proof}

\begin{lemma}\label{hatpsidiff}
Under Assumptions~\ref{A1} and~\ref{A2}, the following inequality holds for Algorithm~\ref{algorithm1}:
\begin{equation}
\begin{aligned}
\mathbb{E}[\|\hat{\psi}_{t+1}\|^2]&\leq (1-\gamma_{t,2}|\delta_{2}|)\mathbb{E}[\|\hat{\psi}_{t}\|^2]\\
&\quad+{\textstyle\frac{c_{\psi1}\lambda_{t}^2}{\gamma_{t,2}\gamma_{t,1}^2}}\mathbb{E}[\|\hat{y}_{t+1}-\hat{y}_{t}\|^2]+\Psi_{t,\psi},\label{hatpsidiffresult}
\end{aligned}
\end{equation}
where the constant $c_{\psi1}$ is given by $c_{\psi1}=32(\gamma_{2}+\frac{1}{|\delta_{2}|})L_{g}^4$ and the term $\Psi_{t,\psi}$ is given by
\begin{flalign}
\Psi_{t,\psi}=\textstyle\frac{c_{\psi2}\lambda_{t}^2\sigma_{t,\zeta}^2}{\gamma_{t,2}\gamma_{t,1}^2}+\frac{c_{\psi3}\lambda_{t}^2}{\gamma_{t,2}}+\frac{c_{\psi4}\alpha_{t}^{2}}{\gamma_{t,2}}+c_{\psi5}\gamma_{t,2}^{2}\sigma_{t,\xi}^{2},\nonumber
\end{flalign}
with $c_{\psi2}=32(\gamma_{2}+\frac{1}{|\delta_{2}|})$, $c_{\psi3}=16m(\gamma_{2}+\frac{1}{|\delta_{2}|})L_{f}^2(2L_{g}^4+L_{g}^2)$, $c_{\psi4}=8mD_{g}^2(\gamma_{2}+\frac{1}{|\delta_{2}|})$, and $c_{\psi5}=4m$.
\end{lemma}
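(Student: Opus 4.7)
My plan is to derive a linear recursion for $\hat{\psi}_{t+1}$ directly from Line 7 of Algorithm~\ref{algorithm1}, isolate a contractive mixing operator, and then absorb the residual drift via Young's inequality together with Lemma~\ref{xdiff}. Using $w_{ii}=-\sum_{j\in\mathcal{N}_{i}}w_{ij}$, Line 7 can be written compactly as
\begin{equation*}
\psi_{t+1}=\bigl((1-\alpha_{t})I+\gamma_{t,2}(W\otimes I_{d})\bigr)\psi_{t}+\gamma_{t,2}\xi_{t,w}+g(x_{t+1})-(1-\alpha_{t})g(x_{t}).
\end{equation*}
Averaging across agents (with $\boldsymbol{1}^{\top}W=\boldsymbol{0}^{\top}$) and subtracting gives
$\hat{\psi}_{t+1}=\bigl((1-\alpha_{t})I+\gamma_{t,2}(W\otimes I_{d})\bigr)\hat{\psi}_{t}+\gamma_{t,2}\hat{\xi}_{t,w}+B_{t}$, where $B_{t}:=\tilde{g}(x_{t+1})-(1-\alpha_{t})\tilde{g}(x_{t})$ with $\tilde{g}(x):=g(x)-\boldsymbol{1}\otimes\bar{g}(x)$.

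Second, since $\hat{\psi}_{t}$ is orthogonal to the consensus subspace, Assumption~\ref{A2} yields $\|((1-\alpha_{t})I+\gamma_{t,2}(W\otimes I_{d}))\hat{\psi}_{t}\|^{2}\leq(1-\gamma_{t,2}|\delta_{2}|)^{2}\|\hat{\psi}_{t}\|^{2}$ (dropping the $(1-\alpha_{t})$ factor to keep a clean rate). Because $\xi_{t}^{j}$ is independent of $\hat{\psi}_{t}$ and $x_{t+1}$ (the latter being computed at Line 5 before Line 6 consumes $\xi_{t}^{j}$), the cross term with $\hat{\xi}_{t,w}$ vanishes under expectation, so
\begin{equation*}
\mathbb{E}[\|\hat{\psi}_{t+1}\|^{2}]=\mathbb{E}\bigl[\|((1-\alpha_{t})I+\gamma_{t,2}(W\otimes I_{d}))\hat{\psi}_{t}+B_{t}\|^{2}\bigr]+\gamma_{t,2}^{2}\mathbb{E}[\|\hat{\xi}_{t,w}\|^{2}].
\end{equation*}
I will then apply Young's inequality with parameter $\varepsilon=\gamma_{t,2}|\delta_{2}|$: the key identity $(1+\varepsilon)(1-\gamma_{t,2}|\delta_{2}|)^{2}=(1-(\gamma_{t,2}|\delta_{2}|)^{2})(1-\gamma_{t,2}|\delta_{2}|)\leq 1-\gamma_{t,2}|\delta_{2}|$ produces the advertised contraction factor, while $1+1/\varepsilon\leq(\gamma_{2}|\delta_{2}|+1)/(\gamma_{t,2}|\delta_{2}|)$ gives the prefactor attached to $\|B_{t}\|^{2}$.

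Third, I bound $\|B_{t}\|^{2}$ by $2\|\tilde{g}(x_{t+1})-\tilde{g}(x_{t})\|^{2}+2\alpha_{t}^{2}\|\tilde{g}(x_{t})\|^{2}$. Lipschitz continuity of each $g_{i}$ (Assumption~\ref{A1}-(ii)) combined with the projector structure of $\tilde{g}$ yields $\|\tilde{g}(x_{t+1})-\tilde{g}(x_{t})\|^{2}\leq 4L_{g}^{2}\|x_{t+1}-x_{t}\|^{2}$, and compactness of $\mathcal{X}_{i}$ (Assumption~\ref{A1}) yields $\|\tilde{g}(x_{t})\|^{2}\leq 4mD_{g}^{2}$, so $\mathbb{E}[\|B_{t}\|^{2}]\leq 8L_{g}^{2}\mathbb{E}[\|x_{t+1}-x_{t}\|^{2}]+8mD_{g}^{2}\alpha_{t}^{2}$. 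Substituting Lemma~\ref{xdiff} for $\mathbb{E}[\|x_{t+1}-x_{t}\|^{2}]$ and bounding the independent Laplace noise via $\mathbb{E}[\|\hat{\xi}_{t,w}\|^{2}]\leq 4m\sigma_{t,\xi}^{2}$ (Assumption~\ref{A3}) then produces each of the terms in $\Psi_{t,\psi}$ with constants $c_{\psi 1},\ldots,c_{\psi 5}$ as stated.

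The main obstacle is the careful Young-parameter calibration in the second step: choosing $\varepsilon$ too small would leave a factor $(1-\gamma_{t,2}|\delta_{2}|)^{2}$ that is not summable via Lemma~\ref{AuxL1}-style arguments in later steps, while choosing it too large would break the contraction. The algebraic identity verified above, valid for all $\gamma_{t,2}|\delta_{2}|\leq 1$ (automatic for sufficiently small $\gamma_{2}$), is the lynchpin that converts the squared contraction into the first-order contraction rate $1-\gamma_{t,2}|\delta_{2}|$ appearing in~\eqref{hatpsidiffresult}. The remaining bookkeeping — tracking the factor $(\gamma_{2}|\delta_{2}|+1)/(\gamma_{t,2}|\delta_{2}|)$ through each of the four perturbation terms in $B_{t}$ — is routine once this contraction is secured.
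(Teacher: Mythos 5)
Your proposal is correct and follows essentially the same route as the paper's proof: the same compact recursion for $\hat{\psi}_{t+1}$ from Line 7, Young's inequality with $\varepsilon=\gamma_{t,2}|\delta_{2}|$ to obtain the contraction factor $1-\gamma_{t,2}|\delta_{2}|$, the bound $\mathbb{E}[\|\hat{\xi}_{t,w}\|^2]\leq 4m\sigma_{t,\xi}^2$, and substitution of Lemma~\ref{xdiff} for $\mathbb{E}[\|x_{t+1}-x_{t}\|^2]$, yielding the stated constants. The only cosmetic difference is that you bound the centered drift $\tilde{g}(x_{t+1})-(1-\alpha_{t})\tilde{g}(x_{t})$ as a single term, whereas the paper bounds $g(x_{t+1})-(1-\alpha_{t})g(x_{t})$ and its consensus average separately, which leads to the same bound.
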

\begin{proof}
By using Line 7 in Algorithm~\ref{algorithm1} and the Young's inequality, for any $\epsilon>0$, we have
\begin{flalign}
&\mathbb{E}[\|\hat{\psi}_{t+1}\|^2]\leq\gamma_{t,2}^{2}\mathbb{E}[\|\hat{\xi}_{t,w}\|^2]\nonumber\\
&\quad+(1+\varepsilon)\|(1-\alpha_{t})\otimes I_{md}+\gamma_{t,2}(W\otimes I_{d})\|^2\mathbb{E}[\|\hat{\psi}_{t}\|^2]\nonumber\\
&\quad+2\textstyle{\left(1+\frac{1}{\varepsilon}\right)}\left(\mathbb{E}[\|g(x_{t+1})-(1-\alpha_{t})g(x_{t})\|^2]\nonumber\right.\\
&\left.\quad+\mathbb{E}[\|\boldsymbol{1}_{m}\otimes\bar{g}(x_{t+1})-(1-\alpha_{t})\boldsymbol{1}_{m}\otimes\bar{g}(x_{t})\|^2]\right).\label{Hp3}
\end{flalign}

We characterize each item on the right hand side of~\eqref{Hp3}. 

(i) By using Assumptions~\ref{A2} and the relationship $\mathbb{E}[\|\xi_{t}^{i}\|^2]=(\sigma_{t,\xi}^{i})^2$ , we have
\begin{equation}
\mathbb{E}[\|\hat{\xi}_{t,w}\|^2]=\mathbb{E}[\|\xi_{t,w}-\boldsymbol{1}_{m}\otimes\bar{\xi}_{t,w}\|^2]\leq 4m\sigma_{t,\xi}^{2}.\label{Hp4}
\end{equation}

(ii) Letting $\varepsilon=\gamma_{t,2}|\delta_{2}|$, we obtain
\begin{equation}
\begin{aligned}
&(1+\varepsilon)\|(1-\alpha_{t})\otimes I_{md}+\gamma_{t,2}(W\otimes I_{d})\|^2\mathbb{E}[\|\hat{\psi}_{t}\|^2]\\
&\leq (1-\gamma_{t,2}|\delta_{2}|)\mathbb{E}[\|\hat{\psi}_{t}\|^2],\label{Hp5}
\end{aligned}
\end{equation}
where in the derivation we have used the relationships $1-\alpha_{t}-\gamma_{t,2}|\delta_{2}|<1-\gamma_{t,2}|\delta_{2}|$ and $(1+\gamma_{t,2}|\delta_{2}|)(1-\gamma_{t,2}|\delta_{2}|)<1$.

(iii) Assumption~\ref{A1}-(ii) implies
\begin{equation}
\begin{aligned}
&\mathbb{E}[\|g(x_{t+1})-(1-\alpha_{t})g(x_{t})\|^2]\\
&\leq 2L_{g}^2\mathbb{E}[\|x_{t+1}-x_{t}\|^2]+2m\alpha_{t}^{2}D_{g}^2.\label{Hp6}
\end{aligned}
\end{equation}

Substituting~\eqref{xdiffresult} in Lemma~\ref{xdiff} and inequalities~\eqref{Hp4},~\eqref{Hp5}, and~\eqref{Hp6} into~\eqref{Hp3}, we arrive at~\eqref{hatpsidiffresult}.
\end{proof}
\begin{lemma}\label{rateyzong}
Under Assumptions~\ref{A1} and~\ref{A2}, if $1>u>w_{2}$, $1>v>w_{2}$, $\varsigma_{\zeta}>\max\{w_{1},\frac{w_{2}}{2}\}$, and $\varsigma_{\xi}>\frac{v}{2}-w_{2}$ hold, then we have
\begin{flalign}
\textstyle\left(1-\frac{|\delta_{2}|}{4}\right)\mathbb{E}[\|\hat{y}_{t+1}-\hat{y}_{t}\|^2]+\mathbb{E}[\|\hat{\psi}_{t}\|^2]\leq\frac{\kappa_{1}}{(t+1)^{\beta_{1}}},\label{rateyzongresult}
\end{flalign}
where the rate $\beta_{1}$ is given by $\beta_{1}=\min\{2\varsigma_{\zeta}-w_{2},2\varsigma_{\xi}+w_{2},2u-2w_{2},2v-2w_{2}\}$ and the constant $\kappa_{1}$ is given by $\kappa_{1}=\max\{(1-\frac{|\delta_{2}|}{4})\mathbb{E}[\|\hat{y}_{1}\|^2]+\mathbb{E}[\|\hat{\psi}_{1}\|^2],\frac{c_{s1}}{4^{-1}|\delta_{2}|\gamma_{2}-\beta_{1}}\}$ with $c_{s1}$ given in~\eqref{Rs5}.
\end{lemma}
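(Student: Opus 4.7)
The plan is to fold the two recursions in Lemma~\ref{hatydiff} and Lemma~\ref{hatpsidiff} into a single scalar Lyapunov recursion and then invoke Lemma~\ref{AuxL1}-(i). Define
\[
V_{t} \triangleq (1 - |\delta_{2}|/4)\mathbb{E}[\|\hat{y}_{t+1} - \hat{y}_{t}\|^2] + \mathbb{E}[\|\hat{\psi}_{t+1}\|^2] + \mathbb{E}[\|\hat{\psi}_{t}\|^2],
\]
so that the claim is exactly $V_{t} \leq \kappa_{1}(t+1)^{-\beta_{1}}$. The target is a scalar recursion $V_{t+1} \leq (1 - a_{t})V_{t} + b_{t}$ with $a_{t} = |\delta_{2}|\gamma_{2}/(4(t+1)^{w_{2}})$ and $b_{t} = c_{s1}/(t+1)^{\beta_{1} + w_{2}}$. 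Lemma~\ref{AuxL1}-(i) applied with $a = w_{2}$, $b = \beta_{1} + w_{2}$, $a_{0} = |\delta_{2}|\gamma_{2}/4$, and $b_{0} = c_{s1}$ then immediately delivers the stated bound, provided $a_{0} > b - a$, i.e., $|\delta_{2}|\gamma_{2}/4 > \beta_{1}$, which is exactly what makes the denominator in the second argument of $\kappa_{1}$ positive.

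The assembly proceeds by applying Lemma~\ref{hatydiff} at index $t+1$ to bound $\mathbb{E}[\|\hat{y}_{t+2} - \hat{y}_{t+1}\|^2]$ by a contraction of $\mathbb{E}[\|\hat{y}_{t+1} - \hat{y}_{t}\|^2]$ with factor $(1 - |\delta_{2}| + c_{y1}\lambda_{t}^2)$, a coupling to $\mathbb{E}[\|\hat{\psi}_{t}\|^2]$ with coefficient of order $\gamma_{t,1}^{2}(\gamma_{t,2}^{2} + \alpha_{t}^{2})$, and a noise residual $\Psi_{t+1,y}$; and applying Lemma~\ref{hatpsidiff} at index $t+1$ to bound $\mathbb{E}[\|\hat{\psi}_{t+2}\|^2]$ by a contraction of $\mathbb{E}[\|\hat{\psi}_{t+1}\|^2]$ with factor $(1 - \gamma_{t+1,2}|\delta_{2}|)$, a coupling to the forward quantity $\mathbb{E}[\|\hat{y}_{t+2} - \hat{y}_{t+1}\|^2]$ with coefficient $c_{\psi1}\lambda_{t+1}^{2}/(\gamma_{t+1,2}\gamma_{t+1,1}^{2})$, and a residual $\Psi_{t+1,\psi}$. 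Substituting the Lemma~\ref{hatydiff} bound into the forward coupling of Lemma~\ref{hatpsidiff} removes the feedback, after which the linear combination $(1-|\delta_{2}|/4)\cdot(\text{Lemma~\ref{hatydiff}}) + (\text{post-substitution Lemma~\ref{hatpsidiff}}) + \mathbb{E}[\|\hat{\psi}_{t+1}\|^2]$ yields a direct upper bound on $V_{t+1}$.

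The remaining step is coefficient-by-coefficient bookkeeping. The compounded coefficient of $\mathbb{E}[\|\hat{y}_{t+1} - \hat{y}_{t}\|^2]$ is $[(1-|\delta_{2}|/4) + c_{\psi1}\lambda_{t+1}^{2}/(\gamma_{t+1,2}\gamma_{t+1,1}^{2})](1 - |\delta_{2}| + c_{y1}\lambda_{t}^2)$; since $u > w_{2}$ makes $c_{y1}\lambda_{t}^{2}$ vanish and the feedback perturbation also vanishes, this stays below $(1-|\delta_{2}|/4)(1 - a_{t})$ for $t$ large. The coefficient of $\mathbb{E}[\|\hat{\psi}_{t+1}\|^2]$ is $1 + (1 - \gamma_{t+1,2}|\delta_{2}|) = 2(1 - \gamma_{t+1,2}|\delta_{2}|/2) \leq 2(1 - a_{t})$. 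The coefficient of $\mathbb{E}[\|\hat{\psi}_{t}\|^2]$, dominated by $\gamma_{t,1}^{2}(\gamma_{t,2}^{2} + \alpha_{t}^{2})$, is absorbed into $1\cdot(1 - a_{t})$ under $u, v > w_{2}$. All remaining noise pieces collect into $b_{t}$; once divided by $a_{t}\sim(t+1)^{-w_{2}}$, the slowest four of them—$\sigma_{t,\zeta}^{2}$ (rate $2\varsigma_{\zeta}$), $\gamma_{t,2}^{2}\sigma_{t,\xi}^{2}$ (rate $2w_{2}+2\varsigma_{\xi}$), $\lambda_{t}^{2}/\gamma_{t,2}$ (rate $2u-w_{2}$), and $\alpha_{t}^{2}/\gamma_{t,2}$ (rate $2v-w_{2}$)—reproduce exactly the four candidates in $\beta_{1}$. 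The hypotheses $\varsigma_{\zeta} > \max\{w_{1},w_{2}/2\}$ and $\varsigma_{\xi} > v/2 - w_{2}$ are precisely what guarantee that the cross-coupled residues such as $\lambda_{t}^{2}\sigma_{t,\zeta}^{2}/(\gamma_{t,2}\gamma_{t,1}^{2})$ and $\gamma_{t,1}^{2}\gamma_{t,2}^{2}\sigma_{t,\xi}^{2}$ decay at least as fast as these four and therefore do not dictate a slower rate.

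The main obstacle is the coupled feedback in the second step: substituting Lemma~\ref{hatydiff} into Lemma~\ref{hatpsidiff} multiplies the already-tight contraction factor $1 - |\delta_{2}| + c_{y1}\lambda_{t}^{2}$ by $(1-|\delta_{2}|/4) + c_{\psi1}\lambda_{t+1}^{2}/(\gamma_{t+1,2}\gamma_{t+1,1}^{2})$, and one has to argue that the compounded coefficient stays strictly below $(1-|\delta_{2}|/4)(1-a_{t})$ uniformly. The weight $1 - |\delta_{2}|/4$ on the $\mathbb{E}[\|\hat{y}_{t+1}-\hat{y}_{t}\|^{2}]$ piece of $V_{t}$ (rather than, say, $1$) is dictated by this computation, leaving precisely a $|\delta_{2}|/4$-sized margin to swallow the feedback perturbation; once this one delicate step is resolved, verifying that every term in $\Psi_{t+1,y}$, $\Psi_{t+1,\psi}$, and the cross products decays no slower than $(t+1)^{-\beta_{1}-w_{2}}$ is lengthy but mechanical, and the application of Lemma~\ref{AuxL1}-(i) closes the argument.
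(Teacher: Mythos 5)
Your skeleton is the same as the paper's: fold Lemma~\ref{hatydiff} and Lemma~\ref{hatpsidiff} into a one-step recursion for the composite quantity and close with Lemma~\ref{AuxL1}-(i), with $a_{t}=\gamma_{t,2}|\delta_{2}|/4$, $b_{t}\sim(t+1)^{-(\beta_{1}+w_{2})}$ and the condition $|\delta_{2}|\gamma_{2}/4>\beta_{1}$ playing exactly the roles you identify. However, your assembly has a genuine gap. You expand only two of the three pieces of $V_{t+1}$ and carry $\mathbb{E}[\|\hat{\psi}_{t+1}\|^2]$ through unexpanded, so on your upper bound its total coefficient is $1+(1-\gamma_{t+1,2}|\delta_{2}|)\geq 1$, which you check against $2(1-a_{t})$. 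But in $(1-a_{t})V_{t}$ the budget available for $\mathbb{E}[\|\hat{\psi}_{t+1}\|^2]$ is only $(1-a_{t})\cdot 1<1$, since that term enters $V_{t}$ with weight one; hence the claimed recursion $V_{t+1}\leq(1-a_{t})V_{t}+b_{t}$ does not follow from your linear combination---the $\hat{\psi}_{t+1}$ coefficient overshoots by essentially a factor of two and no contraction is obtained. The paper avoids this by applying the $\hat{\psi}$-recursion at two consecutive indices (in~\eqref{Rs1}, Lemma~\ref{hatpsidiff} is used at both $t$ and $t-1$), so that every piece of the composite quantity is expanded once and each resulting term lands on a constituent of the previous composite with coefficient at most $(1-\gamma_{t,2}|\delta_{2}|/4)$ times its weight; you would need to expand your trailing $\mathbb{E}[\|\hat{\psi}_{t+1}\|^2]$ in the same way.

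Separately, your resolution of what you yourself flag as the main obstacle---that the compounded coefficient of $\mathbb{E}[\|\hat{y}_{t+1}-\hat{y}_{t}\|^2]$ stays below $(1-|\delta_{2}|/4)(1-a_{t})$ because $c_{y1}\lambda_{t}^2$ and the feedback perturbation ``vanish for $t$ large''---does not hold under the stated hypotheses and would not give the stated constant anyway. The feedback term scales as $c_{\psi1}\lambda_{t}^2/(\gamma_{t,2}\gamma_{t,1}^2)\propto(t+1)^{-(2u-2w_{1}-w_{2})}$, and the lemma only assumes $1>u>w_{2}$ and $\varsigma_{\zeta}>w_{1}$, so $2u-2w_{1}-w_{2}$ can be negative and this perturbation need not vanish at all. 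Moreover, a contraction valid only for large $t$ yields the conclusion with a modified constant, not the stated $\kappa_{1}$, which is pinned to the composite quantity at the first index. The paper's device is not asymptotic: it imposes explicit smallness conditions on the initial values of the sequences ($\frac{|\delta_{2}|}{4}\geq\frac{c_{\psi1}\lambda_{0}^2}{\gamma_{2}\gamma_{1}^2}$, $\frac{|\delta_{2}|}{2}\geq c_{y1}\lambda_{0}^2+\frac{c_{\psi1}\lambda_{0}^2}{\gamma_{2}\gamma_{1}^2}$, $\gamma_{2}<1$, $\frac{|\delta_{2}|\gamma_{2}}{2}\geq(c_{y2}\gamma_{2}^2+c_{y3}\alpha_{0}^2)\gamma_{1}^2$) so that fixed margins of size $|\delta_{2}|/4$ and $\gamma_{t,2}|\delta_{2}|/2$ absorb the perturbations at every iteration before passing to~\eqref{Rs4}. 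You need a parameter-selection condition of this type (or an equivalent device), not a large-$t$ argument; your rate bookkeeping for the noise terms and the final application of Lemma~\ref{AuxL1}-(i) are otherwise consistent with the paper's~\eqref{Rs5}.
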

\begin{proof}
Summing both sides of~\eqref{hatydiffresult} in Lemma~\ref{hatydiff} and~\eqref{hatpsidiffresult} in Lemma~\ref{hatpsidiff}, we obtain
\begin{flalign}
&\left(1-\textstyle\frac{c_{\psi1}\lambda_{t}^2}{\gamma_{t,2}\gamma_{t,1}^2}\right)\mathbb{E}[\|\hat{y}_{t+1}-\hat{y}_{t}\|^2]+\mathbb{E}[\|\hat{\psi}_{t+1}\|^2]+\mathbb{E}[\|\hat{\psi}_{t}\|^2]\nonumber\\
&\leq  \textstyle\left(1-|\delta_{2}|+c_{y1}\lambda_{t-1}^2+\frac{c_{\psi1}\lambda_{t-1}^2}{\gamma_{t-1,2}\gamma_{t-1,1}^2}\right)\mathbb{E}[\|\hat{y}_{t}-\hat{y}_{t-1}\|^2]\nonumber\\
&\quad+(1-\gamma_{t,2}|\delta_{2}|)\mathbb{E}[\|\hat{\psi}_{t}\|^2]+\big(1-\gamma_{t-1,2}|\delta_{2}|\nonumber\\
&\quad+(c_{y2}\gamma_{t-1,2}^2+c_{y3}\alpha_{t-1}^2)\gamma_{t-1,1}^2\big)\mathbb{E}[\|\hat{\psi}_{t-1}\|^2]\nonumber\\
&\quad+\Psi_{t,y}+\Psi_{t,\psi}+\Psi_{t-1,\psi},\label{Rs1}
\end{flalign}
where $\Psi_{t,y}$ and $\Psi_{t,\psi}$ are given in~\eqref{hatydiffresult} and~\eqref{hatpsidiffresult}, respectively.

We let the initial values of decaying sequences satisfy $\textstyle\frac{|\delta_{2}|}{4}\geq \frac{c_{\psi1}\lambda_{0}^2}{\gamma_{2}\gamma_{1}^2}$, $\frac{|\delta_{2}|}{2}\geq c_{y1}\lambda_{0}^2+\frac{c_{\psi1}\lambda_{0}^2}{\gamma_{2}\gamma_{1}^2}$, $\gamma_{2}<1$, and $\frac{|\delta_{2}|\gamma_{2}}{2}\geq(c_{y2}\gamma_{2}^2+c_{y3}\alpha_{0}^2)\gamma_{1}^2$.  

Then, \eqref{Rs1} can be rewritten as follows:
\begin{flalign}
&\textstyle\left(1-\frac{|\delta_{2}|}{4}\right)\mathbb{E}[\|\hat{y}_{t+1}-\hat{y}_{t}\|^2]+\mathbb{E}[\|\hat{\psi}_{t+1}\|^2]+\mathbb{E}[\|\hat{\psi}_{t}\|^2]\nonumber\\
&\leq \textstyle\left(1-\frac{\gamma_{t,2}|\delta_{2}|}{4}\right)\left[\left(1-\frac{|\delta_{2}|}{4}\right)\mathbb{E}[\|\hat{y}_{t}-\hat{y}_{t-1}\|^2]\nonumber\right.\\
&\left.\quad+\mathbb{E}[\|\hat{\psi}_{t}\|^2]+\mathbb{E}[\|\hat{\psi}_{t-1}\|^2]\right]+\Psi_{t,1},\label{Rs4}
\end{flalign}
where the term $\Psi_{t,1}$ is given by
\begin{flalign}
\Psi_{t,1}&\textstyle=\frac{(c_{y4}+2^{2\varsigma_{\zeta}}c_{y5})\sigma_{\zeta}^2}{(t+1)^{2\varsigma_{\zeta}}}+\frac{(1+2^{2w_{1}+2})w_{1}^2c_{y6}\gamma_{1}^2}{(t+1)^{2w_{1}+2}}+\frac{2^{2w_{1}+2u}c_{y7}\gamma_{1}^2\lambda_{0}^2}{(t+1)^{2w_{1}+2u}}\nonumber\\
&\textstyle+\frac{2^{2w_{1}+2w_{2}+2\varsigma_{\xi}}c_{y8}\gamma_{1}^2\gamma_{2}^2\sigma_{\xi}^2}{(t+1)^{2w_{1}+2w_{2}+2\varsigma_{\xi}}}\!+\!\frac{2^{2v+2w_{1}}c_{y9}\gamma_{1}^2\alpha_{0}^2}{(t+1)^{2v+2w_{1}}}\!+\!\frac{(1+2^{2u-w_{2}})c_{\psi3}\lambda_{0}^2}{\gamma_{2}(t+1)^{2u-w_{2}}}\nonumber\\
&\textstyle+\frac{(1+2^{2u+2\varsigma_{\zeta}-2w_{1}-w_{2}})c_{\psi2}\lambda_{0}^2\sigma_{\zeta}^2}{\gamma_{2}\gamma_{1}^2(t+1)^{2u+2\varsigma_{\zeta}-2w_{1}-w_{2}}}+\frac{(1+2^{2v-w_{2}})c_{\psi4}\alpha_{0}^{2}}{\gamma_{2}(t+1)^{2v-w_{2}}}\nonumber\\
&\textstyle+\frac{(1+2^{2w_{2}+2\varsigma_{\xi}})c_{\psi5}\gamma_{2}^{2}\sigma_{\xi}^{2}}{(t+1)^{2w_{2}+2\varsigma_{\xi}}}\leq\frac{c_{s_1}}{(t+1)^{\beta_{1}+w_{2}}},\label{Rs5}
\end{flalign}
with $c_{s_1}=\Psi_{0,1}$ and $\beta_{1}+w_{2}=\min\{2\varsigma_{\zeta},2\varsigma_{\xi}+2w_{2},2u-w_{2},2v-w_{2}\}$.

Recalling the conditions $u>w_{2}$, $v>w_{2}$, $\varsigma_{\zeta}>\frac{w_{2}}{2}$, and $\varsigma_{\xi}>\frac{v}{2}-w_{2}$ given in the lemma statement, we have that $\beta_{1}+w_{2}>w_{2}$ always holds. 

Combining Lemma~\ref{AuxL1}-(i) with~\eqref{Rs4}, we arrive~\eqref{rateyzongresult}.
\end{proof}
\begin{lemma}\label{speedy}
Under the conditions in Lemma~\ref{rateyzong}, the following inequality holds for Algorithm~\ref{algorithm1}:
\begin{equation}
\textstyle \mathbb{E}[\|\hat{y}_{t+1}-\hat{y}_{t}\|^2]\leq \frac{\kappa_{2}}{(t+1)^{\beta_{2}}},\label{speedyresult}
\end{equation}
where  $\beta_{2}=\min\{2\varsigma_{\zeta},2\varsigma_{\xi}+2w_{1}+2w_{2},2w_{1}+2u,2w_{1}+2v\}$ and $\kappa_{2}=(\frac{4\beta_{2}}{e\ln(\frac{4}{4-|\delta_{2}|})})^{\beta_{2}}(\frac{\mathbb{E}[\|\hat{y}_{1}\|^2](2-|\delta_{2}|)}{2c_{s_2}}+\frac{4}{|\delta_{2}|})$ with $c_{s_2}$ given in~\eqref{Spe1}.
\end{lemma}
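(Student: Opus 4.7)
The plan is to combine the one-step recursion for $\mathbb{E}[\|\hat{y}_{t+1}-\hat{y}_{t}\|^2]$ supplied by Lemma~\ref{hatydiff} with the polynomial bound on $\mathbb{E}[\|\hat{\psi}_{t}\|^2]$ supplied by Lemma~\ref{rateyzong}, reduce the problem to a recursion of the form
\begin{equation*}
a_{t+1}\leq \rho\, a_{t}+\frac{c_{s_{2}}}{(t+1)^{s_{2}}}, \qquad \rho=1-\tfrac{|\delta_{2}|}{4},
\end{equation*}
with a \emph{constant} contraction factor, and then convert this into a polynomial decay rate. Specifically, substituting $\mathbb{E}[\|\hat{\psi}_{t-1}\|^2]\le \kappa_{1}/t^{\beta_{1}}$ (from Lemma~\ref{rateyzong}) into the recursion in Lemma~\ref{hatydiff} converts the $(c_{y2}\gamma_{t-1,2}^{2}+c_{y3}\alpha_{t-1}^{2})\gamma_{t-1,1}^{2}\mathbb{E}[\|\hat{\psi}_{t-1}\|^2]$ term into decaying noise with exponents $2w_{1}+2w_{2}+\beta_{1}$ and $2w_{1}+2v+\beta_{1}$, which can be added to $\Psi_{t,y}$ to form a single noise term $c_{s_{2}}/(t+1)^{s_{2}}$ whose exponent $s_{2}$ is the minimum of the decay rates of all individual pieces.

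Next, by choosing the initial stepsize $\lambda_{0}$ small enough so that $c_{y1}\lambda_{0}^{2}\le 3|\delta_{2}|/4$, the pre-factor $(1-|\delta_{2}|+c_{y1}\lambda_{t-1}^{2})$ is bounded by $\rho=1-|\delta_{2}|/4$ for every $t$. Unrolling the recursion starting from $t=1$,
\begin{equation*}
a_{t+1}\leq \rho^{t}\,a_{1}+c_{s_{2}}\sum_{k=1}^{t}\rho^{t-k}\frac{1}{(k+1)^{s_{2}}},
\end{equation*}
I would then apply the elementary estimate $\rho^{t}\leq \bigl(s_{2}/(e\ln(1/\rho))\bigr)^{s_{2}}(t+1)^{-s_{2}}$ obtained by maximizing the function $t\mapsto \rho^{t}(t+1)^{s_{2}}$ (this is precisely where the factor $\bigl(\tfrac{4s_{2}}{e\ln(4/(4-|\delta_{2}|))}\bigr)^{s_{2}}$ in $\kappa_{2}$ originates). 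For the convolution sum, splitting at $k=t/2$ handles the two regimes: the tail $k>t/2$ contributes $(k+1)^{-s_{2}}\lesssim (t+1)^{-s_{2}}$ against a convergent geometric series bounded by $1/(1-\rho)=4/|\delta_{2}|$ (this is the source of the additive $4/|\delta_{2}|$ factor in $\kappa_{2}$), while the head $k\le t/2$ contributes $\rho^{t/2}\cdot(\text{bounded sum})$, again dominated by $(t+1)^{-s_{2}}$. Collecting the prefactors yields $a_{t+1}\le \kappa_{2}/(t+1)^{s_{2}}$ with exactly the $\kappa_{2}$ stated.

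Finally, I would verify $s_{2}\geq \beta_{2}$ to obtain the advertised bound. Each exponent in $\Psi_{t,y}$ directly matches a candidate in $\beta_{2}=\min\{2\varsigma_{\zeta},\,2\varsigma_{\xi}+2w_{1}+2w_{2},\,2w_{1}+2u,\,2w_{1}+2v\}$, while the exponent $2w_{1}+2+2w_{1}$ from $c_{y6}\|\gamma_{t,1}-\gamma_{t-1,1}\|^{2}$ is dominated by $2w_{1}+2u$ under the assumption $u<1$. The composite exponents $2w_{1}+2w_{2}+\beta_{1}$ and $2w_{1}+2v+\beta_{1}$ produced after substitution require a short case analysis against each of the four terms defining $\beta_{2}$; using the four components of $\beta_{1}=\min\{2\varsigma_{\zeta}-w_{2},2\varsigma_{\xi}+w_{2},2u-2w_{2},2v-2w_{2}\}$, each candidate is shown to exceed the corresponding candidate in $\beta_{2}$.

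The main obstacle I expect is precisely this bookkeeping of exponents: carefully matching the composite rates emerging from the substitution step against the four minima in $\beta_{2}$, and simultaneously keeping the constants $c_{s_{2}}$ explicit so that the claimed form of $\kappa_{2}$ drops out of the maximizer of $\rho^{t}(t+1)^{s_{2}}$ together with the geometric-tail factor $1/(1-\rho)$. Once the decay of the $\hat{\psi}_{t}$ part has been inherited from Lemma~\ref{rateyzong}, the remainder of the argument is purely mechanical iteration of a constant-contraction, polynomially-perturbed linear recursion.
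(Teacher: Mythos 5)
Your proposal is correct and follows essentially the same route as the paper: substitute the $\hat{\psi}$ bound from Lemma~\ref{rateyzong} into the recursion of Lemma~\ref{hatydiff} to get a constant-contraction recursion with a polynomially decaying perturbation (the paper's~\eqref{Spe1}), and then convert this to the rate $\kappa_{2}/(t+1)^{s_{2}}$ — the only difference being that the paper delegates this last conversion to Lemma~11 of~\cite{zijiGT}, whereas you re-derive it by unrolling and the split-at-$t/2$ convolution estimate, which is exactly what that cited lemma encapsulates. (Your exponent bookkeeping is right; the only slip is labeling the $c_{y6}\|\gamma_{t,1}-\gamma_{t-1,1}\|^{2}$ term as decaying at rate $2w_{1}+2+2w_{1}$ rather than $2w_{1}+2$, which is immaterial since it is still dominated by $2w_{1}+2u$.)
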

\begin{proof}
Combining~\eqref{rateyzongresult} with~\eqref{hatydiffresult}, we have
\begin{flalign}
&\textstyle\mathbb{E}[\|\hat{y}_{t+1}-\hat{y}_{t}\|^2]\textstyle\leq (1-|\delta_{2}|+c_{y1}\lambda_{t-1}^2)\mathbb{E}[\|\hat{y}_{t}-\hat{y}_{t-1}\|^2]\nonumber\\
&\textstyle\quad+\frac{\kappa_{1}(c_{y2}\gamma_{2}^2+c_{y3}\alpha_{0}^2)\gamma_{1}^2}{t^{2w_{1}+2w_{2}+\beta_{1}}}+\frac{(c_{y4}+c_{y5})\sigma_{\zeta}^2}{t^{2\varsigma_{\zeta}}}+\frac{c_{y6}w_{1}^2\gamma_{1}^2}{t^{2w_{1}+2}}\nonumber\\
&\textstyle\quad+\frac{c_{y7}\gamma_{1}^2\lambda_{0}^2}{t^{2w_{1}+2u}}+\frac{c_{y8}\gamma_{1}^2\gamma_{2}^2\sigma_{\xi}^2}{t^{2w_{1}+2w_{2}+2\varsigma_{\xi}}}+\frac{c_{y9}\gamma_{1}^2\alpha_{0}^2}{t^{2w_{1}+2v}}\nonumber\\
&\textstyle\leq \Big({\textstyle1-\frac{|\delta_{2}|}{2}}\Big)\mathbb{E}[\|\hat{y}_{t}-\hat{y}_{t-1}\|^2]+\frac{c_{s_2}}{(t+1)^{\beta_{2}}},\label{Spe1}
\end{flalign}
with $c_{s_2}=2^{\beta_{2}}(\kappa_{1}(c_{y2}\gamma_{2}^2+c_{y3}\alpha_{0}^2)\gamma_{1}^2+(c_{y4}+c_{y5})\sigma_{\zeta}^2+c_{y6}w_{2}^2\gamma_{1}^2+c_{y7}\gamma_{1}^2\lambda_{0}^2+c_{y8}\gamma_{1}^2\gamma_{2}^2\sigma_{\xi}^2+c_{y9}\gamma_{1}^2\alpha_{0}^2)$. 

Combining 
Lemma~11 in~\cite{zijiGT} with~\eqref{Spe1}, we arrive at~\eqref{speedyresult}.
\end{proof}

\subsection{Proof of Lemma~\ref{Lemhyperg}}\label{Asection4}
(i) We first prove inequality~\eqref{hypergresult} in Lemma~\ref{Lemhyperg}.

By using~\eqref{A2L1},~\eqref{Hg1}, and Assumption~\ref{A1}, we obtain
\begin{flalign}
&\mathbb{E}[\|\nabla\tilde{F}(x_{t})-\nabla F(x_{t})\|^2]\leq 2\bar{L}_{f}^2\mathbb{E}\left[\|\psi_{t}-\boldsymbol{1}_{m}\otimes\phi(x_{t})\|^2\right]\nonumber\\
&+{\textstyle\frac{2L_{g}^2}{\gamma_{t,1}^2}}\mathbb{E}\left[\|y_{t+1}\!-\!y_{t}\!-\!\gamma_{t,1}(\boldsymbol{1}_{m}\otimes \nabla_{2}\bar{f}(x_{t},\phi(x_{t})))\|^2\right].\label{Hg2}
\end{flalign}

The second term on the right hand side of~\eqref{Hg2} satisfies
\begin{equation}
\begin{aligned}
&{\textstyle\frac{2L_{g}^2}{\gamma_{t,1}^2}}\mathbb{E}\left[\|y_{t+1}-y_{t}-\gamma_{t,1}(\boldsymbol{1}_{m}\otimes \nabla_{2}\bar{f}(x_{t},\phi(x_{t})))\|^2\right]\\
&\leq {\textstyle\frac{4L_{g}^2}{\gamma_{t,1}^2}}\mathbb{E}[\|\hat{y}_{t+1}-\hat{y}_{t}\|^2]+{\textstyle\frac{4mL_{g}^2\sigma_{t,\zeta}^2}{\gamma_{t,1}^2}}\\
&\quad+4L_{g}^2\bar{L}_{f}^2\mathbb{E}\left[\|\psi_{t}-\boldsymbol{1}_{m}\otimes\phi(x_{t})\|^2\right].\label{Hg7}
\end{aligned}
\end{equation}

Substituting~\eqref{Hg7} into~\eqref{Hg2}, we obtain
\begin{equation}
\begin{aligned}
&\mathbb{E}[\|\nabla\tilde{F}(x_{t})-\nabla F(x_{t})\|^2]\\
&\leq 2\bar{L}_{f}^2(1+2L_{g}^2)\mathbb{E}\left[\|\psi_{t}-\boldsymbol{1}_{m}\otimes\phi(x_{t})\|^2\right]\\
&\quad+{\textstyle\frac{4L_{g}^2}{\gamma_{t,1}^2}}\mathbb{E}[\|\hat{y}_{t+1}-\hat{y}_{t}\|^2]+{\textstyle\frac{4mL_{g}^2\sigma_{t,\zeta}^2}{\gamma_{t,1}^2}}.\label{Hg8}
\end{aligned}
\end{equation}

By substituting~\eqref{rateyzongresult} into~\eqref{psigresult}, the first term on the right hand side of~\eqref{Hg8} satisfies
\begin{equation}
\textstyle\mathbb{E}\left[\|\psi_{t}\!-\!\boldsymbol{1}_{m}\otimes\phi(x_{t})\|^2\right]\!\leq\!\frac{2\kappa_{1}}{(t+1)^{\beta_{1}}}+\frac{c_{\bar{\psi}1}\gamma_{2}^{2}\sigma_{\xi}^{2}}{\alpha_{0}(t+1)^{2w_{2}+2\varsigma_{\xi}-v}}.\label{Hg6}
\end{equation}

Substituting~\eqref{speedyresult} and~\eqref{Hg6} into~\eqref{Hg8}, we arrive at
\begin{align}
&\mathbb{E}[\|\nabla\tilde{F}(x_{t})-\nabla F(x_{t})\|^2]\nonumber\\
&\leq\textstyle \frac{4\kappa_{1}\bar{L}_{f}^2(1+2L_{g}^2)}{(t+1)^{\beta_{1}}}+\frac{2\bar{L}_{f}^2(1+2L_{g}^2)c_{\bar{\psi}1}\gamma_{2}^{2}\sigma_{\xi}^{2}}{\alpha_{0}(t+1)^{2w_{2}+2\varsigma_{\xi}-v}}\nonumber\\
&\textstyle\quad+\frac{4L_{g}^2\kappa_{2}}{\gamma_{1}^2(t+1)^{\beta_{2}-2w_{1}}}+\frac{4mL_{g}^2\sigma_{\zeta}^2}{\gamma_{1}^2(t+1)^{2\varsigma_{\zeta}-2w_{1}}}\leq \frac{C_{1}}{(t+1)^{\beta}},\label{hypergresult2}
\end{align}
where the rate $\beta$ is given in the lemma statement and the constant $C_{1}$ is given by $C_{1}=4\kappa_{1}\bar{L}_{f}^2(2L_{g}^2+1)+\frac{4L_{g}^2(\kappa_{2}+m\sigma_{\zeta}^2)}{\gamma_{1}^2}+\frac{4m\gamma_{2}^{2}\sigma_{\xi}^{2}\bar{L}_{f}^2(2L_{g}^2+1)}{\alpha_{0}-(2w_{2}+2\varsigma_{\xi}-v)}$ with $\kappa_{1}$ given in~\eqref{rateyzongresult} and $\kappa_{2}$ given in~\eqref{speedyresult}.

We use the big-$\mathcal{O}$ notation to express the explicit constant coefficient
$C_{1}$ in~\eqref{hypergresult2}, which leads to~\eqref{hypergresult} in Lemma~\ref{Lemhyperg}.
\vspace{0.2em}

(ii) We next prove inequality~\eqref{tildeFresult} in Lemma~\ref{Lemhyperg}.

By using~\eqref{Hg1}, we obtain
\begin{flalign}
&\mathbb{E}[\|\nabla \tilde{F}(x_{t})\|^2]\leq 4\mathbb{E}[\|\nabla_{1}f(x_{t},\psi_{t})-\nabla_{1}f(x_{t},\phi(x_{t}))\|^2]\nonumber\\
&+\!4\mathbb{E}[\|\nabla_{1}f(x_{t},\phi(x_{t}))\|^2]\!+\!2\mathbb{E}\!\left[{\textstyle\frac{\|\nabla g(x_{t})\|^2\|y_{t+1}-y_{t}\|^2}{\gamma_{t,1}^{2}}}\!\right].\label{Tf2}
\end{flalign} 
Incorporating~\eqref{Hg6} into the first and second terms on the right hand side of~\eqref{Tf2} and using Assumption~\ref{A1}-(i), we obtain
\begin{flalign}
&\mathbb{E}[\|\nabla_{1}f(x_{t},\psi_{t})\!-\!\nabla_{1}f(x_{t},\phi(x_{t}))\|^2]\!+\!\mathbb{E}[\|\nabla_{1}f(x_{t},\phi(x_{t}))\|^2]\nonumber\\
&\leq\textstyle\frac{2\kappa_{1}\bar{L}_{f}^2}{(t+1)^{\beta_{1}}}+\frac{2m\gamma_{2}^{2}\sigma_{\xi}^{2}\bar{L}_{f}^2}{(\alpha_{0}-(2w_{2}+2\varsigma_{\xi}-v))(t+1)^{2w_{2}+2\varsigma_{\xi}-v}}+mL_{f}^2.\label{Tf3}
\end{flalign}
Using~\eqref{X6} and~\eqref{speedyresult}, the last term of~\eqref{Tf2} satisfies
\begin{equation}
\begin{aligned}
&\textstyle\mathbb{E}\Big[{\textstyle\|\nabla g(x_{t})\|^2\frac{\|y_{t+1}-y_{t}\|^2}{\gamma_{t,1}^{2}}}\Big]\leq \frac{2L_{g}^2\kappa_{2}}{\gamma_{1}^2(t+1)^{\beta_{2}-2w_{1}}}\\
&\textstyle\quad+\frac{2L_{g}^2m\sigma_{\zeta}^2}{\gamma_{1}^2(t+1)^{2\varsigma_{\zeta}-2w_{1}}}+2L_{g}^2L_{f}^2m.\label{Tf4}
\end{aligned}
\end{equation}

Substituting~\eqref{Tf3} and~\eqref{Tf4} into~\eqref{Tf2} leads to
\begin{equation}
\mathbb{E}[\|\nabla\tilde{F}(x_{t})\|^2]\leq C_{2},\label{tildeFresult2}
\end{equation}
where $C_{2}=8\bar{L}_{f}^2\kappa_{1}+\frac{4\bar{L}_{f}^2c_{\bar{\psi}1}\gamma_{2}^{2}\sigma_{\xi}^{2}}{\alpha_{0}}+4mL_{f}^2+\frac{4L_{g}^2(\kappa_{2}+m\sigma_{\zeta}^2)}{\gamma_{1}^2}+4L_{g}^2L_{f}^2m$ with $\kappa_{1}$ and $\kappa_{2}$ given in~\eqref{rateyzongresult} and~\eqref{speedyresult}, respectively.

We use the big-$\mathcal{O}$ notation to express the explicit constant coefficient $C_{2}$ in~\eqref{tildeFresult2}, which leads to~\eqref{tildeFresult} in Lemma~\ref{Lemhyperg}.

\subsection{Proof of Theorem~\ref{T1}}\label{Asection5}
(i) \textit{Convergence rate when $F(x)$ is strongly convex.}

To prove~\eqref{T1result1}, we first use the projection inequality and the dynamics of $x_{t}^{i}$ in Algorithm~\ref{algorithm1} to decompose the optimization error $\|x_{t+1}-x^*\|^2$ into the following two parts:
\begin{equation}
\begin{aligned}
&\|x_{t+1}-x^*\|^2\leq (1+\varepsilon)\|x_{t}-\lambda_{t}\nabla F(x_{t})-x^*\|^2\\
&\quad+\left({\textstyle1+\frac{1}{\varepsilon}}\right)\lambda_{t}^{2}\|\nabla \tilde{F}(x_{t})-\nabla F(x_{t})\|^2,\label{TS1}
\end{aligned}
\end{equation}
for any $\epsilon>0$, in which the first term on the right hand side represents the optimization error of a standard gradient descent using the global gradient $\nabla F(x_{t})$ and the second term on the right hand side is the estimation error of the gradient $\nabla F(x_{t})$.

The convergence result on the second term on the right hand side of~\eqref{TS1} is provided in~\eqref{hypergresult} in Lemma~\ref{Lemhyperg}. 
Therefore, we proceed to estimate an upper bound on the first term on the right hand side of~\eqref{TS1}, which satisfies
\begin{equation}
\begin{aligned}
&\|x_{t}-\lambda_{t}\nabla F(x_{t})-x^*\|^2=\|x_{t}-x^*\|^2\\
&\quad-2\lambda_{t}\langle\nabla F(x_{t}),x_{t}\!-\!x^*\rangle\!+\!\lambda_{t}^{2}\|\nabla F(x_{t})\!-\!\nabla F(x^{*})\|^2.\label{TS2}
\end{aligned}
\end{equation}
The strongly convexity of $F(x)$ implies $\langle\nabla F(x_{t}),x_{t}-x^*\rangle\geq F(x_{t})-F(x^{*})+\frac{\mu}{2}\|x_{t}-x^{*}\|^2,$ which further implies
\begin{equation}
-2\lambda_{t}\langle\nabla F(x_{t}),x_{t}-x^*\rangle\leq -\lambda_{t}\mu\|x_{t}-x^{*}\|^2.\label{TS3}
\end{equation}
The $L_{F}$-Lipschitz continuous of $\nabla F(x)$ implies
\begin{equation}
\lambda_{t}^{2}\|\nabla F(x_{t})-\nabla F(x^{*})\|^2\leq \lambda_{t}^2L_{F}^2\|x_{t}-x^{*}\|^2.\label{TS4}
\end{equation}

Substituting~\eqref{TS3} and~\eqref{TS4} into~\eqref{TS2}, we arrive at
\begin{equation}
\|x_{t}-\lambda_{t}\nabla F(x_{t})-x^*\|^2\leq (1-\lambda_{t}\mu+\lambda_{t}^{2}L_{F}^2)\|x_{t}-x^{*}\|^2.\label{TS5}
\end{equation}

By substituting~\eqref{hypergresult}  and~\eqref{TS5} into~\eqref{TS1} and letting $\epsilon=\frac{\lambda_{t}\mu}{2}$, we obtain the following recursive inequality:
\begin{equation}
\textstyle\mathbb{E}[\|x_{t+1}-x^*\|^2]\!\leq\! \left(1-\frac{\lambda_{t}\mu}{4}\right)\mathbb{E}[\|x_{t}-x^{*}\|^2]+\frac{(\lambda_{0}+\frac{2}{\mu})C_{1}\lambda_{0}}{(t+1)^{\beta+u}},\label{TS7}
\end{equation}
where in the derivation we have used the relation $\textstyle(1+\textstyle\frac{\lambda_{t}\mu}{2})(1-\lambda_{t}\mu+\lambda_{t}^2L_{F}^2)\leq 1-\frac{\lambda_{t}\mu}{4}-\frac{\lambda_{t}\mu}{4}+(1+\frac{\lambda_{t}\mu}{2})\lambda_{t}^2L_{F}^2\leq 1-\frac{\lambda_{t}\mu}{4}$
with $\lambda_{0}$ any value satisfying $(1+\frac{\lambda_{0}\mu}{2})\lambda_{0}^2L_{F}^2<\frac{\lambda_{0}\mu}{4}$. Here, the rate $\beta$ is given in~\eqref{hypergresult} and $C_{1}$ is given in~\eqref{hypergresult2}.

By combining Lemma~\ref{AuxL1}-(i) with~\eqref{TS7}, we arrive at
\begin{equation}
\mathbb{E}[\|x_{t}-x^*\|^2]\leq C(t+1)^{-\beta},\label{xres}
\end{equation}
with $C=\max\{\mathbb{E}[\|x_{0}-x^*\|^2],\frac{(\lambda_{0}+\frac{2}{\mu})\lambda_{0}C_{1}}{4^{-1}\lambda_{0}\mu-\beta}\}$. Inequality~\eqref{xres} directly implies~\eqref{T1result1}.

(ii) \textit{Convergence rate when $F(x)$ is general convex}.

To prove~\eqref{T1result2}, we expand the squared distance between 
$x_{t+1}$ and $x^*$ and use the dynamics of $x_{t}^{i}$ to obtain
\begin{equation}
\begin{aligned}
&\mathbb{E}[\|x_{t+1}-x^*\|^2]\leq \mathbb{E}[\|x_{t}-x^*\|^2]\\
&\quad+\lambda_{t}^2\mathbb{E}[\|\nabla \tilde{F}(x_{t})\|^2]-2\mathbb{E}[\langle x_{t}-x^*,\lambda_{t}\nabla \tilde{F}(x_{t})\rangle].\label{TC1}
\end{aligned}
\end{equation}

The convexity condition of $F(x)$ implies $\langle\nabla F(x_{t}),x_{t}-x^*\rangle\geq F(x_{t})-F(x^*),$ which further implies
\begin{equation}
\begin{aligned}
&-2\mathbb{E}[\langle x_{t}-x^*,\lambda_{t}\nabla \tilde{F}(x_{t})\rangle]\!\leq\! -2\lambda_{t}\mathbb{E}[F(x_{t})-F(x^{*})]\\
&\quad+\varepsilon_{t}\mathbb{E}[\|x_{t}-x^*\|^2]+\textstyle\frac{\lambda_{t}^2}{\varepsilon_{t}}\mathbb{E}[\|\nabla\tilde{F}(x_{t})-\nabla F(x_{t})\|^2],\label{TC2}
\end{aligned}
\end{equation}
where $\varepsilon_{t}$ is given by $\varepsilon_{t}=\frac{1}{(t+1)^{r}}$ with $1<r<2u$. Here, since the convexity condition applies to the global objective function $F(x)$ rather than its estimate $\tilde{F}(x)$, there exists an additional estimation error term in~\eqref{TC2}.

Substituting~\eqref{TC2} into~\eqref{TC1} leads to
\begin{equation}
\begin{aligned}
&\mathbb{E}[\|x_{t+1}-x^*\|^2]\leq -2\lambda_{t}\mathbb{E}[F(x_{t})-F(x^{*})]\\
&\quad+(1+\varepsilon_{t})\mathbb{E}[\|x_{t}-x^*\|^2]+\Theta_{t},\label{TC3}
\end{aligned}
\end{equation}
where the term $\Theta_{t}$ is given by 
\begin{equation}
\Theta_{t}=\lambda_{t}^2\mathbb{E}[\|\nabla \tilde{F}(x_{t})\|^2]+\textstyle\frac{\lambda_{t}^2}{\varepsilon_{t}}\mathbb{E}[\|\nabla\tilde{F}(x_{t})-\nabla F(x_{t})\|^2].\label{TC4}
\end{equation}

Since the relation $\tilde{F}(x_{t})\geq F(x^{*})$ always holds, we drop the negative term $-2\lambda_{t}\mathbb{E}[F(x_{t})-F(x^{*})]$ in~\eqref{TC3} to obtain
\begin{equation}
\begin{aligned}
&\mathbb{E}[\|x_{t+1}-x^*\|^2]\leq (1+\varepsilon_{t})\mathbb{E}[\|x_{t}-x^*\|^2]+\Theta_{t}\\
&\leq \left(\textstyle\prod_{t=0}^{T}(1+\varepsilon_{t})\right)\left(\mathbb{E}[\|x_{0}-x^*\|^2]+\textstyle\sum_{t=0}^{T}\Theta_{t}\right).\label{TC5}
\end{aligned}
\end{equation}

By using the relation $\ln(1+u)\leq u$ holding for any $u>0$ and the definition of $\varepsilon_{t}$ in~\eqref{TC2}, we have 
\begin{equation}
\begin{aligned}
\ln\left(\textstyle\prod_{t=0}^{T}(1+\varepsilon_{t})\right)&=\textstyle\sum_{t=0}^{T}\ln(1+\varepsilon_{t})\leq \textstyle\sum_{t=0}^{T}\varepsilon_{t}\\
&\leq\varepsilon_{0}+\textstyle\int_{1}^{\infty}\frac{1}{x^{r}}dx\leq \frac{\varepsilon_{0}(r-1)}{r-1},\label{TC6}
\end{aligned}
\end{equation}
which implies $\prod_{t=0}^{T}(1+\varepsilon_{t})\leq e^{\frac{\varepsilon_{0}(r-1)}{r-1}}$. 

Then, inequality~\eqref{TC5} can be rewritten as follows:
\begin{equation}
\begin{aligned}
&\mathbb{E}[\|x_{t+1}-x^*\|^2]\\
&\leq e^{\frac{\varepsilon_{0}(r-1)}{r-1}}\left(\mathbb{E}[\|x_{0}-x^*\|^2]+\textstyle\sum_{t=0}^{T}\Theta_{t}\right).\label{TC7}
\end{aligned}
\end{equation}

Substituting~\eqref{hypergresult} and~\eqref{tildeFresult} into~\eqref{TC4}, we obtain $\sum_{t=0}^{T}\Theta_{t}\leq \sum_{t=0}^{T}\frac{\lambda_{0}^2C_{2}}{(t+1)^{2u}}+\sum_{t=0}^{T}\frac{\lambda_{0}^2C_{1}}{(t+1)^{2u+\beta-r}}.$ By using the relation:
\begin{equation}
\textstyle\sum_{t=0}^{T}\frac{1}{(t+1)^{r}}\leq 1+\int_{1}^{\infty}\frac{1}{x^{r}}dx\leq \frac{r}{r-1},\label{TC8}
\end{equation}
which holds for any $r>1$, we have
\begin{equation}
\begin{aligned}
&\textstyle\sum_{t=0}^{T}\Theta_{t}\leq \frac{2u\lambda_{0}^2C_{2}}{2u-1}+\lambda_{0}^2C_{1}\max\left\{\frac{4u-2w_{2}-r}{4u-2w_{2}-r-1},\right.\\
&\textstyle\left.\quad\frac{2u+2v-2w_{2}-r}{2u+2v-2w_{2}-r-1},\frac{2u+2w_{2}+2\varsigma_{\xi}-v-r}{2u+2w_{2}+2\varsigma_{\xi}-v-r-1},\right.\\
&\textstyle\left.\quad\frac{2u+2\varsigma_{\zeta}-2w_{1}-r}{2u+2\varsigma_{\zeta}-2w_{1}-r-1},\frac{2u+2\varsigma_{\zeta}-w_{2}-r}{2u+2\varsigma_{\zeta}-w_{2}-r-1}\right\}\triangleq C_{3}. \label{TC9}
\end{aligned}
\end{equation}

Substituting~\eqref{TC9} into~\eqref{TC7}, we arrive at
\begin{equation}
\mathbb{E}[\|x_{t+1}-x^*\|^2]\leq e^{\frac{\varepsilon_{0}(r-1)}{r-1}}(\mathbb{E}[\|x_{0}-x^*\|^2]+C_{3}).\label{TC10}
\end{equation}

We proceed to sum both sides of~\eqref{TC3} from $0$ to $T$ ($T$ can be any positive integer):
\begin{flalign}
&\textstyle\sum_{t=0}^{T}2\lambda_{t}\mathbb{E}[F(x_{t})-F(x^{*})]\leq -\sum_{t=0}^{T}\mathbb{E}[\|x_{t+1}-x^*\|^2]\nonumber\\
&\textstyle\quad+\sum_{t=0}^{T}(1+\varepsilon_{t})\mathbb{E}[\|x_{t}-x^*\|^2]+\sum_{t=0}^{T}\Theta_{t}.\label{TC11}
\end{flalign}

The first and second terms on the right hand side of~\eqref{TC11} can be simplified as follows:
\begin{flalign}
&\textstyle\sum_{t=0}^{T}(1+\varepsilon_{t})\mathbb{E}[\|x_{t}-x^*\|^2]-\sum_{t=0}^{T}\mathbb{E}[\|x_{t+1}-x^*\|^2]\nonumber\\
&\textstyle= \varepsilon_{0}\mathbb{E}[\|x_{0}-x^*\|^2]+\sum_{t=1}^{T}\varepsilon_{t}\mathbb{E}[\|x_{t}-x^*\|^2]+\mathbb{E}[\|x_{0}-x^*\|^2]\nonumber\\
&\quad-\mathbb{E}[\|x_{T+1}-x^*\|^2]\nonumber\\
&\leq\! \textstyle\left(\frac{re^{\frac{\varepsilon_{0}(r-1)}{r-1}}}{r-1}\!+1+\varepsilon_{0}\right)\mathbb{E}[\|x_{0}-x^*\|^2]\!+\!\frac{C_{3}r}{r-1}\triangleq C_{4},\label{TC12}
\end{flalign}
where in the derivation we have used~\eqref{TC7} and~\eqref{TC8}.

Substituting~\eqref{TC9} and~\eqref{TC12} into~\eqref{TC11}, we obtain
\begin{equation}
\textstyle\sum_{t=0}^{T}2\lambda_{t}\mathbb{E}[F(x_{t})-F(x^{*})]\leq C_{3}+C_{4}.\label{3TB11}
\end{equation}
Dividing both sides of~\eqref{3TB11} by $2\sum_{t=0}^{T}\lambda_{t}$, we arrive at
\begin{equation}
\textstyle \frac{{\textstyle\sum_{t=0}^{T}}\lambda_{t}\mathbb{E}[F(x_{t})-F(x^{*})]}{{\textstyle\sum_{t=0}^{T}}\lambda_{t}}\leq \frac{(C_{3}+C_{4})(1-u)}{2\lambda_{0}(1-\frac{1}{2^{1-u}})(T+1)^{1-u}},\label{convex2}
\end{equation}
which in the derivation we have used the relations $\textstyle\sum_{t=0}^{T}\lambda_{t}\geq \int_{0}^{T}\frac{\lambda_{0}}{(x+1)^{u}}dx\geq \frac{\lambda_{0}((T+1)^{1-u}-1)}{1-u}$ and $(T+1)^{1-u}-1\geq (1-\frac{1}{2^{1-u}})(T+1)^{1-u}$. Inequality~\eqref{convex2} directly implies~\eqref{T1result2}.

In addition, given that $\lambda_{t}$ is a decaying sequence, we have
$\lambda_{T}\sum_{t=0}^{T}\mathbb{E}[F(x_{t})-F(x^{*})]\leq \sum_{t=0}^{T}\lambda_{t}\mathbb{E}[F(x_{t})-F(x^{*})]$, which, combined with~\eqref{3TB11} yields
\begin{equation}
\textstyle\frac{1}{T+1}\sum_{t=0}^{T}\mathbb{E}[F(x_{t})-F(x^{*})]\leq \frac{C_{3}+C_{4}}{2\lambda_{0}(T+1)^{1-u}}.\label{1Tind}
\end{equation}

(iii) \textit{Convergence rate when $F(x)$ is nonconvex}.

To prove~\eqref{TT1result2N9}, we first use Lemma~\ref{FLips} to obtain
\begin{equation}
F(x_{t+1})-F(x_{t})\leq\langle\nabla F(x_{t}),x_{t+1}-x_{t}\rangle+\textstyle\frac{L_{F}}{2}\|x_{t+1}-x_{t}\|^2.\label{TN1}
\end{equation}
By using Line 5 in Algorithm~\ref{algorithm1}, we obtain
\begin{equation}
\begin{aligned}
&\mathbb{E}[\langle\nabla F(x_{t}),x_{t+1}-x_{t}\rangle]+\textstyle\frac{L_{F}}{2}\mathbb{E}[\|x_{t+1}-x_{t}\|^2]\\
&\leq -\mathbb{E}[\langle\nabla F(x_{t}),\lambda_{t}\nabla\tilde{F}(x_{t})\rangle]+\textstyle\frac{L_{F}\lambda_{t}^2}{2}\mathbb{E}[\|\nabla\tilde{F}(x_{t})\|^2].\label{TN2}
\end{aligned}
\end{equation}

The first term on the right hand side of~\eqref{TN2} satisfies
\begin{flalign}
&-\mathbb{E}[\langle\nabla F(x_{t}),\lambda_{t}\nabla\tilde{F}(x_{t})\rangle]\nonumber\\
&\leq -\textstyle\frac{\lambda_{t}}{2}\mathbb{E}[\|\nabla F(x_{t})\|^2]\!+\!\textstyle\frac{\lambda_{t}}{2}\mathbb{E}[\|\nabla \tilde{F}(x_{t})\!-\!\nabla F(x_{t})\|^2].\label{TN3}
\end{flalign}
By substituting~\eqref{TN2} and~\eqref{TN3} into~\eqref{TN1}, we arrive at
\begin{flalign}
&\mathbb{E}[F(x_{t+1})-F(x_{t})]\leq -\textstyle\frac{\lambda_{t}}{2}\mathbb{E}[\|\nabla F(x_{t})\|^2]\nonumber\\
&\textstyle+\frac{\lambda_{t}}{2}\mathbb{E}[\|\nabla \tilde{F}(x_{t})\!-\!\nabla F(x_{t})\|^2]\!+\!\frac{L_{F}\lambda_{t}^2}{2}\mathbb{E}[\|\nabla \tilde{F}(x_{t})\|^2].\label{TN4}
\end{flalign}

Summing both sides of~\eqref{TN4} from $0$ to $T$ and using the relationship $F(x^*)\leq F(x_{t+1})$, we obtain
\begin{equation}
{\textstyle\sum_{t=0}^{T}\frac{\lambda_{t}}{2}}\mathbb{E}[\|\nabla F(x_{t})\|^2]\!\leq\! \mathbb{E}[F(x_{0})\!-\!F(x^*)]\!+\!{\textstyle\sum_{t=0}^{T}}\Theta_{t},\label{TN5}
\end{equation}
where the term $\Theta_{t}$ is given by 
\begin{equation}
\Theta_{t}={\textstyle\frac{\lambda_{t}}{2}}\mathbb{E}[\|\nabla \tilde{F}(x_{t})-\nabla F(x_{t})\|^2]+{\textstyle\frac{L_{F}\lambda_{t}^2}{2}}\mathbb{E}[\|\nabla \tilde{F}(x_{t})\|^2].\nonumber
\end{equation}

By using~\eqref{hypergresult},~\eqref{tildeFresult}, and~\eqref{TC8}, we have
\begin{equation}
\begin{aligned}
&\textstyle\sum_{t=0}^{T}\Theta_{t}\leq \frac{u\lambda_{0}^2C_{2}L_{F}}{2u-1}+\frac{\lambda_{0}C_{1}}{2}\max\left\{\frac{3u-2w_{2}}{3u-2w_{2}-1},\right.\\
&\textstyle\left.\quad\frac{u+2v-2w_{2}}{u+2v-2w_{2}-1},\frac{u+2\varsigma_{\xi}+2w_{2}-v}{u+2\varsigma_{\xi}+2w_{2}-v-1},\right.\\
&\textstyle\left.\quad\frac{u+2\varsigma_{\zeta}-2w_{1}}{u+2\varsigma_{\zeta}-2w_{1}-1},\frac{u+2\varsigma_{\zeta}-w_{2}}{u+2\varsigma_{\zeta}-w_{2}-1}\right\}\triangleq C_{5}. \label{TN8}
\end{aligned}
\end{equation}
Substituting~\eqref{TN8} into~\eqref{TN5} and defining $C_{6}\triangleq\mathbb{E}[F(x_{0})-F(x^*)]$, we obtain
\begin{equation}
\textstyle\sum_{t=0}^{T}\frac{\lambda_{t}}{2}\mathbb{E}[\|\nabla F(x_{t})\|^2]\leq C_{5}+C_{6}
\end{equation}

Following an argument similar to the derivation of~\eqref{convex2}, we arrive at
\begin{equation}
\textstyle\frac{\sum_{t=0}^{T}\lambda_{t}\mathbb{E}[\|\nabla F(x_{t})\|^2]}{\sum_{t=0}^{T}\lambda_{t}}\leq\frac{2(C_{5}+C_{6})(1-u)}{\lambda_{0}(1-\frac{1}{2^{1-u}})(T+1)^{1-u}},\label{TN11}
\end{equation}
which directly implies~\eqref{TT1result2N9}.

In addition, following an argument similar to the derivation of~\eqref{1Tind}, we have that~\eqref{TT1result2N9} is equivalent to
\begin{equation}
\textstyle\frac{1}{T+1}\sum_{t=0}^{T}\mathbb{E}[\|\nabla F(x_{t})\|^2]\leq \frac{2(C_{5}+C_{6})}{\lambda_{0}(T+1)^{1-u}},\nonumber
\end{equation}
which finishes the proof.
\vspace{-0.8em}
\subsection{Proof of Lemma~\ref{JDP}}\label{Asection6}
To prove Lemma~\ref{JDP}, we first provide a definition for the sensitivity of  an execution $\mathcal{A}_{\boldsymbol{\vartheta}_{0}}(\mathcal{P})$ of Algorithm~\ref{algorithm1}:
\begin{definition}[Sensitivity~\cite{tailoring}]\label{Dsensitivity}
Given an aggregative optimization problem $\mathcal{P}$, the set of observation sequences $\mathcal{O}_{s}$, and an initial state $\boldsymbol{\vartheta}_{0}$, we denote the internal iterates of an execution $\mathcal{A}_{\boldsymbol{\vartheta}_{0}}(\mathcal{P})$ of  Algorithm~\ref{algorithm1} at each iteration $t$ as 
$\mathcal{R}_{t,\mathcal{P},\boldsymbol{\vartheta}_{0}}(\mathcal{O}_{s})$.  Then, for two adjacent problems $\mathcal{P}$ and $\mathcal{P}'$, the sensitivity of  $\mathcal{A}_{\boldsymbol{\vartheta}_{0}}(\mathcal{P})$ at each iteration $t$ is defined as
\begin{equation}
\begin{aligned}
\Delta_{t}\triangleq \sup_{\mathcal{O}_{s}\subset\mathbb{O}_{s}}\left\{\sup_{\boldsymbol{\theta}\in\mathcal{R}_{t,\mathcal{P},\boldsymbol{\vartheta}_{0}}(\mathcal{O}_{s}),~{\boldsymbol{\theta}'}\in\mathcal{R}_{t,\mathcal{P'},\boldsymbol{\vartheta}_{0}}(\mathcal{O}_{s})}\|\boldsymbol{\theta}_{t}-{\boldsymbol{\theta}'}_{\hspace{-0.1cm}t}\|_{1}\right\},\nonumber
\end{aligned}
\end{equation}
where $\mathbb{O}_{s}$ denotes the set of all possible observations.
\end{definition}

By observing Algorithm~\ref{algorithm1}, we have that the observation part of $\mathcal{A}_{\boldsymbol{\vartheta}_{0}}(\mathcal{P})$ is the sequence of shared messages  $\{\text{col}(y_{t}+\zeta_{t},\psi_{t}+\xi_{t})\}_{t=1}^{T}$ with $y_{t}+\zeta_{t}=\text{col}(y_{t}^{1}+\zeta_{t}^{1},\cdots,y_{t}^{m}+\zeta_{t}^{m})$ and $\psi_{t}+\xi_{t}=\text{col}(\psi_{t}^{1}+\xi_{t}^{1},\cdots,\psi_{t}^{m}+\xi_{t}^{m})$. Hence, according to Definition~\ref{Dsensitivity}, the execution $\mathcal{A}_{\boldsymbol{\vartheta}_{0}}(\mathcal{P})$ at each iteration $t$ involves two sensitivities: $\Delta_{t,y}$ and $\Delta_{t,\psi}$, which correspond to the two shared variables $y_{t}$ and $\psi_{t}$, respectively. 

With this understanding, we proceed to prove Lemma~\ref{JDP}:
\begin{proof}
We first analyze the sensitivities of $\mathcal{A}_{\boldsymbol{\vartheta}_{0}}(\mathcal{P})$. Given two adjacent problems $\mathcal{P}$ and $\mathcal{P}'$, the set of observation sequences $\mathcal{O}_{s}$, and an initial state $\boldsymbol{\vartheta}_{0}$, the sensitivities at each iteration $t$ depend on $\|y_{t}-y'_{t}\|_{1}$ and $\|\psi_{t}-\psi'_{t}\|_{1}$. Since in $\mathcal{P}$ and $\mathcal{P}'$, there is only one entry that is different, we represent this different entry as the $i$th one, i.e., $\text{col}(\mathcal{X}_{i},f_{i},g_{i})$ in $\mathcal{P}$ and $\text{col}(\mathcal{X}'_{i},f'_{i},g'_{i})$ in $\mathcal{P}'$, without loss of generality.

Because the initial states, constraint sets, functions, and observations of $\mathcal{P}$ and $\mathcal{P}'$ are identical for all $j\neq i$, we have $y_{t}^{j}={y'}_{\hspace{-0.1cm}t}^{j}$ and $\psi_{t}^{j}={\psi'}_{\hspace{-0.1cm}t}^{j}$ for all $j\neq i$ and $t\in\mathbb{N}^{+}$. Therefore, $\|y_{t}-y'_{t}\|_{1}$ and $\|\psi_{t}-\psi'_{t}\|_{1}$ are always equal to  $\|y_{t}^{i}-{y'}_{\hspace{-0.1cm}t}^{i}\|_{1}$ and $\|\psi_{t}^{i}-{\psi'}_{\hspace{-0.1cm}t}^{i}\|_{1}$, respectively.

According to Line 7 in Algorithm~\ref{algorithm1}, we have
\begin{equation}
\begin{aligned}
&\psi_{t+1}^{i}-{\psi'}_{\hspace{-0.1cm}t+1}^{i}=(1-\alpha_{t}-\gamma_{t,2}|w_{ii}|)(\psi_{t}^{i}-{\psi'}_{\hspace{-0.1cm}t}^{i})\\
&\quad+(g_{i}(x_{t+1}^{i})-g_{i}(x_{t}^{i}))-({g'}_{\hspace{-0.1cm}i}({x'}_{\hspace{-0.1cm}t+1}^{i})-{g'}_{\hspace{-0.1cm}i}({x'}_{\hspace{-0.1cm}t}^{i}))\\
&\quad+\alpha_{t}(g_{i}(x_{t}^{i})-{g'}_{\hspace{-0.1cm}i}({x'}_{\hspace{-0.1cm}t}^{i})),\label{Dp1}
\end{aligned}
\end{equation}
where we have used the fact that the observations $\psi_{t}^{j}+\xi_{t}^{j}$ and ${\psi'}_{\hspace{-0.1cm}t}^{j}+{\xi'}_{\hspace{-0.1cm}t}^{j}$ for any $j\neq i$ and $t\in\mathbb{N}^{+}$ are the same.

Therefore, taking the $1$-norm on both sides of~\eqref{Dp1} yields
\begin{equation}
\begin{aligned}
&\|\psi_{t+1}^{i}-{\psi'}_{\hspace{-0.1cm}t+1}^{i}\|_{1}\leq(1-\alpha_{t}-\gamma_{t,2}|w_{ii}|)\|\psi_{t}^{i}-{\psi'}_{\hspace{-0.1cm}t}^{i}\|_{1}\\
&\quad+\sqrt{d}L_{g}(\|x_{t+1}^{i}-x_{t}^{i}\|_{2}+\|{x'}_{\hspace{-0.1cm}t+1}^{i}-{x'}_{\hspace{-0.1cm}t}^{i}\|_{2})+2\alpha_{t}\sqrt{d}D_{g},\label{Dp2}
\end{aligned}
\end{equation}
where in the derivation we have used Assumption~\ref{A1}-(ii), the relationship $\|x\|_{1}\leq \sqrt{d}\|x\|_{2}$ for any $x\in \mathbb{R}^{d}$, and the fact that $x_{t}^{i}$ is always constrained in a compact set $\mathcal{X}_{i}$, which implies $\|g_{i}(x_{t}^{i})\|_{2}\leq D_{g}$ for some $D_{g}>0$.

We proceed to characterize the term $\|x_{t+1}^{i}-x_{t}^{i}\|_{2}$ in~\eqref{Dp2}. By using the projection inequality, we have
\begin{equation}
\begin{aligned}
&\|x_{t+1}^{i}-x_{t}^{i}\|_{2}=\|\boldsymbol{P}_{\mathcal{X}_{i}}(x_{t+1}^{i})-\textstyle\boldsymbol{P}_{\mathcal{X}_{i}}(x_{t}^{i})\|_{2}\\
&\leq \lambda_{t}\|\nabla_{1}f_{i}(x_{t}^{i},\psi_{t}^{i})+{\textstyle\frac{1}{\gamma_{t,1}}}\nabla g_{i}(x_{t}^{i})(y_{t+1}^{i}-y_{t}^{i})\|_{2}.\label{Dp3}
\end{aligned}
\end{equation}
Using Lemma~\ref{ybound} and the relation $\sum_{p=0}^{t-1}\gamma_{p,1}\leq 1+\int_{1}^{\infty}\frac{\gamma_{1}}{x^{w_{1}}}=\frac{\gamma_{1}w_{1}}{w_{1}-1}$ for any $w_{1}>1$, we have $\|y_{t}^{i}\|\leq (1+\frac{\gamma_{1}w_{1}}{w_{1}-1})L_{f,2}$, which further implies $\|x_{t+1}^{i}-x_{t}^{i}\|_{2}\leq c_{0}\frac{\lambda_{t}}{\gamma_{t,1}}$ with $c_{0}\triangleq\gamma_{1}L_{f,1}+2L_{g}(1+\frac{\gamma_{1}w_{1}}{w_{1}-1})L_{f,2}$. Here, we have used Assumptions~\ref{A1} in the derivation. Furthermore, by incorporating $\|x_{t+1}^{i}-x_{t}^{i}\|_{2}\leq c_{0}\frac{\lambda_{t}}{\gamma_{t,1}}$ into~\eqref{Dp2}, we have that the sensitivity $\Delta_{t+1,\psi}$ satisfies
\begin{equation}
\Delta_{t+1,\psi}\leq(1-\gamma_{t,2}\hat{w})\Delta_{t,\psi}+\textstyle\frac{2c_{0}\sqrt{d}L_{g}\lambda_{t}}{\gamma_{t,1}}+2\sqrt{d}D_{g}\alpha_{t},\label{Dp5}
\end{equation}

Recalling the relationship $v>u-w_{1}$ given in the lemma statement, we can choose $\alpha_{0}\!\leq\! \frac{\lambda_{0}}{\gamma_{1}}$ such that $\alpha_{t}\!\leq\!\frac{\lambda_{t}}{\gamma_{t,1}}$ always holds for any $t\!>\!0$. In this case, \eqref{Dp5} can be rewritten as
\begin{equation}
\textstyle\Delta_{t+1,\psi}\leq(1-\gamma_{t,2}\hat{w})\Delta_{t,\psi}+\frac{(2c_{0}\sqrt{d}L_{g}+2\sqrt{d}D_{g})\lambda_{t}}{\gamma_{t,1}}.\label{Dp51}
\end{equation}
Combing~Lemma~\ref{AuxL1}-(i) with~\eqref{Dp51}, we obtain
\begin{equation}
\Delta_{t,\psi}\leq\textstyle\frac{c_{1}\lambda_{t}}{\gamma_{t,1}\gamma_{t,2}},\label{Dp6}
\end{equation}
with  $c_{1}=\frac{\hat{w}\gamma_{2}}{\hat{w}\gamma_{2}-(u-w_{1}-w_{2})}$, where the constant $c_{1}$ is obtained by omitting term $\Delta_{0,\psi}$ due to $\Delta_{0,\psi}=0$.

According to the ``Noise parameter settings",  we have that each element of noise vectors $\xi_{t}^{i}$ follows Laplace distribution $\text{Lap}(\nu_{t,\psi}^{i})$ with $2(\nu_{t,\psi}^{i})^{2}=(\sigma_{t,\xi}^{i})^2$. By defining $\nu_{t,\psi}=\min_{i\in[m]}\{\nu_{t,\psi}^{i}\}$ and $\hat{\varsigma}_{\xi}=\max_{i\in[m]}\{\varsigma_{\xi}^{i}\}$ and using~\eqref{Dp6}, we arrive at
\begin{equation}
\textstyle\sum_{t=1}^{T}\frac{\Delta_{t,\psi}}{\nu_{t,\psi}}\leq\sum_{t=1}^{T}\frac{\sqrt{2}c_{1}\lambda_{0}(\min_{i\in[m]}\{\sigma_{\xi}^{i}\}\gamma_{1}\gamma_{2})^{-1}}{(t+1)^{u-w_{1}-w_{2}-\hat{\varsigma}_{\xi}}}.\label{Dp7}
\end{equation}

We further analyze the sensitivity $\Delta_{t,y}$. For any agent $i\in[m]$, since its observations $\tilde{y}_{t}^{j}$ and $\tilde{y}_{t}^{\prime j}$ for any $j\neq i$ and $t\in\mathbb{N}^{+}$ are the same, we have $\boldsymbol{P}_{\Omega_{t}}(\tilde{y}_{t}^{j})=\boldsymbol{P}_{\Omega_{t}}(\tilde{y}_{t}^{\prime j})$, which combined with Line 4 in Algorithm~\ref{algorithm1} leads to
\begin{flalign}
&\textstyle y_{t+1}^{i}-y_{t+1}^{\prime i}\!=\!(1+w_{ii})(y_{t}^{i}-y_{t}^{\prime i})\!+\!\sum_{j\in\mathcal{N}_{i}}w_{ij}\boldsymbol{P}_{\Omega_{t}}(\tilde{y}_{t}^{j})\nonumber\\
&\quad\textstyle-\!\sum_{j\in\mathcal{N}_{i}}w_{ij}\boldsymbol{P}_{\Omega_{t}}(\tilde{y}_{t}^{\prime j})\!+\!\gamma_{t,1}(\nabla_{2}f_{i}(x_{t}^{i},\psi_{t}^{i})\!-\!\nabla_{2}f_{i}^{\prime}(x_{t}^{\prime i},\psi_{t}^{\prime i}))\nonumber\\
&=(1+w_{ii})(y_{t}^{i}-y_{t}^{\prime i})\nonumber\\
&\quad+\gamma_{t,1}(\nabla_{2}f_{i}(x_{t}^{i},\psi_{t}^{i})-\nabla_{2}f_{i}^{\prime}(x_{t}^{\prime i},\psi_{t}^{\prime i})).\label{proj}
\end{flalign}
Therefore, the sensitivity $\Delta_{t+1,y}$ satisfies
\begin{equation}
\Delta_{t+1,y}\leq (1-\hat{w})\Delta_{t,y}+2\sqrt{d}L_{f}\gamma_{t,1},\label{Dp9}
\end{equation}

Combining Lemma 11 in~\cite{zijiGT} with~\eqref{Dp9}, we arrive at
\begin{equation}
\Delta_{t,y}\leq c_{2}\gamma_{t,1},\label{Dp10}
\end{equation}
where the constant $c_{2}$ is given by  $c_{2}=(\frac{4w_{1}}{e\ln(\frac{2}{2-\hat{w}})})^{w_{1}}\frac{2}{\hat{w}}$.

The ``Noise parameter settings"  implies that each element of noise vectors $\zeta_{t}^{i}$ follows Laplace distribution $\text{Lap}(\nu_{t,y}^{i})$ with $2(\nu_{t,y}^{i})^{2}=(\sigma_{t,\zeta}^{i})^2$. By defining $\nu_{t,y}=\min_{i\in[m]}\{\nu_{t,y}^{i}\}$ and $\hat{\varsigma}_{\zeta}=\max_{i\in[m]}\{\varsigma_{\zeta}^{i}\}$ and using~\eqref{Dp10}, we arrive at
\begin{equation}
\textstyle\sum_{t=1}^{T}\frac{\Delta_{t,y}}{\nu_{t,y}}\leq\sum_{t=1}^{T}\frac{\sqrt{2}c_{2}\gamma_{1}(\min_{i\in[m]}\{\sigma_{\zeta}^{i}\})^{-1}}{(t+1)^{w_{1}-\hat{\varsigma}_{\zeta}}}.\label{Dp11}
\end{equation}

By using Lemma 2 in~\cite{Huang2015}, we have  that the execution $\mathcal{A}_{\boldsymbol{\vartheta}_{0}}(\mathcal{P})$ of Algorithm~\ref{algorithm1} for internal iteration sequences $\{y_{t}^{i}\}_{t=1}^{T}$ and $\{\psi_{t}^{i}\}_{t=1}^{T}$ is $\epsilon$-differentially private for any $i\in[m]$ with a cumulative privacy budget bounded by $\epsilon=\sum_{t=1}^{T}(\frac{\Delta_{t,\psi}}{\nu_{t,\psi}}+\frac{\Delta_{t,y}}{\nu_{t,y}})$. According to~\eqref{Dp7} and~\eqref{Dp11}, we have that the cumulative privacy budget is finite even when $T\rightarrow\infty$ since $u-w_{1}-w_{2}-\hat{\varsigma}_{\xi}>1$ and $w_{1}-\hat{\varsigma}_{\zeta}>1$ always hold.

We proceed to prove that the execution $\mathcal{A}_{\boldsymbol{\vartheta}_{0}}(\mathcal{P})$ of Algorithm~\ref{algorithm1} which outputs the decision variable $x_{T}$ is $\epsilon$-jointly differentially private. According to Algorithm~\ref{algorithm1} Line 5, there must exist some function $h_{1}$ such that the update of $x_{t}^{j}$ can be expressed as
\begin{equation}
x_{t+1}^{j}=h_{1}(x_{t}^{j},\psi_{t}^{j},y_{t+1}^{j},y_{t}^{j}, \mathcal{X}_{j},f_{j}(x_{t}^{j},\psi_{t}^{j}),g_{j}(x_{t}^{j})).
\end{equation}
By using induction from $1$ to $T-1$, we obtain
\begin{equation}
x_{T}^{j}=h_{2}(x_{1}^{j},\{y_{t}^{j}\}_{t=1}^{T}(\mathcal{P}),\{\psi_{t}^{j}\}_{t=1}^{T-1}(\mathcal{P}),\mathcal{P}_{j}),\label{26}
\end{equation}
for some function $h_{2}$. Here, since the dynamics of $y_{t}^{j}$ and $\psi_{t}^{j}$ indirectly rely on all neighboring agents' dynamics of $y_{t}^{l}$ and $\psi_{t}^{l}$ with $l\in\mathcal{N}_{j}$, we consider the worst scenario where the dynamics of $y_{t}^{j}$ and $\psi_{t}^{j}$ are affected by all agents' dynamics of $y_{t}^{l}$ and $\psi_{t}^{l}$ with $l\in[m]$, which depend on the global problem $\mathcal{P}$. Therefore, we use notions $\{y_{t}^{j}\}_{t=1}^{T}(\mathcal{P})$ and $\{\psi_{t}^{j}\}_{t=1}^{T-1}(\mathcal{P})$ to emphasize the dependence of $\{y_{t}^{j}\}_{t=1}^{T}$ and $\{\psi_{t}^{j}\}_{t=1}^{T-1}$ on $\mathcal{P}$. 

Since the two adjacent problems $\mathcal{P}$ and $\mathcal{P}'$ only differ in the $i$th element, we have $\mathcal{P}'_{j}=\mathcal{P}_{j}$ for all $j\neq i$, which implies
\begin{equation}
\begin{aligned}
{x'}_{\hspace{-0.1cm}T}^{j}&=h_{2}(x_{1}^{j},\{y_{t}^{j}\}_{t=1}^{T}(\mathcal{P}'),\{\psi_{t}^{j}\}_{t=1}^{T-1}(\mathcal{P}'),{\mathcal{P}'}_{\hspace{-0.1cm}j})\\
&=h_{2}(x_{1}^{j},\{y_{t}^{j}\}_{t=1}^{T}(\mathcal{P}'),\{\psi_{t}^{j}\}_{t=1}^{T-1}(\mathcal{P}'),{\mathcal{P}}_{j}).\label{T2a}
\end{aligned}
\end{equation}

Equations \eqref{26} and~\eqref{T2a} imply that for any $i\in[m]$, the outputs of $\mathcal{A}_{\boldsymbol{\vartheta}_{0}}(\mathcal{P})^{-i}$, i.e., $\text{col}(x_{T}^{1},\cdots,x_{T}^{i-1},x_{T}^{i+1},\cdots,x_{T}^{m})$, can be viewed as a post-processing result of the execution $\mathcal{A}_{\boldsymbol{\vartheta}_{0}}(\mathcal{P})$ for internal sequences $\{y_{t}^{j}\}_{t=1}^{T}(\mathcal{P})$ and $\{\psi_{t}^{j}\}_{t=1}^{T}(\mathcal{P})$, which is $\epsilon$-differentially private based on~\eqref{Dp7} and~\eqref{Dp11}. Therefore, based on the post-processing theorem in~\cite{Dwork2014}, we have that $\mathcal{A}_{\boldsymbol{\vartheta}_{0}}(\mathcal{P})^{-i}$ is $\epsilon$-differentially private, which corresponds to $\epsilon$-JDP of an execution $\mathcal{A}_{\boldsymbol{\vartheta}_{0}}(\mathcal{P})$ of Algorithm~\ref{algorithm1}.
\end{proof}
\subsection{Auxiliary lemmas for proving Theorem~\ref{T3}}
\begin{lemma}\label{Bpsidiff}
Under the conditions in Theorem~\ref{T2}, if  the rate of Laplace-noise variances satisfies $1>\varsigma_{\xi}>\max\{-\frac{w_{2}}{2},\frac{1}{2}-w_{2}\}$, we have the following inequality for Algorithm~\ref{algorithm1}:
\begin{equation}
\mathbb{E}[\|\psi_{t}-\boldsymbol{1}_{m}\otimes\phi(x_{t})\|^2]\leq\hat{\kappa}_{1},\label{Bpsidiffresult}
\end{equation}
where $\hat{\kappa}_{1}$ is given by $\hat{\kappa}_{1}\!=\!2c_{\hat{s}_1}+2m\gamma_{2}^{2}\sigma_{\xi}^{2}$ with $c_{\hat{s}_1}=\frac{c_{\hat{s}_2}}{|\delta_{2}|\gamma_{2}-\hat{s}_{1}}$,
$c_{\hat{s}_2}\!=\!\frac{8L_{g}^2(\gamma_{2}|\delta_{2}|+1)c_{\hat{x}}\lambda_{0}^2}{\gamma_{1}^2\gamma_{2}|\delta_{2}|}+4m\gamma_{2}^{2}\sigma_{\xi}^2\!+\!\frac{8mD_{g}^2(\gamma_{2}|\delta_{2}|+1)\alpha_{0}^2}{\gamma_{2}|\delta_{2}|}$, $c_{\hat{x}}=2mL_{f}^2\gamma_{1}^2+\frac{8mL_{g}^2L_{f,2}^2((\gamma_{1}+1)w_{1}-1)^2}{(w_{1}-1)^2}$, and $\hat{s}_{1}=\min\{2u-2w_{1}-2w_{2},2v-2w_{2},w_{2}+2\varsigma_{\xi}\}$.
\end{lemma}
\begin{proof}
By using the relation $\phi(x_{t})=\bar{g}(x_{t})$, we have
\begin{equation}
\|\psi_{t}-\boldsymbol{1}_{m}\otimes\phi(x_{t})\|^2\leq 2\|\hat{\psi}_{t}\|^2+2m\|\bar{\psi}_{t}-\bar{g}(x_{t})\|^2.\label{2Pg1}
\end{equation}

Substituting~\eqref{Hp4},~\eqref{Hp5}, and~\eqref{Hp6} into~\eqref{Hp3}, we have
\begin{equation}
\begin{aligned}
&\mathbb{E}[\|\hat{\psi}_{t+1}\|^2]\leq(1-\gamma_{t,2}|\delta_{2}|)\mathbb{E}[\|\hat{\psi}_{t}\|^2]\\
&\quad+\textstyle\frac{8L_{g}^2(\gamma_{2}|\delta_{2}|+1)}{|\delta_{2}|}\frac{1}{\gamma_{t,2}}\mathbb{E}[\|x_{t+1}-x_{t}\|^2]\\
&\quad+\textstyle\frac{8mD_{g}^2(\gamma_{2}|\delta_{2}|+1)}{|\delta_{2}|}\frac{\alpha_{t}^2}{\gamma_{t,2}}+4m\gamma_{t,2}^{2}\sigma_{t,\xi}^2.\label{2TS1}
\end{aligned}
\end{equation}

We characterize the term $\mathbb{E}[\|x_{t+1}-x_{t}\|^2]$ in~\eqref{2TS1}. 

Lemma~\ref{ybound} implies $\|y_{t}^{i}\|\leq (1+\sum_{p=0}^{t-1}\gamma_{p,1})L_{f,2}$. By using the relation $\sum_{p=0}^{t-1}\gamma_{p,1}\leq \gamma_{1}+\int_{1}^{\infty}\frac{\gamma_{1}}{x^{w_{1}}}dx=\frac{\gamma_{1}w_{1}}{w_{1}-1}$, we have 
\begin{equation}
\textstyle\|y_{t}^{i}\|\leq \frac{(\gamma_{1}+1)w_{1}-1}{w_{1}-1}L_{f,2}.\label{xxxx}
\end{equation}
Combining~\eqref{xxxx} with~\eqref{X2}, we obtain
\begin{equation}
\textstyle\mathbb{E}[\|x_{t+1}-x_{t}\|^2]\leq \frac{c_{\hat{x}}\lambda_{t}^2}{\gamma_{t,1}^2},\label{Bxdiffresult}
\end{equation}
with $c_{\hat{x}}=2mL_{f}^2\gamma_{1}^2+\frac{8mL_{g}^2L_{f,2}^2((\gamma_{1}+1)w_{1}-1)^2}{(w_{1}-1)^2}$.

Substituting~\eqref{Bxdiffresult} into~\eqref{2TS1}, we arrive at
\begin{flalign}
&\textstyle\mathbb{E}[\|\hat{\psi}_{t+1}\|^2]\leq\left(1-\frac{\gamma_{2}|\delta_{2}|}{(t+1)^{w_{2}}}\right)\mathbb{E}[\|\hat{\psi}_{t}\|^2]+\frac{c_{\hat{s}_1}}{(t+1)^{\hat{s}_{1}+w_{2}}},\label{2TS2}
\end{flalign}
where $\hat{s}_{1}$ and $c_{\hat{s}_1}$ are given in the lemma statement.

Since the relations $\hat{s}_{1}>w_{2}$ and $0<w_{2}<1$ hold based on the lemma statement, we use Lemma~\ref{AuxL1}-(i) and~\eqref{2TS2} to obtain
\begin{equation}
\textstyle\mathbb{E}[\|\hat{\psi}_{t}\|^2]\leq c_{\hat{s}_1},\label{Bhatpsiresult}
\end{equation}
where $c_{\hat{s}_1}$ is given by $c_{\hat{s}_1}=\frac{c_{\hat{s}_2}}{|\delta_{2}|\gamma_{2}-\hat{s}_{1}}$.

Based on~\eqref{Pg3}, we have
\begin{equation}
\begin{aligned}
&\mathbb{E}[\|\bar{\psi}_{t+1}-\bar{g}(x_{t+1})\|^2]\\
&\leq(1-\alpha_{t})\mathbb{E}[\|\bar{\psi}_{t}-\bar{g}(x_{t})\|^2]+\gamma_{t,2}^{2}\sigma_{t,\xi}^{2}.\label{2Pg3}
\end{aligned}
\end{equation}
Combining Lemma~\ref{AuxL1}-(ii) with~\eqref{2Pg3}, we have
\begin{equation}
\mathbb{E}[\|\bar{\psi}_{t}-\bar{g}(x_{t})\|^2]\leq \gamma_{2}^{2}\sigma_{\xi}^{2}.\label{2Pg4}
\end{equation}

Substituting~\eqref{Bhatpsiresult} and~\eqref{2Pg4} into~\eqref{2Pg1}, we arrive at~\eqref{Bpsidiffresult}.
\end{proof}
\begin{lemma}\label{Bhatydiff}
Under the conditions in Lemma~\ref{Bpsidiff}, we have the following inequality for Algorithm~\ref{algorithm1}:
\begin{equation}
\textstyle\mathbb{E}[\|y_{t+1}-y_{t}-\gamma_{t,1}(\boldsymbol{1}_{m}\otimes\nabla_{2}\bar{f}(x_{t},\psi_{t}))\|^2]\leq\frac{\hat{\kappa}_{2}}{(t+1)^{2\varsigma_{\zeta}}},\label{Bhatdiffresult}
\end{equation}
with $\hat{\kappa}_{2}=(\frac{8\varsigma_{\zeta}}{e\ln(\frac{2}{2-|\delta_{2}|})})^{2\varsigma_{\zeta}}\frac{18}{|\delta_{2}|}+3m\sigma_{\zeta}^{2}$.
\end{lemma}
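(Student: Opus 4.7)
The plan is to first establish the pointwise bound $\mathbb{E}[\|\hat{y}_{t}\|^{2}]\leq c_{\hat{s}_{4}}(t+1)^{-2\varsigma_{\zeta}}$ on the consensus residual, and then transfer it to the quantity on the left-hand side of~\eqref{Bhatdiffresult} using the exact identity
\begin{equation}
y_{t+1}-y_{t}-\gamma_{t,1}(\boldsymbol{1}_{m}\otimes\nabla_{2}\bar{f}(x_{t},\psi_{t}))=(\hat{y}_{t+1}-\hat{y}_{t})+\boldsymbol{1}_{m}\otimes\bar{s}_{t,w},\nonumber
\end{equation}
which follows directly from~\eqref{X4} together with the decomposition $y_{t}=\boldsymbol{1}_{m}\otimes\bar{y}_{t}+\hat{y}_{t}$.

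To bound $\mathbb{E}[\|\hat{y}_{t}\|^{2}]$ at the advertised rate, I would start from the compact form of the $\hat{y}$ dynamics already written out in~\eqref{Hy1}, namely
\begin{equation}
\hat{y}_{t+1}=(I_{md}+W\otimes I_{d})\hat{y}_{t}+(s_{t,w}-\boldsymbol{1}_{m}\otimes\bar{s}_{t,w})+\gamma_{t,1}(\nabla_{2}f(x_{t},\psi_{t})-\boldsymbol{1}_{m}\otimes\nabla_{2}\bar{f}(x_{t},\psi_{t})).\nonumber
\end{equation}
Because $\hat{y}_{t}$ lies in the subspace orthogonal to $\mathrm{span}(\boldsymbol{1}_{m}\otimes I_{d})$, Assumption~\ref{A2} yields the contraction $\|(I_{md}+W\otimes I_{d})\hat{y}_{t}\|\leq(1-|\delta_{2}|)\|\hat{y}_{t}\|$. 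Combining this with Young's inequality (with parameter $|\delta_{2}|$) applied to the consensus and gradient-heterogeneity parts, with Assumption~\ref{A1}-(i) to bound $\|\nabla_{2}f-\boldsymbol{1}_{m}\otimes\nabla_{2}\bar{f}\|^{2}\leq 4mL_{f}^{2}$, with Assumption~\ref{A3} together with the non-expansiveness of $\boldsymbol{P}_{\Omega_{t}}$ at $y_{t}^{i}\in\Omega_{t}$ (guaranteed by Lemma~\ref{ybound}) to estimate $\mathbb{E}[\|s_{t,w}-\boldsymbol{1}_{m}\otimes\bar{s}_{t,w}\|^{2}]\leq 4m\sigma_{t,\zeta}^{2}$, and with the observation that $\gamma_{t,1}^{2}\leq\gamma_{1}^{2}(t+1)^{-2\varsigma_{\zeta}}$ (which holds because $w_{1}>1+\hat{\varsigma}_{\zeta}>\varsigma_{\zeta}$), I expect to arrive at the one-step recursion
\begin{equation}
\mathbb{E}[\|\hat{y}_{t+1}\|^{2}]\leq(1-|\delta_{2}|)\mathbb{E}[\|\hat{y}_{t}\|^{2}]+\frac{c_{\hat{y}}}{(t+1)^{2\varsigma_{\zeta}}}.\nonumber
\end{equation}

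With this recursion in hand, I would invoke the same time-varying Gronwall-type estimate (Lemma~11 of~\cite{zijiGT}) that was used to close the proof of Lemma~\ref{speedy}; it converts a constant contraction paired with polynomial forcing of rate $2\varsigma_{\zeta}$ into the pointwise bound $\mathbb{E}[\|\hat{y}_{t}\|^{2}]\leq c_{\hat{s}_{4}}(t+1)^{-2\varsigma_{\zeta}}$ with exactly the constant stated in the lemma. Finally, applying the identity above together with $\|a+b+c\|^{2}\leq 3(\|a\|^{2}+\|b\|^{2}+\|c\|^{2})$ to the triple $\hat{y}_{t+1}$, $-\hat{y}_{t}$, and $\boldsymbol{1}_{m}\otimes\bar{s}_{t,w}$, and using $\mathbb{E}[\|\bar{s}_{t,w}\|^{2}]\leq\sigma_{t,\zeta}^{2}\leq\sigma_{\zeta}^{2}(t+1)^{-2\varsigma_{\zeta}}$ from Assumption~\ref{A3}, I would collect the three contributions into $\hat{\kappa}_{2}=3c_{\hat{s}_{4}}+3c_{\hat{s}_{4}}+3m\sigma_{\zeta}^{2}=6c_{\hat{s}_{4}}+3m\sigma_{\zeta}^{2}$, matching the lemma statement.

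The main obstacle is the careful bookkeeping in the derivation of the one-step recursion, specifically arranging the Young's-inequality splittings so that the gradient-heterogeneity term absorbs the $1+1/|\delta_{2}|$ factor while the noise contribution enters $c_{\hat{y}}$ with coefficient just $4m\sigma_{\zeta}^{2}$; this requires exploiting the (essentially) conditional-mean-zero structure of $s_{t,w}-\boldsymbol{1}_{m}\otimes\bar{s}_{t,w}$ with respect to the $\sigma$-field of past iterates, rather than a crude triangle-inequality split that would inflate the noise coefficient. The remainder of the argument is structural and closely parallels the consensus-rate analysis already carried out for Lemma~\ref{speedy}.
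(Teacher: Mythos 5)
Your proposal is correct and follows essentially the same route as the paper's own proof: the same compact $\hat{y}$-dynamics from~\eqref{Hy1}, the same one-step contraction recursion $\mathbb{E}[\|\hat{y}_{t+1}\|^2]\leq(1-|\delta_{2}|)\mathbb{E}[\|\hat{y}_{t}\|^2]+c_{\hat{y}}(t+1)^{-2\varsigma_{\zeta}}$ closed via Lemma~11 of~\cite{zijiGT}, and the same final decomposition of $y_{t+1}-y_{t}-\gamma_{t,1}(\boldsymbol{1}_{m}\otimes\nabla_{2}\bar{f}(x_{t},\psi_{t}))$ into $\hat{y}_{t+1}$, $-\hat{y}_{t}$, and $\boldsymbol{1}_{m}\otimes\bar{s}_{t,w}$ using~\eqref{X4}, yielding $\hat{\kappa}_{2}=6c_{\hat{s}_4}+3m\sigma_{\zeta}^{2}$. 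The noise-bookkeeping subtlety you flag (keeping the projection-residual term free of the $1+1/|\delta_{2}|$ factor) is handled in the paper with the same level of informality, so your treatment matches theirs.
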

\begin{proof}
By using~\eqref{Hy1} and the definition of $\hat{y}_{t}$, we obtain
\begin{equation}
\begin{aligned}
\hat{y}_{t+1}&=(I_{md}+(W\otimes I_{d}))\hat{y}_{t}+\boldsymbol{1}_{m}\otimes\bar{s}_{t,w}+\tilde{\Xi}_{t}\\
&\gamma_{t,1}(\nabla_{2}f(x_{t},\psi_{t})\!-\!\boldsymbol{1}_{m}\otimes\nabla_{2}\bar{f}(x_{t},\psi_{t})),\nonumber
\end{aligned}
\end{equation}
where $\tilde{\Xi}_{t}$ is given by $\tilde{\Xi}_{t}=\text{col}(\tilde{\Xi}_{t}^{1},\cdots,\tilde{\Xi}_{t}^{m})$ with $\tilde{\Xi}_{t}^{i}={\textstyle\sum_{j\in{\mathcal{N}_{i}}}}w_{ij}(\boldsymbol{P}_{\Omega_{t}}(y_{t}^{j}+\zeta_{t}^{j})-y_{t}^{j})$.

By using the Young's inequality and Assumption~\ref{A1}, we have
\begin{flalign}
&\mathbb{E}[\|\hat{y}_{t+1}\|^2]\nonumber\\
&\leq 2m\mathbb{E}[\|\bar{s}_{t,w}\|^2]+2\mathbb{E}[\|\tilde{\Xi}_{t}\|^2]\nonumber\\
&\quad+(1+|\delta_{2}|)(1-|\delta_{2}|)^2\mathbb{E}[\|\hat{y}_{t}\|^2]+\left({\textstyle1+\frac{1}{|\delta_{2}|}}\right)4mL_{f}^2\!\gamma_{t,1}^2.\nonumber
\end{flalign}
Following an argument similar to the derivation of~\eqref{Hy5} and using the relation $(1-|\delta_{2}|)^2(1+|\delta_{2}|)<(1-|\delta_{2}|)$, we have
\begin{equation}
\mathbb{E}[\|\hat{y}_{t+1}\|^2]\leq (1-|\delta_{2}|)\mathbb{E}[\|\hat{y}_{t}\|^2]\textstyle+\frac{\left(1+\frac{1}{|\delta_{2}|}\right)4mL_{f}^2\gamma_{1}^2+4m\sigma_{\zeta}^2}{(t+1)^{\min\{2w_{1},2\varsigma_{\zeta}\}}}.\label{BHy1}
\end{equation}

Combining Lemma 11 in~\cite{zijiGT} with~\eqref{BHy1}, we arrive at
\begin{equation}
\textstyle\mathbb{E}[\|\hat{y}_{t}\|^2]\leq \frac{c_{\hat{s}_3}}{(t+1)^{2\varsigma_{\zeta}}},\label{Bhatyresult}
\end{equation}
where in the derivation we have used that the relation $w_{1}>\varsigma_{\zeta}$ leads to $\min\{2w_{1},2\varsigma_{\zeta}\}=2\varsigma_{\zeta}$. Moreover, the constants $c_{\hat{s}_3}$ is given by $c_{\hat{s}_3}=(\frac{8\varsigma_{\zeta}}{e\ln(\frac{2}{2-|\delta_{2}|})})^{2\varsigma_{\zeta}}\frac{2}{|\delta_{2}|}$ based on $\|\hat{y}_{0}\|=0$.

By using the relation $a-b-c=a-\bar{a}-(b-\bar{b})+\bar{a}-\bar{b}-c$ for any $a, b, c\in\mathbb{R}^{md}$, we have
\begin{equation}
\begin{aligned}
&\mathbb{E}[\|y_{t+1}-y_{t}-\gamma_{t,1}(\boldsymbol{1}_{m}\otimes\nabla_{2}\bar{f}(x_{t},\psi_{t}))\|^2]\\
&\leq 3\mathbb{E}[\|\hat{y}_{t+1}\|^2]+3\mathbb{E}[\|\hat{y}_{t}\|^2]\\
&\quad+3m\mathbb{E}[\|\bar{y}_{t+1}-\bar{y}_{t}-\gamma_{t,1}\nabla_{2}\bar{f}(x_{t},\psi_{t})\|].\label{BHy2}
\end{aligned}
\end{equation}

Substituting~\eqref{Bhatyresult} into~\eqref{BHy2} and using $\mathbb{E}[\|\bar{y}_{t+1}-\bar{y}_{t}-\gamma_{t,1}\nabla_{2}\bar{f}(x_{t},\psi_{t})\|^2]=\mathbb{E}[\|\bar{s}_{t,w}\|^2]$, we arrive at~\eqref{Bhatdiffresult}.
\end{proof}
\begin{lemma}\label{Bhyperg}
Under the conditions in Theorem~\ref{T2}, if  the rate of Laplace-noise variances satisfies $1>\varsigma_{\xi}>\max\{-\frac{w_{2}}{2},\frac{1}{2}-w_{2}\}$, we have the following inequalities for Algorithm~\ref{algorithm1}:
\begin{align}
&\textstyle\mathbb{E}[\|\nabla\tilde{F}(x_{t})-\nabla F(x_{t})\|^2]\leq \frac{\hat{C}_{1}}{(t+1)^{2\varsigma_{\zeta}-2w_{1}}},\label{Bhyperg1}\\
&\textstyle\mathbb{E}[\|\nabla \tilde{F}(x_{t})\|^2]\leq \frac{\hat{C}_{2}}{(t+1)^{2\varsigma_{\zeta}-2w_{1}}},\label{Bhyperg2}
\end{align}
where $\hat{C}_{1}=2\bar{L}_{f}^2(1+2L_{g}^2)\hat{\kappa}_{1}+4L_{g}^2\hat{\kappa}_{2}\gamma_{1}^{-2}$ and $\hat{C}_{2}=4\bar{L}_{f}^2\hat{\kappa}_{1}+4L_{g}^{2}\hat{\kappa}_{2}\gamma_{1}^{-2}+4mL_{f}^2(1+L_{g}^2)$ with the constants $\hat{\kappa}_{1}$ and $\hat{\kappa}_{2}$ given in~\eqref{Bpsidiffresult} and~\eqref{Bhatdiffresult}, respectively.
\end{lemma}
\begin{proof}
By substituting the first inequality in~\eqref{Hg7} into~\eqref{Hg2}, we derive
\begin{equation}
\begin{aligned}
&\mathbb{E}[\|\nabla\tilde{F}(x_{t})-\nabla F(x_{t})\|^2]\\
&\leq 2\bar{L}_{f}^2(1+2L_{g}^2)\mathbb{E}\left[\|\psi_{t}-\boldsymbol{1}_{m}\otimes\phi(x_{t})\|^2\right]\\
&\quad+ {\textstyle\frac{4L_{g}^2}{\gamma_{t,1}^2}}\mathbb{E}\left[\|y_{t+1}\!-\!y_{t}\!-\!\gamma_{t,1}(\boldsymbol{1}_{m}\otimes \nabla_{2}\bar{f}(x_{t},\psi_{t}))\|^2\right].\label{BH1}
\end{aligned}
\end{equation}

Substituting~\eqref{Bpsidiffresult} and~\eqref{Bhatdiffresult} into~\eqref{BH1} and using the relation $\varsigma_{\zeta}<w_{1}$, we arrive at~\eqref{Bhyperg1}.

Following an argument similar to the derivation of~\eqref{Tf2}, we have
\begin{flalign}
&\mathbb{E}[\|\nabla \tilde{F}(x_{t})\|^2]\leq 4\bar{L}_{f}^2\mathbb{E}\left[\|\psi_{t}-\boldsymbol{1}_{m}\otimes\phi(x_{t})\|^2\right]+4mL_{f}^2\nonumber\\
&\quad+{\textstyle\frac{4L_{g}^2}{\gamma_{t,1}^2}}\mathbb{E}\left[\|y_{t+1}-y_{t}-\gamma_{t,1}(\boldsymbol{1}_{m}\otimes\nabla_{2}\bar{f}(x_{t},\psi_{t}))\|^2\right]\nonumber\\
&\quad+4L_{g}^2\mathbb{E}\left[\|(\boldsymbol{1}_{m}\otimes\nabla_{2}\bar{f}(x_{t},\psi_{t}))\|^2\right].\label{BH2}
\end{flalign}
Substituting~\eqref{Bpsidiffresult} and~\eqref{Bhatdiffresult} into~\eqref{BH2}, we arrive at~\eqref{Bhyperg2}.
\end{proof}

\subsection{Proof of Theorem~\ref{T3}}
The truthfulness result follows directly from Theorem~\ref{T2}.

(i) \textit{Convergence analysis when $F(x)$ is strongly convex.}

Substituting~\eqref{TS5} and~\eqref{Bhyperg1} into~\eqref{TS1}, we obtain
\begin{equation}
\begin{aligned}
\mathbb{E}[\|x_{t+1}-x^*\|^2]&\leq \big(1-\textstyle\frac{\lambda_{t}\mu}{4}\big)\mathbb{E}[\|x_{t}-x^{*}\|^2]\\
&\quad\textstyle+\big(\frac{\lambda_{0}\mu+2}{\mu}\big)\frac{\lambda_{0}\hat{C}_{1}}{(t+1)^{u+2\varsigma_{\zeta}-2w_{1}}}.\label{3T3}
\end{aligned}
\end{equation}

Recalling the relationships $u+2\varsigma_{\zeta}-2w_{1}>1$ and $u>1$ given in the theorem statement and combining Lemma~\ref{AuxL1}-(ii) with~\eqref{3T3}, we can obtain $\mathbb{E}[\|x_{t}-x^*\|^2]\leq \hat{C}_{3}$,
where $\hat{C}_{3}=\mathbb{E}[\|x_{0}-x^*\|^2]+\mu^{-1}\lambda_{0}\hat{C}_{1}(\lambda_{0}\mu+2)$ with $\hat{C}_{1}$ given in~\eqref{Bhyperg1}. 

According to the definition of $\hat{C}_{1}$ in Lemma~\ref{Bhyperg}, we have 
\begin{equation}
\textstyle\mathbb{E}[\|x_{T}-x^*\|^2]\leq\mathcal{O}\left(\frac{L_{g}^{4}L_{f}^2\bar{L}_{f}^2}{\mu|\delta_{2}|}+\sigma_{\zeta}^2+\sigma_{\xi}^2\right).\nonumber
\end{equation}
Given that $\eta$ has the same order as $\epsilon$ from Theorem~\ref{T2} and $\epsilon$ is on the order of $\mathcal{O}(\frac{1}{\sigma_{\zeta}}+\frac{1}{\sigma_{\xi}})$ from Lemma~\ref{JDP}, we arrive at~\eqref{3Tresult1}.

(ii) \textit{Convergence analysis when $F(x)$ is general convex.}

Following an argument similar to the derivation of~\eqref{TC3}, we have
\begin{equation}
\begin{aligned}
&\mathbb{E}[\|x_{t+1}-x^*\|^2]\leq -2\lambda_{t}\mathbb{E}[F(x_{t})-F(x^{*})]\\
&\quad+(1+\varepsilon_{t})\mathbb{E}[\|x_{t}-x^*\|^2]+\Theta_{t},\label{3TB1}
\end{aligned}
\end{equation}
where the term $\Theta_{t}$ is given by 
\begin{equation}
\Theta_{t}=\lambda_{t}^2\mathbb{E}[\|\nabla \tilde{F}(x_{t})\|^2]+\textstyle\frac{\lambda_{t}^2}{\varepsilon_{t}}\mathbb{E}[\|\nabla\tilde{F}(x_{t})-\nabla F(x_{t})\|^2],\label{3TB2}
\end{equation}
where $\varepsilon_{t}$ is given by $\varepsilon_{t}=\frac{1}{(t+1)^{r}}$ with $1<r<2u$.

Substituting~\eqref{Bhyperg1} and~\eqref{Bhyperg2} into~\eqref{3TB2} and using~\eqref{TC8}, we obtain
\begin{flalign}
&\textstyle\sum_{t=0}^{T}\Theta_{t}\leq \sum_{t=0}^{T}\frac{\lambda_{0}^2\hat{C}_{2}}{(t+1)^{2u+2\varsigma_{\zeta}-2w_{1}}}+\sum_{t=0}^{T}\frac{\lambda_{0}^2\hat{C}_{1}}{(t+1)^{2u+2\varsigma_{\zeta}-2w_{1}-r}}\nonumber\\
&\textstyle\leq \frac{\lambda_{0}^2\hat{C}_{2}(2u+2\varsigma_{\zeta}-2w_{1})}{2u+2\varsigma_{\zeta}-2w_{1}-1}+\frac{\lambda_{0}^2\hat{C}_{1}(2u+2\varsigma_{\zeta}-2w_{1}-r)}{2u+2\varsigma_{\zeta}-2w_{1}-r-1}\triangleq \hat{C}_{4},\label{3TB5}
\end{flalign}
with $\hat{C}_{1}$ and $\hat{C}_{2}$ 
given in~\eqref{Bhyperg1} and~\eqref{Bhyperg2}, respectively.

Using~\eqref{3TB5} and following an argument similar to the derivation of~\eqref{TC10}, we have
\begin{equation}
\mathbb{E}[\|x_{t+1}-x^*\|^2]\leq e^{\frac{\varepsilon_{0}(r-1)}{r-1}}\left(\mathbb{E}[\|x_{0}-x^*\|^2]+\hat{C}_{4}\right).\label{3TB8}
\end{equation}

We proceed to sum up both sides of~\eqref{3TB1} from $0$ to $T$ ($T$ can be any positive integer):
\begin{flalign}
&\textstyle\sum_{t=0}^{T}2\lambda_{t}\mathbb{E}[F(x_{t})-F(x^{*})]\leq -\sum_{t=0}^{T}\mathbb{E}[\|x_{t+1}-x^*\|^2]\nonumber\\
&\textstyle\quad+\sum_{t=0}^{T}(1+\varepsilon_{t})\mathbb{E}[\|x_{t}-x^*\|^2]+\sum_{t=0}^{T}\Theta_{t}.\label{3TB9}
\end{flalign}

Following an argument similar to the derivation of~\eqref{TC12} and using~\eqref{3TB8}, we have that the first and second terms on the right hand side of~\eqref{3TB9} satisfies
\begin{flalign}
&\textstyle\sum_{t=0}^{T}(1+\varepsilon_{t})\mathbb{E}[\|x_{t}-x^*\|^2]-\sum_{t=0}^{T}\mathbb{E}[\|x_{t+1}-x^*\|^2]\nonumber\\
&\leq\! \textstyle\left(\frac{re^{\frac{\varepsilon_{0}(r-1)}{r-1}}}{r-1}\!+1+\varepsilon_{0}\right)\mathbb{E}[\|x_{0}-x^*\|^2]\!+\!\frac{\hat{C}_{4}r}{r-1}\triangleq \hat{C}_{5}.\label{3TB10}
\end{flalign}
with $\hat{C}_{4}$ given in~\eqref{3TB5}.

Substituting~\eqref{3TB5} and~\eqref{3TB10} into~\eqref{3TB9}, we have
\begin{equation}
\textstyle\sum_{t=0}^{T}2\lambda_{t}\mathbb{E}[F(x_{t})-F(x^{*})]\leq \hat{C}_{4}+\hat{C}_{5}.\label{3TB111}
\end{equation}
Dividing both sides of~\eqref{3TB111} by $2\sum_{t=0}^{T}\lambda_{t}$, we arrive at
\begin{equation}
\textstyle\frac{{\textstyle\sum_{t=0}^{T}}\lambda_{t}\mathbb{E}[F(x_{t})-F(x^{*})]}{\sum_{t=0}^{T}\lambda_{t}}\leq \frac{(\hat{C}_{4}+\hat{C}_{5})(u-1)}{2\lambda_{0}(1-(T+1)^{1-u})},\label{3TB15}
\end{equation}
where we have used the relations $\sum_{t=0}^{T}\lambda_{t}\geq \int_{0}^{T}\frac{\lambda_{0}}{(x+1)^{u}}dx\geq \frac{\lambda_{0}(1-(T+1)^{1-u})}{u-1}$ and $\lim_{T\rightarrow \infty}(T+1)^{1-u}=0$ for any $u>1$.

According to the definitions of $\hat{C}_{4}$ and $\hat{C}_{5}$, we have $\hat{C}_{4}+\hat{C}_{5}\approx\mathcal{O}(\frac{L_{g}^{4}L_{f}^2\bar{L}_{f}^2}{|\delta_{2}|}+\sigma_{\zeta}^2+\sigma_{\xi}^2)$. Given that $\eta$ and $\epsilon$  have the same order from Theorem~\ref{T2} and $\epsilon$ is on the order of $\mathcal{O}(\frac{1}{\sigma_{\zeta}}+\frac{1}{\sigma_{\xi}})$ from Lemma~\ref{JDP}, we arrive at~\eqref{3Tresult2}.

(iii) \textit{Convergence analysis when $F(x)$ is nonconvex.}

Following an argument similar to the derivation of~\eqref{TN5}, we have
\begin{equation}
\textstyle\sum_{t=0}^{T}\frac{\lambda_{t}}{2}\mathbb{E}[\|\nabla F(x_{t})\|^2]\leq \mathbb{E}[F(x_{0})-F(x^*)]+{\textstyle\sum_{t=0}^{T}}\Theta_{t},\label{3TB21}
\end{equation}
where the term $\Theta_{t}$ is given by
\begin{equation}
\Theta_{t}\textstyle=\frac{\lambda_{t}}{2}\mathbb{E}[\|\nabla \tilde{F}(x_{t})-\nabla F(x_{t})\|^2]+\textstyle\frac{L_{F}\lambda_{t}^2}{2}\mathbb{E}[\|\nabla \tilde{F}(x_{t})\|^2].\label{3TB23}
\end{equation}

Substituting~\eqref{Bhyperg1} and~\eqref{Bhyperg2} into~\eqref{3TB23} and using~\eqref{TC8}, we have
\begin{equation}
\textstyle\sum_{t=0}^{T}\Theta_{t}\leq \frac{\lambda_{0}\hat{C}_{1}(u+2\varsigma_{\zeta}-2w_{1})}{2(u+2\varsigma_{\zeta}-2w_{1}-1)}+\frac{L_{F}\lambda_{0}^2\hat{C}_{2}(2u+2\varsigma_{\zeta}-2w_{1})}{2(2u+2\varsigma_{\zeta}-2w_{1}-1)}\triangleq \hat{C}_{6}. \label{3TB24}
\end{equation}
Incorporating~\eqref{3TB24} into~\eqref{3TB21} and defining $\hat{C}_{7}\triangleq\mathbb{E}[F(x_{0})-F(x^*)]$, we obtain
\begin{equation}
\textstyle\sum_{t=0}^{T}\frac{\lambda_{t}}{2}\mathbb{E}[\|\nabla F(x_{t})\|^2]\leq \hat{C}_{6}+\hat{C}_{7}.\nonumber
\end{equation}

By using an argument similar to the derivation of~\eqref{3TB15}, we arrive at
\begin{equation}
\textstyle \frac{\sum_{t=0}^{T}\lambda_{t}\mathbb{E}[\|\nabla F(x_{t})\|^2]}{\sum_{t=0}^{T}\lambda_{t}}\leq \frac{2(\hat{C}_{6}+\hat{C}_{7})(u-1)}{\lambda_{0}(1-(T+1)^{1-u})}.\nonumber
\end{equation}

By using the definitions of $\hat{C}_{6}$ and $\hat{C}_{7}$, we have $\hat{C}_{6}+\hat{C}_{7}\leq\mathcal{O}\left(\frac{L_{g}^{4}L_{f}^3\bar{L}_{f}^3}{|\delta_{2}|}+\sigma_{\zeta}^2+\sigma_{\xi}^2\right)$. Given that $\eta$ has the same order as $\epsilon$ from Theorem~\ref{T2} and $\epsilon$ is on the order of $\mathcal{O}(\frac{1}{\sigma_{\zeta}}+\frac{1}{\sigma_{\xi}})$ from Lemma~\ref{JDP}, we arrive at~\eqref{3Tresult3}.

\bibliographystyle{ieeetr}  
\bibliography{nonconvexquantization}
\end{document}